\documentclass[12pt]{article}

\usepackage[english]{babel}
\usepackage[utf8]{inputenc}
\usepackage[T1]{fontenc}
\usepackage[margin=25mm]{geometry}
\usepackage{graphicx}
\usepackage{indentfirst}
\usepackage{chngcntr}
\usepackage{amssymb, amsmath, amsthm, esint, mathrsfs, thmtools, etoolbox, trivfloat, enumitem}
\usepackage{fancybox, shadow, color}
\usepackage{longtable}
\usepackage[bookmarks=true,hidelinks]{hyperref}
\usepackage[open,openlevel=0]{bookmark}

\newtheorem{teo}{Theorem}[section]
\newtheorem{cor}{Corollary}[section]
\newtheorem{lema}{Lemma}[section]
\newtheorem{pro}{Proposition}[section]

\newtheorem{obs}{Remark}[section]

\newtheorem*{proof1}{Proof of Lemma~\ref{21.}}
\newtheorem*{proof2}{Proof of Lemma~\ref{28.}}
\numberwithin{equation}{section}

\BeforeBeginEnvironment{defi}{\vspace{1em}}
\AfterEndEnvironment{defi}{\vspace{0.05em}}
\BeforeBeginEnvironment{teo}{\vspace{1em}}
\AfterEndEnvironment{teo}{\vspace{0.05em}}
\BeforeBeginEnvironment{cor}{\vspace{1em}}
\AfterEndEnvironment{cor}{\vspace{0.05em}}
\BeforeBeginEnvironment{lema}{\vspace{1em}}
\AfterEndEnvironment{lema}{\vspace{0.05em}}
\BeforeBeginEnvironment{pro}{\vspace{1em}}
\AfterEndEnvironment{pro}{\vspace{0.05em}}
\BeforeBeginEnvironment{ex}{\vspace{1em}}
\AfterEndEnvironment{ex}{\vspace{0.05em}}
\BeforeBeginEnvironment{obs}{\vspace{1em}}
\AfterEndEnvironment{obs}{\vspace{0.05em}}

\definecolor{blue}{rgb}{0,0,1}
\definecolor{red}{rgb}{1,0,0}
\title{The Oracle Theorem for Matrix-Valued Jacobi Operators}
\author{Silas L. Carvalho, Douglas C. P. Freitas}
\date{}

\begin{document}
\maketitle

\begin{abstract}
This paper develops the matrix-valued analogue of the reflectionless and oracle framework for Jacobi operators. Starting from matrix-valued Weyl--Titchmarsh $m$-functions on the Siegel upper half-space, we study the distance-decreasing action of transfer matrices, matrix-valued harmonic measures and value distribution convergence. These ingredients are then used to establish the reflectionless property of the $\omega$-limit set and to prove an Oracle Theorem for matrix-valued Jacobi operators with absolutely continuous spectrum of full multiplicity. 
\end{abstract}

\tableofcontents
\newpage

\section{Introduction}\label{sec:introduction}

\subsection*{Context, motivation and objectives}

The spectral theory of discrete one-dimensional operators, particularly Schr\"odinger and Jacobi operators, has experienced profound advancements over the last decades [\ref{DamanikSurvey}, \ref{KillipSimon},  \ref{LastSimon}, \ref{remling}, \ref{Teschl}]. A central problem in this field consists in understanding the intricate relationship between the asymptotic, spatial behavior of a given potential and the nature of its associated spectrum.

From a mathematical standpoint, a celebrated milestone in this area is the work of C. Remling [\ref{remling}]. Remling established, in particular, a remarkable rigidity property for scalar half-line operators of the form $H^V=-\Delta+V$: if the operator has nonempty absolutely continuous spectrum, its right-limit potentials must be reflectionless on that spectrum. More specifically, if
\[
\omega(V)=\{W\in \mathcal{V}^C\mid \text{there exists a sequence } n_j\to\infty \text{ such that } d(S^{n_j}V,W)\to 0\}
\]
stands for the $\omega$-limit set of $V$, where $S$ denotes the shift map, i.e., $(S^kV)(n)=V(n+k)$, and $d$ stands for the metric
\[
d(V,W)=\sum_{n\in A\cap B}2^{-|n|}|V(n)-W(n)|,
\]
with $V,W \in\mathcal{V}^C$ (the space of bounded potentials $\Vert V\Vert_\infty:=\sup_{n\in\mathbb{Z}}|V(n)|\leqslant C$), then Remling showed that if $|\Sigma_{ac}|>0$, then 
\[
\omega(V)\subset\mathcal{R}(\Sigma_{ac}),
\]
where $\Sigma_{ac}$ is the essential support of the absolutely continuous part of the spectral measure of $H^V$ and
\[
\mathcal{R}(\Sigma_{ac})=\{W\in\bigcup_{C>0}\mathcal{V}^C\mid \text{$W$ is reflectionless on $\Sigma_{ac}$}\}, 
\]
that is, $m_+(t)=-\overline{m_-(t)}$ for a.e. $t\in \Sigma_{ac}$, where $m_\pm$ stand for the half-line Weyl-Titchmarsh $m$-functions of $H^W$.

A striking and highly counterintuitive consequence of this rigidity is the scalar Oracle Theorem. It states that the future values of a bounded potential can be deterministically predicted, up to an arbitrary degree of precision, based solely on a finite window of its past values, provided that the absolutely continuous spectrum is nonempty. More specifically, for each $\epsilon>0$ and each $C>0$, there exist $L\in\mathbb{N}$ and a smooth function
\[
\Delta:[-C,C]^{L+1}\to [-C,C],
\]
called \textit{an oracle}, such that the following holds: for each potential $V$ (half-line or not) such that $||V||_{\infty}\leqslant C$ and $\Sigma_{ac}(V)\supset A$ (where $A$ is a Borel set with positive Lebesgue measure), there exists $n_0 \in \mathbb{N}$ such that for each $n\geqslant n_0$,
\[
|V(n+1)-\Delta (V(n-L), V(n-L+1), \ldots , V(n))|<\epsilon.
\]
The primary motivation of this work is to extend this powerful rigidity theory from the scalar setting to the broader and structurally more complex class of matrix-valued Jacobi operators.

For the inner product in $\mathbb{C}^l$, we adopt the convention of linearity with respect to the first component,
\[
\langle\mathbf{u},\mathbf{v}\rangle_{\mathbb{C}^l}:=\sum_{k=1}^{l}u_k\overline{v}_k.
\]
The Hilbert space of square-summable $\mathbb{C}^l$-valued sequences is denoted by $\ell^2(\mathbb{Z},\mathbb{C}^l)$. A matrix-valued Jacobi operator is a symmetric operator on this space defined by the action
\begin{equation}\label{fabricio2.1}
    (J\mathbf{u})_n:=D_{n-1}\mathbf{u}_{n-1}+V_n\mathbf{u}_n+D_n\mathbf{u}_{n+1},
\end{equation}
where $(D_n)_{n\in\mathbb{Z}}$ and $(V_n)_{n\in\mathbb{Z}}$ are sequences of self-adjoint matrices in $M(l,\mathbb{R})$ such that, for each $n\in\mathbb{Z}$, $D_n$ is invertible. In a broader setting, one can consider $(V_n)$ and $(D_n)$ as sequences of Hermitian matrices (see [\ref{MJ}] for details), but here one needs to require more: each $D_n$ must be positive definite. Moreover, we require that the sequences $(V_n)$ and $(D_n)$ satisfy
\begin{equation}\label{CSP}
   0<\inf_{n\in\mathbb{Z}}s_l(D_n)\le \sup_{n\in\mathbb{Z}}s_1(D_n)<\infty\qquad\text{and}\qquad 0\le\sup_{n\in\mathbb{Z}}s_1(V_n)<\infty.
\end{equation}
Here $s_k(B)$ stands for the $k$-th singular value of the $l\times l$ real matrix $B$. The particular case where, for each $n\in\mathbb{Z}$, $D_n=I$, corresponds to the so-called matrix-valued Schr\"odinger operator.

Before addressing the rigidity phenomena, it is essential to contextualize how the spectral theory of these operators has been explored in the literature. As comprehensively survey by Marx and Jitomirskaya in [\ref{MJ}], the study of quasi-periodic Schr\"odinger and Jacobi operators has increasingly adopted a global perspective, heavily relying on the transfer matrix formalism and the behavior of Lyapunov exponents. In this matrix-valued setting, which naturally arises from physical properties like Aubry-Andre duality (see [\ref{MJ}] for details), the structural complexity needs a generalized approach to characterize the spectrum. Building upon this dynamical framework, recent advances by Oliveira and Carvalho [\ref{fabricioTese}, \ref{fabricioSilasA}, \ref{FabricioSilas1}] have successfully extended crucial spectral characterizations to ergodic matrix-valued Jacobi operators.



Our main goal is to construct a universal oracle for matrix-valued Jacobi operators. In order to achieve this, we prove that the fundamental connection between the absolutely continuous spectrum and reflectionless potentials is robust enough to survive in the matrix setting. Specifically, we show that the $\omega$-limit set of a bounded matrix potential consists entirely of reflectionless matrix potentials on its absolutely continuous spectrum of multiplicity $l$; see Theorem~\ref{27}. 

While preparing this manuscript, we became aware that a related Remling-type result had been obtained by Acharya; see~\cite{Acharya2}. Acharya's result is formulated in the setting of matrix-valued discrete Schrödinger operators. In the present paper, we prove the corresponding result for matrix-valued Jacobi operators, using somewhat different ideas. More precisely, we introduce matrix-valued analogues of the objects used by Remling in his proof, whereas Acharya's approach retains the scalar structures from Remling's original argument. We also establish an Oracle Theorem for matrix-valued Jacobi operators; see Theorem~\ref{98}. Finally, we discuss some consequences of this theorem, which are not addressed in~\cite{Acharya2}.




\subsection*{Addressed problems and results}


We define the space of potentials $\mathcal{V}^C$ by imposing strict bounds not only on the norms of the diagonal matrices $V_n$, but specifically on the singular values of the off-diagonal matrices $D_n$. By equipping this space with a weighted product metric $\rho_\infty$, defined for any two potential sequences $V = \{(D_n^V, V_n^V)\}_{n \in \mathbb{Z}}$ and $W = \{(D_n^W, V_n^W)\}_{n \in \mathbb{Z}}$ in $\mathcal{V}^C$ as
\[
\rho_\infty(V, W) = \sum_{n=-\infty}^{\infty} 2^{-|n|} \left( \rho(D_n^V, D_n^W) + \rho(V_n^V, V_n^W) \right),
\]
where $\rho$ represents the metric induced by the Frobenius norm, one guarantees that $\mathcal{V}^C$ is a compact metric space.

This topological compactness allows us to study the asymptotic behavior of potentials under the shift map $S$; recall that the $\omega$-limit set of a bounded potential $V$ is given by
\[
\omega(V) = \{W \in \mathcal{V}^C\mid \exists \ n_j \to \infty \text{ such that } \rho_\infty(S^{n_j}V, W) \to 0\}.
\]
This guarantees that $\omega(V)$ is non-empty and compact (see Proposition~\ref{34}). 

In order to properly investigate these operators and their asymptotic properties, we decompose the whole-line matrix-valued Jacobi operator $J$, defined by~\eqref{fabricio2.1}, into half-line restrictions, denoted by $J_+$ and $J_-$. This decoupling is paramount because the spectral analysis, and the reflectionless property itself, relies entirely on comparing the distinct asymptotic behaviors at $+\infty$ and $-\infty$. By analyzing the restrictions $J_{\pm}$ on $\ell^2(\mathbb{Z}_{\pm}, \mathbb{C}^l)$, one can construct the corresponding half-line Weyl-Titchmarsh $m$-functions, which act as the fundamental parameterizations of the resolvent operators for these independent restricted systems.

Consequently, the adaptation of the Titchmarsh-Weyl theory for matrix-valued Jacobi operators requires a deep investigation of the matrix-valued $m$-functions that arise from the Jacobi eigenvalue equation~\eqref{1.1}. In our generalized setting, the half-line $m$-functions associated with $J_\pm$, denoted as $M_\pm(n, z)$, evolve into matrices living in the Siegel upper half-space $\mathfrak{S}_l$, defined as the set of $l \times l$ symmetric matrices with a strictly positive definite imaginary part. In order to control the convergence of these functions, one considers $\mathfrak{S}_l$ endowed with the Finsler metric
\[
d_\infty(Z_1, Z_2) = \inf_{Z(t)} \int_0^1 \Vert Y(t)^{-1/2} \dot{Z}(t) Y(t)^{-1/2} \Vert\ dt,
\]
with $Z_{1}, Z_{2}\in\mathfrak{S}_l$, where the infimum is taken over all differentiable paths $Z(t)$ joining $Z_{1}$ to $Z_{2}$. By realizing the transfer matrices as complex symplectic matrices acting on $\mathfrak{S}_l$ via fractional linear transformations, we prove that their action is distance decreasing under this metric. Here, we follow the ideas presented in [\ref{McBride2}].

In order to extend the Breimesser-Pearson Value Distribution Theory, the primary tool used by Remling in order to obtain the Oracle Theorem, to the matrix setting, we introduce specific matrix-valued harmonic measures. For a matrix-valued Herglotz function $F$ and for a Borel set $S$, we define the functions $\omega_{F(z)}^{I}(S)$ and $\omega_{F(z)}^{R}(S)$ as customized probability densities:
\begin{equation}\label{117}
\omega_{F(z)}^{I}(S) = \frac{1}{\pi}\int_S \operatorname{Im}\Big(\lambda I-i((\operatorname{Im}F(z))^{1/2}+I)\Big)^{-1}\ d\lambda,
\end{equation}
\begin{equation}\label{107}
\omega_{F(z)}^{R}(S)= \frac{1}{\pi}\int_S \operatorname{Im}\left(\frac{1}{\Vert\operatorname{Im}F(z)+I\Vert}\operatorname{Re}F(z)-\lambda I+iI\right)^{-1}\ d\lambda.
\end{equation}
It is crucial to emphasize that the functions $\omega_{F(z)}^{I}$ and $\omega_{F(z)}^{R}$ were deliberately defined separately, and constructed exactly in this manner in order to guarantee that the resulting matrices are symmetric and positive definite. 

A critical distinction in our framework is the treatment of the absolutely continuous spectrum. In the matrix-valued case ($l>1$), the spectrum is stratified by multiplicity. Here, one has to consider $\Sigma_{ac,l}$, which denotes the essential support of the absolutely continuous spectrum of \textit{maximal} multiplicity $l$. This full-rank condition is analytically indispensable to establish the strict reflectionless property, given that the $m$-functions must belong to $\mathfrak{S}_l$.


Firstly, by generalizing the Breimesser-Pearson Theory, we establish in Theorem~\ref{25} the convergence of the harmonic measures associated with the half-line $m$-functions. More specifically, we prove that for each $C>0$, each $A,S\in\mathcal{B}(\mathbb{R})$ so that $A \subset \Sigma_{ac,l}$, $|A|>0$, and $S\subset(-C,C)$, the identity
\[
\lim_{n \to \infty} \left\Vert \int_A \omega_{M_-(n,t)}^\alpha(-S)\ dt - \int_A \omega_{M_+(n,t)}^\alpha(S)\ dt \right\Vert = 0
\]
holds. By building directly upon this convergence and the continuous dependence of the harmonic measures, Theorem~\ref{27} establishes the spectral rigidity of the system. Namely, we prove that the $\omega$-limit set of any bounded matrix-valued potential in $\mathcal{V}^C_+$ consists entirely of reflectionless matrix potentials on $\Sigma_{ac,l}$, the essential support of the absolutely continuous spectrum of multiplicity $l$ of $J_+^V$:
\[
\omega(V) \subset \mathcal{R}(\Sigma_{ac,l});
\]
recall that a potential $W$ is said to be reflectionless on $A$ if $M_+^W(t)=-\overline{M_-^W(t)}$ for a.e. $t\in A$.

Finally, this rigidity allows us to obtain the Oracle Theorem for matrix-valued Jacobi operators, (Theorem~\ref{98}),  which reads almost identical to the Oracle Theorem for scalar Jacobi operators: 
for each $\epsilon>0$, there exists a smooth function $\Delta$ such that for each potential $U\in \mathcal{V}^C_+$ so that $\Sigma_{ac,l}(U)\supset A$ (a Borel set of positive Lebesgue measure), there exists $n_0\in\mathbb{N}$ such that for each $n\ge n_0$,
\[
\rho \big( U(n+1), \Delta(U(n-L), \dots, U(n)) \big) < \epsilon.
\]
This result confirms that, as long as the absolutely continuous spectrum of maximal multiplicity has a positive Lebesgue measure, the complex, multichannel nature of matrix-valued quantum systems does not destroy their deterministic predictability.

The power of this predictability becomes even more evident when one considers the direct consequences originally established by Remling in the scalar setting [\ref{remling}]. Namely, if a potential takes only finitely many values and possesses some absolutely continuous spectrum, it must be eventually periodic. Furthermore, the structural rigidity of the $\omega$-limit set leads to the semicontinuity of the absolutely continuous spectrum, which proves that limit potentials retain at least as much absolutely continuous spectrum as the original potential. Such results have direct counterparts for the matrix-valued case, which we prove in Section~\ref{CTO}; see Theorem~\ref{110a} and Corollary~\ref{111a}. Other scalar consequences, such as Denisov-type asymptotics for the free interval and finite-gap convergence, require additional results and remain outside the scope of this paper.

\subsection*{Organization of the paper}

This paper is organized as follows. Section~\ref{101} introduces the half-line Weyl-Titchmarsh $m$-functions on the Siegel upper half-space $\mathfrak{S}_l$ for matrix-valued operators, proving the distance-decreasing property of transfer matrices under the Finsler metric $d_\infty$ via fractional linear transformations. We further investigate the symplectic structure of the transfer matrices and their role in the evolution of the eigenvalue equation along the lattice. Additionally, we characterize the $m$-functions as matrix-valued Herglotz functions, establishing their analytical properties and boundary behavior.

Section~\ref{108} presents convergence properties of the matrix-valued harmonic measures $\omega^I$ and $\omega^R$, and develops the matrix analogue of value distribution theory. More specifically, we prove that these measures converge in norm and in the value distribution sense as the spectral parameter approaches the real axis. These convergences provide the analytical background for the generalized Breimesser-Pearson Theory in the matrix-valued setting.

Section~\ref{102} presents the appropriate topological spaces for matrix-valued potentials and their product metrics. In Section~\ref{114}, we analyze the matrix-valued Weyl-Titchmarsh $m$-functions for the half-line restrictions $J_{\pm}$ and provide the necessary link between the restricted operators and the spectral measures. This section also houses the proof of the reflectionless property of the $\omega$-limit set, Theorems~\ref{25} and~\ref{27}, which shows that the  connection between the absolutely continuous spectrum of multiplicity $l$ and the reflectionless asymptotic behavior is still valid in the matrix-valued setting.

Section~\ref{104} states and proves the Oracle Theorem for matrix-valued Jacobi operators, Theorem~\ref{98}. In Section~\ref{115}, we study some properties of reflectionless potentials, which provide tools for the proof of  the Oracle Theorem, presented in the subsequent section. Finally, in Section~\ref{CTO}, we extend some of the consequences of the scalar Oracle Theorem to matrix-valued operators.

Appendix~\ref{92} contains the detailed proof of the continuous dependence of $\omega_{F(z)}^{I}$ and $\omega_{F(z)}^{R}$ with respect to $F(z)$, by considering the Finsler metric and the operator norm. This result is of fundamental importance for the stability of the spectral estimates and for the consistency of the main arguments.

\section[Half-line Weyl-Titchmarsh m-functions]{Half-line Weyl-Titchmarsh $m$-functions of matrix-valued Jacobi operators}\label{101}

    We show in this section that the half-line $m$-functions associated with the matrix-valued Jacobi operators assume values in the Siegel upper half-space. We also introduce a metric on the space of $m$-functions associated with the matrix-valued Jacobi operators. Then, we show that the action of transfer matrices on these $m$-functions is distance decreasing. The results here are all based on the results presented in~[\ref{McBride2}].

    \vspace{0.3cm}
    \subsection[Half-line Weyl-Titchmarsh m-functions]{Half-line Weyl-Titchmarsh $m$-functions}\label{SMM}
    
    In what follows, $J$ is the matrix-valued Jacobi operator defined by~\eqref{fabricio2.1} such that $((D_n,V_n))_{n\in\mathbb{Z}}$ satisfies~\eqref{CSP}.

    An $l\times 2l$ matrix-valued sequence $Y(n)$ is said to be a fundamental matrix solution to equation 
    \begin{equation}\label{1.1}
    D_n \mathbf{w}_{n+1}+D_{n-1}\mathbf{w}_{n-1}+V_n\mathbf{w}_n=z\mathbf{w}_n, \quad z\in\mathbb{C}.
    \end{equation}  if the columns of $Y(n)$ form a set of linearly independent solutions to equation (\ref{1.1}).

    Write $Y=(U,V)$, where $U$ and $V$ are $l\times l$ matrix-valued solutions to equation (\ref{1.1}). Our goal here is to discuss the half-line Weyl-Titchmarsh $m$-functions. 
    Let, for each $n\in\mathbb{Z}$,  $\mathbb{Z}_-^n=\{-\infty,\ldots,n\}$ and $\mathbb{Z}_+^n=\{n+1,n+2,\ldots\}$. Suppose that $U$ and $V$ satisfy the following initial conditions at $n$:
    \begin{align*}
    U(n) &= -D_n^{-1}, & V(n) &= 0, \\
    U(n+1) &= 0, & V(n+1) &= I.
    \end{align*}
    
    One defines the Weyl-Titchmarsh $m$-functions on $\mathbb{Z}_{-}^n$ and $\mathbb{Z}_{+}^n$ by requiring that the sequences 
    \begin{equation} \label{3.1}
    F_{\pm}(n)=U(n)\pm M_{\pm}(n)V(n)
    \end{equation}
 satisfy $F_{\pm}\in \ell^{2}(\mathbb{Z}_\pm,\mathbb{C}^{l\times l})$. 
    Note that the columns of $F_{\pm}(n)$ form a linearly independent set of solutions to equation (\ref{1.1}), and therefore for each $n\in\mathbb{Z}$, $F_{\pm}(n)$ are invertible. For each $z\in\mathbb{C}_{+}$, these half-line Weyl-Titchmarsh matrix-valued $m$-functions are uniquely determined (since $J_\pm$ are in the limit-point case at $\pm\infty$; see Proposition~2.4 in~\cite{fabricioSilasA}) and satisfy
    \begin{equation} \label{3.2}
    M_{+}(n,z)=-F_{+}(n+1,z)F_{+}(n,z)^{-1}D_n^{-1}; \quad M_{-}(n,z)=F_{-}(n+1,z)F_{-}(n,z)^{-1}D_n^{-1}.
    \end{equation}

    Firstly, we show that $M_{\pm}(n)$ can be expressed in terms of a resolvent operator. 

    Let $J_{\pm}$ be the restriction of $J$ to the space $\ell^{2}(\mathbb{Z}^n_{\pm},\mathbb{C}^{l})$, that is,
      \begin{equation*}
    (J_+\mathbf{u})_k=\begin{cases}D_{k-1}\mathbf{u}_{k-1}+V_k\mathbf{u}_k+D_k\mathbf{u}_{k+1} & \text{if } k>n+1\\ V_{k}\mathbf{u}_{k}+D_k\mathbf{u}_{k+1} & \text{if } k=n+1\end{cases},
      \end{equation*}
      \begin{equation*}
    (J_-\mathbf{u})_k=\begin{cases}D_{k-1}\mathbf{u}_{k-1}+V_k\mathbf{u}_k+D_k\mathbf{u}_{k+1} & \text{if } k<n\\ V_{k}\mathbf{u}_k+D_{k-1}\mathbf{u}_{k-1} & \text{if } k=n\end{cases},
      \end{equation*}
   
    \noindent and denote the Dirac delta-type vector-valued sequences by $\delta_{j}^{k}$, where for each $k \in \mathbb{Z}$ and each $j \in \{1, \ldots, l\}$, its $m$-th entry is given by
    \[ \delta_{j}^{k}(m) = \delta_{k,m} e_j = \begin{cases} e_j & \text{if } m=k \\ 0 & \text{if } m \neq k \end{cases}; \]
    here, $\{e_1, \ldots, e_l\}$ stands for the canonical standard basis of $\mathbb{C}^l$. 
    
    Similarly, for each $k \in \mathbb{Z}$, let $\Delta_{k}$ denote the matrix-valued sequence which $m$-th entry is given by
    \[ \Delta_{k}(m) = \delta_{k,m} I = \begin{cases} I & \text{if } m=k \\ 0 & \text{if } m \neq k \end{cases}, \]
    where $I$ and $0$ are the $l\times l$ identity and null matrices, respectively.

    \begin{obs}
    For a linear operator $A$ acting on $\ell^2(\mathbb{Z}, \mathbb{C}^l)$ (or $\ell^{2}(\mathbb{Z}^n_{\pm},\mathbb{C}^{l})$),  
    \begin{equation}\label{block_operator_notation}
        \langle \Delta_p, A \Delta_m \rangle
    \end{equation}
denotes, for each $p,m\in\mathbb{Z}^n_{\pm}$,  the $l \times l$ matrix whose $(i,j)$-th entry is given by the standard vector inner product $$\langle \delta_i^p, A \delta_j^m \rangle_{\ell^2(\mathbb{C}^l)}\ \ .$$
    
    Equivalently, this notation represents the matrix-valued evaluation (often related to the Gramian, or Gram matrix, defined as the matrix of pairwise inner products of a set of vectors), yielding an $l \times l$ matrix rather than a scalar. In the following result, the bracket notation $\langle \Delta_p, \cdot \, \Delta_m \rangle$ is used to represent the $l \times l$ block-matrix evaluation of the resolvent operator, as introduced in \eqref{block_operator_notation}. 
    \end{obs}

    \begin{pro}\label{3.3}
    Let $z\in\mathbb{C}_{+}$. Then, for each $n\in\mathbb{Z}$,
    \begin{eqnarray*} 
      M_{+}(n,z)&=&\langle\Delta_{n+1},(J_{+}-zI)^{-1}\Delta_{n+1}\rangle\, ,\\ 
      M_{-}(n,z)&=&-(D_{n}\langle\Delta_{n},(J_{-}-zI)^{-1}\Delta_{n}\rangle D_{n})^{-1}\,.
    \end{eqnarray*}
    \end{pro}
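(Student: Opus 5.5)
The plan is to identify each matrix $M_\pm(n,z)$ with a boundary value of the Weyl solution $F_\pm$, and then recognise a suitably normalised Weyl solution as a column block of the resolvent of $J_\pm$. Fix $z\in\mathbb{C}_+$. The operators $J_\pm$ are bounded and self-adjoint by~\eqref{CSP}, so $z\notin\sigma(J_\pm)$. Hence for each matrix-valued right-hand side $\Delta_k$ there is exactly one $\ell^2$ matrix sequence $u$ with $(J_\pm-z)u=\Delta_k$, where $J_\pm$ acts columnwise. Its $m$-th block $u(m)$ is the block $\langle\Delta_m,(J_\pm-zI)^{-1}\Delta_k\rangle$ of~\eqref{block_operator_notation}. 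I read that notation as the $(m,k)$ block of the resolvent. If the sesquilinear convention introduces a complex conjugation, I would remove it using $(J_\pm-zI)^{-1}$ having real symmetric coefficients, so that the blocks are complex symmetric. So the task reduces to exhibiting the unique $\ell^2$ solution explicitly.

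For $J_+$ on $\mathbb{Z}_+^n$, I seek $u$ with $(J_+-z)u=\Delta_{n+1}$. For $k>n+1$ this is just~\eqref{1.1}. At $k=n+1$ the equation reads $(V_{n+1}-z)u(n+1)+D_{n+1}u(n+2)=I$. If we append a value $u(n)$ with $D_nu(n)=-I$, i.e.\ $u(n)=-D_n^{-1}$, then the extended sequence solves~\eqref{1.1} also at $k=n+1$. Conversely, any solution of~\eqref{1.1} on $\{n,n+1,\dots\}$ with $u(n)=-D_n^{-1}$ that is $\ell^2$ at $+\infty$ restricts to the resolvent column. The initial conditions give
\[
F_+(n)=U(n)+M_+V(n)=-D_n^{-1},\qquad F_+(n+1)=M_+(n,z).
\]
So $u=F_+$ itself works, being $\ell^2$ at $+\infty$ by definition. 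Uniqueness then gives $\langle\Delta_{n+1},(J_+-zI)^{-1}\Delta_{n+1}\rangle=u(n+1)=M_+(n,z)$. As a consistency check, \eqref{3.2} gives $-F_+(n+1)F_+(n)^{-1}D_n^{-1}=-M_+(-D_n)D_n^{-1}=M_+$.

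For $J_-$ on $\mathbb{Z}_-^n$, I solve $(J_--z)w=\Delta_n$. At $k=n$ the equation is $D_{n-1}w(n-1)+(V_n-z)w(n)=I$. Appending $w(n+1)=-D_n^{-1}$ turns this into~\eqref{1.1} at $k=n$. I try $w(k)=F_-(k)C$ for a constant matrix $C$; this is automatically $\ell^2$ at $-\infty$. We have $F_-(n+1)=U(n+1)-M_-V(n+1)=-M_-(n,z)$, which is invertible because $F_-(n+1)$ is (as noted after~\eqref{3.1}). So the condition $F_-(n+1)C=-D_n^{-1}$ forces $C=M_-^{-1}D_n^{-1}$. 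Then $w(n)=F_-(n)C=-D_n^{-1}M_-^{-1}D_n^{-1}$, hence $D_nw(n)D_n=-M_-^{-1}$. Inverting yields $M_-(n,z)=-(D_n\langle\Delta_n,(J_--zI)^{-1}\Delta_n\rangle D_n)^{-1}$. Again~\eqref{3.2} is consistent: $F_-(n+1)F_-(n)^{-1}D_n^{-1}=(-M_-)(-D_n)D_n^{-1}=M_-$.

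The main point needing care is the identification of the resolvent column with the extended solution. Three things must be justified: the boundary modification at $n+1$ (respectively $n$) is exactly encoded by the fictitious value $-D_n^{-1}$; the $\ell^2$ requirement together with the limit-point property (Proposition~2.4 in~\cite{fabricioSilasA}) pins down the solution uniquely; and the block/conjugation convention of~\eqref{block_operator_notation} matches the matrix $u(m)$. The algebra after that is immediate from the initial conditions for $U$ and $V$.
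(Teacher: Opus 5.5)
Your proof is correct. It uses the same boundary computation as the paper, but with the logic reversed.

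\textbf{Comparison with the paper.} The paper starts from the resolvent column $G=(J_+-zI)^{-1}\Delta_{n+1}$ and invokes the limit-point property to write $G=F_+C$ on $\mathbb{Z}_+^n$. It then reads off $C$ from the equation at the boundary node: $-D_nF_+(n)C=I$, so $C=I$. For $J_-$ it likewise writes $H=F_-K$ and obtains $-D_nF_-(n+1)K=I$. You instead verify directly that $F_+$ (resp.\ $F_-M_-(n,z)^{-1}D_n^{-1}$), restricted to the half-line, is in $\ell^2$ and solves $(J_\pm-zI)u=\Delta_{n+1}$ (resp.\ $\Delta_n$). You then conclude by injectivity of $J_\pm-zI$.

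What your route buys: uniqueness comes only from $z\notin\sigma(J_\pm)$. You never need the fact that every $\ell^2$ solution at $\pm\infty$ has the form $F_\pm C$. The limit-point property enters only through the well-definedness of $M_\pm$, so citing it in your uniqueness step is superfluous.

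What it costs: in the $M_-$ case you need invertibility of $M_-(n,z)=-F_-(n+1)$ as an input. In the paper's direction, the identity $-D_nF_-(n+1)K=I$ delivers this for free. Quoting the remark after \eqref{3.1} is no worse than what the paper itself does, but you can close the gap with your own tool. Suppose $M_-(n,z)a=0$. Then $F_-(\cdot)a$ restricted to $\mathbb{Z}_-^n$ is an $\ell^2$ solution of $(J_--zI)u=0$, because the boundary term is $-D_nF_-(n+1)a=D_nM_-(n,z)a=0$. Hence it vanishes, and $F_-(n)a=-D_n^{-1}a=0$ gives $a=0$.

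\textbf{A correction on the conjugation issue.} Your fallback for the conjugation convention does not work.
\begin{itemize}
\item With the paper's convention (linear in the first slot), $\langle\delta_i^p,(J_\pm-zI)^{-1}\delta_j^m\rangle$ is the complex conjugate of the corresponding resolvent entry.
\item Since $J_\pm$ has real coefficients, this amounts to replacing $z$ by $\bar z$.
\item Complex symmetry of the resolvent only lets you transpose; it cannot remove a conjugation.
\end{itemize}
Taken literally, both formulas would return $\overline{M_\pm(n,z)}$, whose imaginary part is negative definite. The right fix is to read the bracket as the $(p,m)$ block of the resolvent, i.e.\ with entries $\langle(J_\pm-zI)^{-1}\delta_j^m,\delta_i^p\rangle$ in the paper's convention. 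This is your primary reading, and it is also what the paper's proof tacitly uses.
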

    
    \begin{proof}
    Set $g_{j}(k,z):=(J_{+}-zI)^{-1}\delta_{j}^{n+1}(k)$. Note that for each $j=1,\ldots,l$ and each $k\ge n+1$, $g_{j}(k,z)$ is a solution to the equation (\ref{1.1}) that belongs to $\ell^{2}(\mathbb{Z}^n_{+},\mathbb{C}^{l})$. Set, for each $k\ge n+1$,
    \[ G(k,z):=(g_{1}(k,z),g_{2}(k,z),...,g_{l}(k,z)); \]
    $G(k,z)$ is a matrix-valued solution to equation (\ref{1.1}) for $k \ge n+1$ and $G(\cdot,z)\in \ell^{2}(\mathbb{Z}^n_{+},\mathbb{C}^{l\times l})$. Thus, there exists an invertible constant matrix $C\in\mathbb{C}^{l\times l}$ such that for each $k\ge n+1$,
    \begin{equation} \label{3.4}
    G(k,z)=F_+(k,z)C.
    \end{equation}

    In order to determine $C$, one needs only to compare $G(n+1,z)$ and $F_+(n+1,z)$. Note that evaluating the action of the operator at the boundary node $n+1$ yields
    \[ \left((J_{+}-zI)G\right)(n+1,z)=(\delta_{1}^{n+1}(n+1),\delta_{2}^{n+1}(n+1),...,\delta_{l}^{n+1}(n+1)), \]
    and so $\left((J_{+}-zI)G\right)(n+1,z)=I$. Therefore, by the definition of $J_+$ at node $n+1$, one gets
    $$V_{n+1}G(n+1,z) + D_{n+1}G(n+2,z) - zG(n+1,z) = I$$
    \begin{equation} \label{3.5}
     \implies D_{n+1}G(n+2,z)=(z-V_{n+1})G(n+1,z)+I.
    \end{equation}

    On the other hand, since $F_+(k,z)$ is a fundamental solution to equation (\ref{1.1}) everywhere, evaluating it at $k=n+1$ gives
    \[ D_{n+1}F_+(n+2,z)=(z-V_{n+1})F_+(n+1,z)-D_n F_+(n,z). \]

    It follows from equation (\ref{3.4}) that
    \begin{equation} \label{3.6}
    D_{n+1}G(n+2,z)=(z-V_{n+1})F_+(n+1,z)C-D_n F_+(n,z)C.
    \end{equation}

    Finally, by comparing equations (\ref{3.5}) and (\ref{3.6}) and by noting that $G(n+1,z) = F_+(n+1,z)C$, one concludes that $-D_n F_+(n,z)C=I$. Since $F_+(n,z)=-D_n^{-1}$, it follows that $C=I$, and so, $G(n+1,z)=F_+(n+1,z)$.

    Hence, one gets 
    $M_{+}(n,z) = -F_{+}(n+1,z) F_{+}(n,z)^{-1} D_n^{-1} = -G(n+1,z) (-D_n) D_n^{-1} = G(n+1,z)$. It remains to obtain the entries of the matrix $G(n+1,z)$. Namely, for each $i,j=1,...,l$, the components $g_{ij}(n+1,z)$ are given by
    \[ g_{ij}(n+1,z)=\langle\delta_{i}^{n+1},(J_{+}-zI)^{-1}\delta_{j}^{n+1}\rangle, \]
    from which follows that
    \[ M_{+}(n,z)=\langle\Delta_{n+1},(J_{+}-zI)^{-1}\Delta_{n+1}\rangle\, .\]

    The proof for the identity for $M_-(n,z)$ follows a similar geometric reasoning but adjusted for the right-end boundary. Set $h_{j}(k,z):=(J_{-}-zI)^{-1}\delta_{j}^{n}(k)$. Note that for each $j=1,\ldots,l$ and each $k \le n$, $h_{j}(k,z)$ is a solution to the equation (\ref{1.1}) that belongs to $\ell^{2}(\mathbb{Z}^n_{-},\mathbb{C}^{l})$. Set, for each $k\le n$,
    \[ H(k,z):=(h_{1}(k,z),h_{2}(k,z),...,h_{l}(k,z)); \]
    $H(k,z)$ is a matrix-valued solution to equation (\ref{1.1}) and $H(\cdot,z)\in \ell^{2}(\mathbb{Z}^n_{-},\mathbb{C}^{l\times l})$. Thus, there exists an invertible constant matrix $K\in\mathbb{C}^{l\times l}$ such that for each $k\le n$,
    \begin{equation} \label{3.4a}
    H(k,z)=F_-(k,z)K.
    \end{equation}
    
    The evaluation of the action of $J_-$ at the boundary node $n$ yields
    \[ \left((J_{-}-zI)H\right)(n,z)=I \implies V_nH(n,z) + D_{n-1}H(n-1,z) - zH(n,z) = I. \]
    It follows from relation (\ref{3.4a}) that
    \begin{equation} \label{3.5a}
    V_nF_-(n,z)K + D_{n-1}F_-(n-1,z)K - zF_-(n,z)K = I.
    \end{equation}
    On the other hand, since $F_-(k,z)$ is a solution to equation (\ref{1.1}) everywhere, evaluating at $k=n$ gives
    \[ D_nF_-(n+1,z) + V_nF_-(n,z) + D_{n-1}F_-(n-1,z) = zF_-(n,z), \]
    which can be rewritten as
    \begin{equation} \label{3.6a}
    V_nF_-(n,z) + D_{n-1}F_-(n-1,z) - zF_-(n,z) = -D_nF_-(n+1,z).
    \end{equation}

    By multiplying both members of~\eqref{3.6a} to the right by $K$ and by comparing the resulting equation with~\eqref{3.5a}, one gets $-D_nF_-(n+1,z)K = I$, which leads to $K = -F_-(n+1,z)^{-1}D_n^{-1}$. It follows from~\eqref{3.4a} that
    \[ H(n,z) = -F_-(n,z)F_-(n+1,z)^{-1}D_n^{-1}. \]

    Since we know from (\ref{3.2}) that $M_-(n,z) = F_-(n+1,z)F_-(n,z)^{-1}D_n^{-1}$,  one gets $M_-(n,z)^{-1} = D_n F_-(n,z)F_-(n+1,z)^{-1}$. Therefore,
    \[ H(n,z) = -D_n^{-1} M_-(n,z)^{-1} D_n^{-1}, \]
and so, 
    \[ M_{-}(n,z) = -(D_nH(n,z)D_n)^{-1}. \]
    By extracting the entries of $H(n,z)$ exactly as before yields the identity 
    \[ M_{-}(n,z)=-(D_{n}\langle\Delta_{n},(J_{-}-zI)^{-1}\Delta_{n}\rangle D_{n})^{-1}.\]
    \end{proof}

    Proposition~(\ref{3.3}) shows that for each $n\in\mathbb{Z}$, $M_{\pm}(n,z)$ are matrix-valued analytic functions on $\mathbb{C}_+$.
    In fact, one can say even more: $M_{\pm}(n,z)$ are matrix-valued Herglotz functions.

    \begin{lema}\label{84}
    For each $z\in\mathbb{C}_{+}$ and each $n\in\mathbb{Z}$, $M_{\pm}(n,z)$ are symmetric and satisfy the relation
    \begin{equation} \label{3.7}
    D_n M_{\pm}(n,z) D_n+M_{\pm}(n-1,z)^{-1}\pm(zI-V_n)=0.
    \end{equation}
    \end{lema}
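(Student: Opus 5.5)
The lemma has two parts, and I would treat them separately: symmetry of $M_\pm(n,z)$, which I would read off from the resolvent representation in Proposition~\ref{3.3}, and the Riccati-type relation \eqref{3.7}, which I would derive directly from \eqref{3.2} and the eigenvalue equation \eqref{1.1}.

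\textbf{Symmetry.} By Proposition~\ref{3.3}, the $(i,j)$ entry of $M_+(n,z)$ is $g_{ij}=\langle\delta_i^{n+1},R(z)\delta_j^{n+1}\rangle$, where $R(z)=(J_+-zI)^{-1}$.
\begin{itemize}
\item Since $D_k$ and $V_k$ are real symmetric, $J_+$ is self-adjoint. It also commutes with the entrywise complex conjugation $\mathcal{C}$ on $\ell^2(\mathbb{Z}^n_+,\mathbb{C}^l)$.
\item Hence $R(z)^*=R(\bar z)$ and $R(\bar z)=\mathcal{C}R(z)\mathcal{C}$.
\item The vectors $\delta_i^{n+1}$ are real, so $\mathcal{C}\delta_i^{n+1}=\delta_i^{n+1}$. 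Using this,
\[
g_{ij}=\langle R(\bar z)\delta_i^{n+1},\delta_j^{n+1}\rangle=\overline{\langle \delta_j^{n+1},R(\bar z)\delta_i^{n+1}\rangle}=\langle \delta_j^{n+1},R(z)\delta_i^{n+1}\rangle=g_{ji}.
\]
\end{itemize}
Thus $M_+(n,z)^T=M_+(n,z)$. The same argument shows that $H(n,z)=\langle\Delta_n,(J_--zI)^{-1}\Delta_n\rangle$ is symmetric. Since $D_n$ is symmetric, $D_nH(n,z)D_n$ is symmetric, and therefore so is its inverse $-M_-(n,z)$.

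\textbf{The relation \eqref{3.7}.} The key point is that \eqref{3.2} depends only on the $\ell^2$ solution space, not on the normalization at $n$.
\begin{itemize}
\item The solutions $F_\pm$ built with initial conditions at $n$ and at $n-1$ are both $l\times l$ matrix solutions of \eqref{1.1} with linearly independent columns lying in $\ell^2$ near $\pm\infty$.
\item By the limit-point property (Proposition~2.4 in~\cite{fabricioSilasA}) they span the same $l$-dimensional space. So they differ by right multiplication by a constant invertible matrix $C$.
\item The combination $F(k+1)F(k)^{-1}$ is unchanged under $F\mapsto FC$.
\end{itemize}
Consequently, fixing one such solution $F=F_+$, for every $k$,
\[
M_+(k,z)=-F(k+1)F(k)^{-1}D_k^{-1}.
\]
Taking $k=n-1$ and inverting gives $M_+(n-1,z)^{-1}=-D_{n-1}F(n-1)F(n)^{-1}$. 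Then
\[
D_nM_+(n,z)D_n+M_+(n-1,z)^{-1}=-\big(D_nF(n+1)+D_{n-1}F(n-1)\big)F(n)^{-1}=-(zI-V_n),
\]
where the last equality uses \eqref{1.1}. This is \eqref{3.7} with the $+$ sign. For $M_-$ the same computation with $M_-(k,z)=F_-(k+1)F_-(k)^{-1}D_k^{-1}$ yields $+(zI-V_n)$, which is \eqref{3.7} with the $-$ sign.

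\textbf{Main obstacle.} The delicate step is justifying that $F_\pm(k)$ is invertible for every $k$, so that $M_\pm(n-1,z)$ is invertible. I would first try the standard argument: if $F(k)\mathbf{v}=0$ for some $\mathbf{v}\neq 0$, then $F\mathbf{v}$ is an $\ell^2$ solution on the relevant half-line $\{k+1,k+2,\ldots\}$ (respectively $\{\ldots,k-1\}$) that vanishes at the boundary site $k$. That makes $F\mathbf{v}$ restricted to the half-line beyond $k$ an eigenvector of the corresponding half-line restriction with non-real eigenvalue $z$. This contradicts self-adjointness unless $F\mathbf{v}\equiv 0$, and since $D_k$ is invertible that in turn contradicts the linear independence of the columns of $F$.
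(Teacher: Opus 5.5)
Your proof is correct.

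\textbf{The relation \eqref{3.7}.} Here you follow the paper's computation: multiply the eigenvalue equation \eqref{1.1} by $F_\pm(n,z)^{-1}$ and substitute \eqref{3.2} at $n$ and at $n-1$. You also justify two points the paper passes over. First, $F(k+1)F(k)^{-1}$ does not depend on the base point at which $F_\pm$ is normalized, so a single $F_\pm$ serves for both $M_\pm(n,z)$ and $M_\pm(n-1,z)$. Second, $F_\pm(k,z)$ is invertible at every site. The paper only says the columns are linearly independent ``and therefore'' invertible, which is not a valid inference by itself. Your self-adjointness argument for non-real $z$ is what actually makes it true.

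\textbf{Symmetry.} Here your route genuinely differs. The paper works directly with the Weyl solution: it writes
$M_-(n,z)^t-M_-(n,z)=-D_n^{-1}(F_-(n,z)^{-1})^tW_{n+1}(F_-,F_-)F_-(n,z)^{-1}D_n^{-1}$.
It then uses that the transpose Wronskian is constant in $n$ and vanishes at $-\infty$ because $F_-$ is square summable and $D_n$ is bounded. You instead read symmetry off the resolvent representation of Proposition~\ref{3.3}, since $(J_\pm-zI)^{-1}$ is complex symmetric when $J_\pm$ is real and self-adjoint. You then transfer this to $M_-$ by inverting the symmetric matrix $D_nHD_n$.

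Your version avoids any boundary-term analysis at infinity. In exchange, it depends on Proposition~\ref{3.3}, and hence on the invertibility of $F_\pm(n,z)$ that you establish separately. The paper's Wronskian argument is self-contained from \eqref{3.2}. It uses only $D_n^t=D_n$, $V_n^t=V_n$ and the decay of $F_\pm$, not self-adjointness.
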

    
    \begin{proof}
    Let us show that $M_{\pm}(n,z)$ are symmetric. Namely, for $M_{-}(n,z)$, consider the following calculation:
    \begin{align*}
    M_{-}&(n,z)^{t}-M_{-}(n,z) = \\ &=(F_{-}(n+1,z)F_{-}(n,z)^{-1}D_n^{-1})^{t}-F_{-}(n+1,z)F_{-}(n,z)^{-1}D_n^{-1} \\&= D_n^{-1}(F_{-}(n,z)^{-1})^{t}[F_{-}(n+1,z)^{t}D_n F_{-}(n,z)-F_{-}(n,z)^{t}D_n F_{-}(n+1,z)]F_{-}(n,z)^{-1}D_n^{-1} \\&= -D_n^{-1}(F_{-}(n,z)^{-1})^{t}W_{n+1}(F_{-}(\cdot,z),F_{-}(\cdot,z))F_{-}(n,z)^{-1}D_n^{-1}.
    \end{align*}
    
    \noindent Since $F_{-}(\cdot,z)$ is a solution to equation (\ref{1.1}), $W_{n+1}(F_{-}(\cdot,z),F_{-}(\cdot,z))$ does not depend on $n$. Hence, it follows that $W_{n+1}(F_{-}(\cdot,z),F_{-}(\cdot,z))= 0$ 
    (the condition $F_{-} \in \ell^{2}(\mathbb{Z}^n_-,\mathbb{C}^{l\times l})$ guarantees that the boundary term of the Wronskian vanishes at $-\infty$).
    This in turn implies that $M_{-}(n,z)$ is symmetric. A similar calculation shows that $M_{+}(n,z)$ is also symmetric.
    
    It remains to show that $M_{\pm}(n,z)$ satisfy relation (\ref{3.7}). Namely, it follows from equation (\ref{1.1}) that
    \[ D_n F_{+}(n+1,z)+D_{n-1} F_{+}(n-1,z)+(V_n-zI)F_{+}(n,z)=0; \]
    by multiplying both of its members to the right by $F_{+}(n,z)^{-1}$, one gets  
    \[ D_n (-M_{+}(n,z) D_n) + D_{n-1} F_{+}(n-1,z)F_{+}(n,z)^{-1} + (V_n-zI)=0, \]
    given that  $F_+(n+1,z) F_+(n,z)^{-1} = -M_+(n,z) D_n$.
    
    It also follows from the definition of $M_+(n-1,z)$ that $F_{+}(n-1,z)F_{+}(n,z)^{-1} = -D_{n-1}^{-1} M_+(n-1,z)^{-1}$; by replacing this into 
    the previous relation, one gets 
    \[ -D_n M_{+}(n,z) D_n - D_{n-1} \left( D_{n-1}^{-1} M_{+}(n-1,z)^{-1} \right) + V_n - zI = 0, \]
    that is,$$ D_n M_{+}(n,z) D_n + M_{+}(n-1,z)^{-1} + zI - V_n = 0. $$
    
  The proof for $M_{-}(n,z)$ follows the same steps, so we omit it. 
%
    \end{proof}

    Recall that the imaginary part of $M_{\pm}(n,z)$ is given by
    \[ \text{Im } M_{\pm}(n,z)=\frac{1}{2i}(M_{\pm}(n,z)-M_{\pm}(n,z)^{*}). \]
 
    Inwhat follows, we also need the so-called Green Formula (see Section~2.3 in~\cite{fabricioSilasA} for a discussion): let $n>m\ge 0$, then, 
    \begin{equation}\label{2.4}
 \sum_{k=m}^n(A_k^TJB_k-JA_k^TB_k)=W_{n+1}(A,B)-W_m(A,B),
\end{equation}
where
\[
W_m(A,B):=A_{m-1}^TD_{m-1}B_m-A_m^TD_{m-1}B_{m-1}.
\]

    \begin{pro}\label{86}
    For each $z\in\mathbb{C}_{+}$ and each $n\in\mathbb{Z}$, $\text{Im } M_{\pm}(n,z)>0$.
    \end{pro}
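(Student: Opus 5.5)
The plan is to work directly from Proposition~\ref{3.3}, which expresses $M_\pm(n,z)$ through block entries of resolvents of self-adjoint operators, and to read off positivity of the imaginary part from the resolvent identity. For $M_+$ this is immediate. Fix $\mathbf{v}\in\mathbb{C}^l\setminus\{0\}$ and let $\phi\in\ell^2(\mathbb{Z}^n_+,\mathbb{C}^l)$ be the sequence supported at $n+1$ with value $\mathbf{v}$. By the definition of the bracket notation~\eqref{block_operator_notation}, the quadratic form of $M_+(n,z)$ at $\mathbf{v}$ is (up to the conjugation convention) $\langle (J_+-z)^{-1}\phi,\phi\rangle$. Setting $g=(J_+-z)^{-1}\phi\neq 0$ and using the first resolvent identity $R(z)-R(z)^*=(z-\bar z)R(z)^*R(z)$, valid because $J_+$ is self-adjoint and bounded by~\eqref{CSP}, one gets
\[
\langle \operatorname{Im}M_+(n,z)\mathbf{v},\mathbf{v}\rangle=\operatorname{Im}z\,\Vert g\Vert^2>0 .
\]
Equivalently, $\operatorname{Im}M_+(n,z)=\operatorname{Im}z\,\sum_{k\ge n+1}G(k,z)^*G(k,z)$ with $G$ as in the proof of Proposition~\ref{3.3}. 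This is a sum of Gram matrices whose $k=n+1$ term is $G(n+1)^*G(n+1)=M_+^*M_+$. Since $M_+(n,z)=G(n+1,z)$ is invertible (it is the boundary value of a nonzero $\ell^2$ solution, or alternatively because $g\neq 0$ forces $g(n+1)\neq0$ by the recursion), that term is already positive definite.

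For $M_-$, write $H(n,z)=\langle\Delta_n,(J_--z)^{-1}\Delta_n\rangle$. The same argument gives $\operatorname{Im}H(n,z)>0$, and in particular $H$ is invertible, because $\operatorname{Im}\langle H\mathbf{v},\mathbf{v}\rangle\neq0$ for $\mathbf{v}\neq0$. Since $D_n$ is real symmetric, $\operatorname{Im}(D_nHD_n)=D_n(\operatorname{Im}H)D_n>0$ by congruence, as $D_n$ is invertible. Finally, for any invertible $A$ with $\operatorname{Im}A>0$, one has
\[
\operatorname{Im}(-A^{-1})=\tfrac{1}{2i}(A^{-1})^{*}\bigl(A-A^{*}\bigr)A^{-1}=(A^{-1})^*(\operatorname{Im}A)A^{-1}>0 .
\]
Applying this with $A=D_nH(n,z)D_n$ and using $M_-(n,z)=-(D_nHD_n)^{-1}$ from Proposition~\ref{3.3} yields $\operatorname{Im}M_-(n,z)>0$.

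The only delicate points are bookkeeping. First, the block notation is linear in the first slot, so one must check that the entry convention $\langle\delta_i,R\delta_j\rangle$ produces $R$'s matrix or its transpose; symmetry from Lemma~\ref{84} removes this ambiguity. Second, one must make sure that the sign produced by $(z-\bar z)/(2i)$ is $+\operatorname{Im}z$. As a cross-check, or as an alternative route avoiding the resolvent, one could apply the Green formula~\eqref{2.4} with $A=F_+$ and $B=\overline{F_+}$, summing to $\infty$ where the Wronskian vanishes by the $\ell^2$ condition. This gives $\operatorname{Im}z\sum\Vert F_+\mathbf{v}\Vert^2$ expressed through a boundary Wronskian at $n$, which is $D_n^{-1}$-conjugated $\operatorname{Im}M_+$. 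I expect this sign and normalisation tracking to be the main (minor) obstacle.
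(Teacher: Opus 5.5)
Your proof is correct, and it reaches the same identity as the paper by a different mechanism.

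\textbf{Comparison with the paper.} The paper does not use the resolvent representation here. It applies the Green formula~\eqref{2.4} to $F_+$ on $[n+1,\infty)$ and discards the Wronskian at infinity via the $\ell^2$ condition. It then evaluates the boundary Wronskian using $F_+(n)=-D_n^{-1}$ and $F_+(n+1)=-M_+D_nF_+(n)$, which gives $\operatorname{Im}M_+(n,z)=\operatorname{Im}z\sum_{j\ge n+1}F_+(j,z)^*F_+(j,z)$. The case of $M_-$ is left to ``the same arguments'' on the left half-line.

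You instead take Proposition~\ref{3.3} as input and let the identity $R(z)-R(z)^*=(z-\bar z)R(z)^*R(z)$ do the work. Since $G=F_+$ (the constant $C$ in that proof is $I$), your ``equivalently'' formula is literally the identity the paper derives. Your route buys three things:
\begin{itemize}
\item strict positivity comes for free from injectivity of $R(z)$;
\item no Wronskian index/sign bookkeeping is needed, and no separate justification that the Wronskian vanishes at infinity;
\item for $M_-$, the congruence by $D_n$ plus $\operatorname{Im}(-A^{-1})=(A^{-1})^*(\operatorname{Im}A)A^{-1}$ shows transparently why the normalisation $M_-=-(D_nHD_n)^{-1}$ gives a Herglotz function, a case the paper omits.
\end{itemize}
The paper's route buys independence from Proposition~\ref{3.3}: it works directly from the defining $\ell^2$ property of $F_\pm$.

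\textbf{Two small corrections.}

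First, the convention issue is conjugation, not transposition, so Lemma~\ref{84} does not resolve it. Both $M_+(n,z)$ and $\overline{M_+(n,z)}$ are symmetric. With the inner product linear in the first slot, the literal bracket $\langle\delta_i^{n+1},R(z)\delta_j^{n+1}\rangle$ equals $\overline{\bigl((R(z)\delta_j^{n+1})(n+1)\bigr)_i}$. Because $J_+$ is real, this is the corresponding block of $R(\bar z)$, whose imaginary part is negative. You should anchor the computation on $M_+(n,z)=G(n+1,z)$ from the proof of Proposition~\ref{3.3}. There $G(n+1,z)\mathbf{v}=(R\phi)(n+1)$, so $\mathbf{v}^*M_+(n,z)\mathbf{v}=\langle R\phi,\phi\rangle$ exactly. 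The same applies to $H(n,z)$.

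Second, your parenthetical claim that $g\neq 0$ forces $g(n+1)\neq 0$ ``by the recursion'' is not justified by the recursion. From $g(n+1)=0$ you simply get $g(n+2)=D_{n+1}^{-1}\mathbf{v}$ and the recursion continues without contradiction. What excludes $g(n+1)=0$ is that $z$ is non-real: it would give $\langle R\phi,\phi\rangle=0$, contradicting the positivity you just proved. The whole remark is redundant anyway, since $\operatorname{Im}z\,\Vert R\phi\Vert^2>0$ already settles strict positivity.
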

      \begin{proof}
      We first show that $\text{Im } M_{+}(n,z)>0$. Recall that 
      $F_{+} \in \ell^{2}(\mathbb{Z}_+, \mathbb{C}^{l \times l})$. This square-summability condition guarantees that the boundary term of the Wronskian vanishes at infinity, i.e., that $\lim_{m \to \infty} W_{m}(\overline{F_+}, F_+) = 0$. By applying Green’s Formula (\ref{2.4}) to the interval $[n+1, \infty)$, and by noting that $(J F_{+})(j,z) = z F_{+}(j,z)$, one gets
    \begin{align*}
    \sum_{j=n+1}^{\infty} &\left( F_{+}(j,z)^{*} (z F_{+}(j,z)) - (\overline{z} F_{+}(j,z)^{*}) F_{+}(j,z) \right)= \\ &= W_{n}(\overline{F_+}, F_+) - \lim_{m \to \infty} W_{m}(\overline{F_+}, F_+) (z-\overline{z}) \\ &= W_{n}(\overline{F_{+}(n,z)}, F_{+}(n,z)).
    \end{align*}
    Since $z - \overline{z} = 2i\ \text{Im}(z)$, this simplifies to
    \begin{equation} \label{3.8_new}2i\ \text{Im}(z) \sum_{j=n+1}^{\infty} F_{+}(j,z)^{*} F_{+}(j,z) = W_{n}(\overline{F_{+}(n,z)}, F_{+}(n,z)).\end{equation}

    It follows from Definition (\ref{3.2}) that $F_{+}(n+1,z) = -M_{+}(n,z) D_n F_{+}(n,z)$. Moreover, given that $D_n$ is real and symmetric ($D_n^* = D_n$), the Wronskian on the right side of (\ref{3.8_new}) can be written as 
    \begin{align*}W_{n}(\overline{F_+}, F_+) &= \overline{F_+(n+1,z)}^{t} D_n F_{+}(n,z) - \overline{F_+(n,z)}^{t} D_n F_{+}(n+1,z) \\&= \overline{(-M_{+} D_n F_{+})}^{t} D_n F_{+} - \overline{F_{+}}^{t} D_n (-M_{+} D_n F_{+}) \\&= -\overline{F_{+}}^{t} D_n M_{+}^{*} D_n F_{+} + \overline{F_{+}}^{t} D_n M_{+} D_n F_{+} \\
    &= \overline{F_{+}(n,z)}^{t} D_n \left( M_{+}(n,z) - M_{+}(n,z)^{*} \right) D_n F_{+}(n,z).
    \end{align*} By using the identities $F_+(n,z)=U(n)+M_+(n,z)V(n)=-D_n^{-1}$ and  $M_{+} - M_{+}^{*} = 2i\ \text{Im } M_{+}$, one gets 
    \begin{equation} \label{3.10_new}\text{Im}(z) \sum_{j=n+1}^{\infty} F_{+}(j,z)^{*} F_{+}(j,z) = 
      \text{Im } M_{+}(n,z),\end{equation} and so, $\text{Im } M_{+}(n,z)>0$.
    
   Since we use the same arguments for the left half-line function $M_{-}(n,z)$, we omit the proof. 
%
%
%
  %
    \end{proof}
    
    \vspace{0.5cm}
    Note that Proposition~\ref{86}  highlights that the different sign conventions in the initial definitions of $M_+$ and $M_-$ in (\ref{3.2}) perfectly explain the orientations of the limit, ensuring that both matrices behave consistently as Herglotz functions.

    \vspace{0.5cm}
    
    \subsection{Fractional linear transformations and distance decreasing actions}\label{FLTD}
    
    The fractional linear transformations have been successfully used in the study of Weyl-Titchmarsh $m$-functions. More specifically, the transfer matrices associated with Jacobi operators are fractional linear transformations (for a more general treatment, see~\cite{Freitas,Froese,Siegel}). For matrix-valued Jacobi operators, their transfer matrices may also be considered as matrix-valued fractional linear transformations. Namely, for each $T\in\mathbb{C}^{2l\times 2l}$ of the form
    \[ T=\begin{pmatrix}A&B\\ C&D\end{pmatrix}, \]
    where $A, B, C,$ and $D$ are $l\times l$ matrices, a matrix-valued fractional linear transformation is a map $T:\mathbb{C}^{l\times l}\rightarrow\mathbb{C}^{l\times l}$ defined by the law
    \begin{equation} \label{3.12}
    T(Z)=(AZ+B)(CZ+D)^{-1},
    \end{equation}
    where $Z\in\mathbb{C}^{l\times l}$ is such that $\det(CZ+D)\neq 0$. 

    It is particularly true that the transfer matrices associated with equation (\ref{1.1}) define matrix-valued fractional linear transformations of the form (\ref{3.12}). Namely, if $G(n)$ is a matrix solution to equation (\ref{1.1}), then for each $n\in\mathbb{Z}$, the matrix identity
    \begin{equation} \label{3.13}
    \begin{pmatrix}G(n+1)\\ \mp D_n G(n)\end{pmatrix}=\begin{pmatrix}D_n^{-1}(zI-V_n)&\pm D_n^{-1}\\ \mp D_n&0\end{pmatrix}\begin{pmatrix}G(n)\\ \mp D_{n-1} G(n-1)\end{pmatrix}
    \end{equation}
    follows. 
    The matrices
    \begin{equation} \label{3.14}
    T_{\pm}(n,z)=\begin{pmatrix}D_n^{-1}(zI-V_n)&\pm D_n^{-1}\\ \mp D_n&0\end{pmatrix}
    \end{equation}
    are the so-called transfer matrices associated with $J$, which describe the evolution of the state vectors $$\begin{pmatrix}G(n+1)\\ \mp D_n G(n)\end{pmatrix}$$ under iteration of $T_{\pm}(n,z)$.

    Then, each $T_{\pm}(n):=T_{\pm}(n,z)$ can be considered as a complex matrix-valued linear transformation (\ref{3.12}) acting on the set of the Weyl-Titchmarsh matrix-valued $m$-functions $M_{\pm}(n)$: 
    $$T_{\pm}(n)M_{\pm}(n)=\left((D_n^{-1}(zI-V_n))M_{\pm}(n)\pm D_n^{-1}\right)\left(\mp D_nM_{\pm}(n)\right)^{-1}\ .$$

    The transfer matrices also relate $M_{\pm}(n)$ and $M_{\pm}(n-1)$. 
    
    \begin{lema}\label{90}
    For each $z\in\mathbb{C}$ and each $n\in\mathbb{Z}$, $M_{\pm}(n)=T_{\pm}(n)M_{\pm}(n-1)$.
    \end{lema}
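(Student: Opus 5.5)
The plan is to check the identity directly from the definition \eqref{3.12} of the fractional linear action, using the recursion \eqref{3.7} of Lemma~\ref{84} as the only real input. Throughout I take $z\in\mathbb{C}_+$, which is where $M_\pm(n,z)$ are defined.

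\textbf{Step 1: well-definedness.} For $T_\pm(n)$ given by \eqref{3.14}, the blocks are
\[
A=D_n^{-1}(zI-V_n),\qquad B=\pm D_n^{-1},\qquad C=\mp D_n,\qquad D=0 .
\]
Hence $CZ+D=\mp D_nZ$. For $Z=M_\pm(n-1)$ this matrix is invertible, for two reasons:
\begin{itemize}
\item $D_n$ is positive definite;
\item $M_\pm(n-1)$ is invertible, because Proposition~\ref{86} gives $\operatorname{Im}M_\pm(n-1)>0$, and a matrix with positive definite imaginary part has trivial kernel (if $Mv=0$ then $\langle \operatorname{Im}M\,v,v\rangle=\operatorname{Im}\langle Mv,v\rangle=0$, so $v=0$).
\end{itemize}
So $T_\pm(n)M_\pm(n-1)$ is well defined.

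\textbf{Step 2: simplify the action.} Expanding \eqref{3.12} with these blocks gives
\[
T_\pm(n)Z=\bigl(D_n^{-1}(zI-V_n)Z\pm D_n^{-1}\bigr)\bigl(\mp Z^{-1}D_n^{-1}\bigr)
=\mp D_n^{-1}(zI-V_n)D_n^{-1}-D_n^{-1}Z^{-1}D_n^{-1}.
\]

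\textbf{Step 3: compare with Lemma~\ref{84}.} Solving \eqref{3.7} for $M_\pm(n)$, after conjugating by $D_n^{-1}$ on both sides, yields
\[
M_\pm(n)=-D_n^{-1}\bigl(M_\pm(n-1)^{-1}\pm(zI-V_n)\bigr)D_n^{-1}.
\]
This coincides with the expression of Step 2 at $Z=M_\pm(n-1)$, which proves the lemma for both signs.

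\textbf{Main obstacle.} The only delicate point is sign bookkeeping. The $\pm$ in the blocks $B$ and $C$ combine through $(\mp)^{-1}(\pm)=-1$, producing the common term $-D_n^{-1}Z^{-1}D_n^{-1}$ for both signs. The surviving sign in front of $(zI-V_n)$ must then match the $\pm$ in \eqref{3.7}. The case of $M_-$ was omitted in the proof of Lemma~\ref{84}, so I will re-derive its form of \eqref{3.7} from \eqref{3.2} to make sure the convention there is indeed $D_nM_-D_n+M_-(n-1)^{-1}-(zI-V_n)=0$.

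\textbf{Cross-check.} As an independent verification, I will also apply \eqref{3.13} to the solution $G=F_\pm$. Then, by \eqref{3.2},
\[
F_\pm(n+1)=\mp M_\pm(n)D_nF_\pm(n),\qquad \mp D_nF_\pm(n)=M_\pm(n)^{-1}\cdot(\text{suitable factor}).
\]
These relations express the column vector $\bigl(G(n+1),\,\mp D_nG(n)\bigr)^{t}$ as $\bigl(M_\pm(n)\,X,\ X\bigr)^{t}$ for an invertible $X$. A transfer matrix maps such a column for $M_\pm(n-1)$ to one for $T_\pm(n)M_\pm(n-1)$, which gives the same conclusion projectively.
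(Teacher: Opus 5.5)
Your proof is correct, but it takes a different route from the paper's. The paper never writes the action of $T_\pm(n)$ in closed form. Instead it lifts $M_+(n-1)=F_+(n)\bigl(-D_{n-1}F_+(n-1)\bigr)^{-1}$ to the column $\bigl(F_+(n),\,-D_{n-1}F_+(n-1)\bigr)^t$, applies $T_+(n)$ linearly using \eqref{1.1}, and reads off $F_+(n+1)\bigl(-D_nF_+(n)\bigr)^{-1}=M_+(n)$ from \eqref{3.2}. That is exactly your cross-check. There you can simply take $X=\mp D_nF_\pm(n)$, since \eqref{3.2} gives $F_\pm(n+1)=M_\pm(n)X$ directly, so no ``suitable factor'' is needed. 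The paper's route uses only \eqref{1.1} and \eqref{3.2}, does not depend on Lemma~\ref{84}, and carries over verbatim to the products $P_\pm(n)$.

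Your main route instead computes $T_\pm(n)Z=\mp D_n^{-1}(zI-V_n)D_n^{-1}-D_n^{-1}Z^{-1}D_n^{-1}$ and matches it with the Riccati recursion \eqref{3.7}; your sign bookkeeping is right. The cost is that you need \eqref{3.7} for both signs, so re-deriving the $M_-$ case is necessary rather than optional. It does hold, in the form $D_nM_-(n)D_n+M_-(n-1)^{-1}-(zI-V_n)=0$: right-multiply \eqref{1.1} for $F_-$ by $F_-(n)^{-1}$ and use \eqref{3.2} at $n$ and at $n-1$.

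In exchange, your route has three advantages:
\begin{enumerate}
\item You check that $CZ+D=\mp D_nZ$ is invertible, which the paper uses tacitly.
\item Your closed formula is the one the paper recomputes by hand in the proof of Theorem~\ref{91}.
\item It makes the later use of the lemma at real $t$ (Proposition~\ref{33}(a), Theorem~\ref{25}) transparent, because the right-hand side is continuous in $Z$ wherever the boundary value $M_\pm(n-1,t+i0)$ exists and is invertible.
\end{enumerate}
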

     \begin{proof}
      Since $F_{\pm}(n)$ are solutions to equation (\ref{1.1}), one has
      \[ D_n F_{\pm}(n+1,z)+D_{n-1} F_{\pm}(n-1,z)+V_n F_{\pm}(n,z)=zF_{\pm}(n,z). \]
    Consider the following calculation:
    \begin{eqnarray*}
    \begin{pmatrix}D_n^{-1}(zI-V_n)& D_n^{-1}\\-D_n&0\end{pmatrix}M_+(n-1)&=&\begin{pmatrix}D_n^{-1}(zI-V_n)& D_n^{-1}\\ -D_n&0\end{pmatrix} \begin{pmatrix}F_{+}(n,z)\\ -D_{n-1}F_{+}(n-1,z)\end{pmatrix} \\ &=& \begin{pmatrix}D_n^{-1}((zI-V_n)F_{+}(n,z)-D_{n-1}F_{+}(n-1,z))\\ -D_n F_{+}(n,z)\end{pmatrix} \\ 
    &=& \begin{pmatrix}F_{+}(n+1,z)\\ -D_n F_{+}(n,z)\end{pmatrix}=-F_{+}(n+1,z)F_{+}(n,z)^{-1}D_n^{-1}\\&=&M_+(n)\, ,
    \end{eqnarray*}
    where the penultimate line follows from the usual interpretation of matrix-valued linear transformation. This shows the result for $M_{+}(n)$. A similar calculation shows that
    \[ M_{-}(n)=T_{-}(n)M_{-}(n-1). \]
    \end{proof}

    \begin{lema}\label{85}
     
      For each $n\in\mathbb{Z}$ and each $z\in\mathbb{C}$, 
      the transfer matrices $T_{\pm}(n)=T_{\pm}(n,z)$ satisfy the following symplectic identities:
    \begin{enumerate}
        \item $T_{\pm}(n)^{t}\begin{pmatrix}0&I\\ -I&0\end{pmatrix}T_{\pm}(n)=\begin{pmatrix}0&I\\ -I&0\end{pmatrix}$
        \item $\begin{pmatrix}I&0\\ 0&-I\end{pmatrix}T_{+}(n)\begin{pmatrix}I&0\\ 0&-I\end{pmatrix}=T_{-}(n)$.
    \end{enumerate}
    \end{lema}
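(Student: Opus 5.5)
The plan is a direct block-matrix computation from the explicit form \eqref{3.14}. I will write
\[
T_{\pm}(n)=\begin{pmatrix}A&B_\pm\\ C_\pm&0\end{pmatrix},\qquad A=D_n^{-1}(zI-V_n),\quad B_\pm=\pm D_n^{-1},\quad C_\pm=\mp D_n,
\]
and set $\mathcal{J}=\begin{pmatrix}0&I\\ -I&0\end{pmatrix}$. The only structural inputs are $D_n^t=D_n$ and $V_n^t=V_n$, which hold because these are real symmetric matrices. The identity uses the transpose, not the adjoint, so $z$ passes through unchanged and the claim holds for every $z\in\mathbb{C}$.

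For item 1, I first compute
\[
\mathcal{J}T_\pm(n)=\begin{pmatrix}C_\pm&0\\ -A&-B_\pm\end{pmatrix}.
\]
Multiplying on the left by $T_\pm(n)^t=\begin{pmatrix}A^t&C_\pm^t\\ B_\pm^t&0\end{pmatrix}$ gives
\[
T_\pm(n)^t\mathcal{J}T_\pm(n)=\begin{pmatrix}A^tC_\pm-C_\pm^tA & -C_\pm^tB_\pm\\ B_\pm^tC_\pm & 0\end{pmatrix}.
\]
The off-diagonal blocks are immediate:
\[
-C_\pm^tB_\pm=-(\mp D_n)(\pm D_n^{-1})=I,\qquad B_\pm^tC_\pm=(\pm D_n^{-1})(\mp D_n)=-I.
\]
For the $(1,1)$ block I use symmetry of $D_n$ and $V_n$, so that $A^t=(zI-V_n)D_n^{-1}$. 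Then
\[
A^tC_\pm=\mp(zI-V_n),\qquad C_\pm^tA=\mp D_nD_n^{-1}(zI-V_n)=\mp(zI-V_n),
\]
and their difference vanishes. This yields exactly $\mathcal{J}$. The only point needing care is this cancellation in the $(1,1)$ block, since it is where the symmetry of $V_n$ and $D_n$ is used. Otherwise the computation is routine.

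For item 2, I note that conjugating a block matrix by $\operatorname{diag}(I,-I)$ flips the signs of its off-diagonal blocks:
\[
\begin{pmatrix}I&0\\ 0&-I\end{pmatrix}\begin{pmatrix}A&B\\ C&0\end{pmatrix}\begin{pmatrix}I&0\\ 0&-I\end{pmatrix}=\begin{pmatrix}A&-B\\ -C&0\end{pmatrix}.
\]
Applied to $T_+(n)$, this sends $B_+=D_n^{-1}$ to $-D_n^{-1}=B_-$ and $C_+=-D_n$ to $D_n=C_-$, while $A$ is unchanged. The result is precisely $T_-(n)$ as given in \eqref{3.14}.
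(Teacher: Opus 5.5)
Your proposal is correct and matches the paper's proof: both are a direct block-matrix multiplication from the explicit form \eqref{3.14}, using the symmetry of $D_n$ and $V_n$. The only difference is cosmetic: you associate the product in item 1 as $T_\pm(n)^t\bigl(\mathcal{J}T_\pm(n)\bigr)$, while the paper computes $\bigl(T_\pm(n)^t\mathcal{J}\bigr)T_\pm(n)$; item 2 is the same sign-flip computation in both.
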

    
    \begin{proof}
    These identities follow from matrix multiplication:
\begin{enumerate}\item    \begin{align*}
    T_{\pm}(n)^{t}\begin{pmatrix}0&I\\ -I&0\end{pmatrix}T_{\pm}(n) &= \begin{pmatrix}(zI-V_n)D_n^{-1}&\mp D_n\\ \pm D_n^{-1}&0\end{pmatrix}\begin{pmatrix}0&I\\ -I&0\end{pmatrix}\begin{pmatrix}D_n^{-1}(zI-V_n)&\pm D_n^{-1}\\ \mp D_n&0\end{pmatrix} \\
    &= \begin{pmatrix}\pm D_n&(zI-V_n)D_n^{-1}\\ 0&\pm D_n^{-1}\end{pmatrix}\begin{pmatrix}D_n^{-1}(zI-V_n)&\pm D_n^{-1}\\ \mp D_n&0\end{pmatrix} \\
    &= \begin{pmatrix}0&I\\ -I&0\end{pmatrix}
    \end{align*}

   \item  \begin{align*}
    \begin{pmatrix}I&0\\ 0&-I\end{pmatrix}T_{+}(n)\begin{pmatrix}I&0\\ 0&-I\end{pmatrix} &= \begin{pmatrix}I&0\\ 0&-I\end{pmatrix}\begin{pmatrix}D_n^{-1}(zI-V_n)&D_n^{-1}\\ -D_n&0\end{pmatrix}\begin{pmatrix}I&0\\ 0&-I\end{pmatrix} \\
    &= \begin{pmatrix}D_n^{-1}(zI-V_n)&-D_n^{-1}\\ D_n&0\end{pmatrix} = T_{-}(n)
    \end{align*}
    \end{enumerate} 
    \end{proof}
    
    The matrix in the first identity is a special symplectic matrix, which is usually denoted by 
    \[ J=\begin{pmatrix}0&I\\ -I&0\end{pmatrix} \]

    It follows from Lemma~\ref{85} that for each $n\in\mathbb{Z}$, $T_{\pm}(n)\in SL(2l,\mathbb{C})$, which is the group of $2l\times 2l$ complex matrices with determinat equal to one. Therefore, these are matrix-valued functions that map the complex upper half plane $\mathbb{C}_{+}$ into $SL(2l,\mathbb{C})$. It follows from Lemma \ref{84} and Proposition \ref{86} that for each $n\in\mathbb{Z}$ and each $z\in\mathbb{C}_{+}$, $M_{\pm}(n,z)$ are symmetric and have positive definite imaginary part; therefore, $M_{\pm}(n,z)\in\mathfrak{S}_l$, where 
    \[ \mathfrak{S}_l = \{Z \in \mathbb{C}^{l\times l}: Z = X +iY, X^{t}=X, Y^{t} = Y, Y > 0\} \]
    stands for the Siegel upper half plane.
    
    If $T\in Sp(2l,\mathbb{R})$, then the mapping $Z\mapsto T(Z)$ presented in equation (\ref{3.12}) is well defined, and it is a group action on $\mathfrak{S}_l$\footnote{While the mapping $Z \mapsto T(Z)$ is often discussed in the context of $SL(2l, \mathbb{R})$, it is essential to specify that for $l > 1$, the natural group of automorphisms of the Siegel upper half-plane $\mathfrak{S}_l$ is a subgroup of $SL(2l, \mathbb{R})$, namely the real symplectic group $Sp(2l,\mathbb{R})$. A matrix $T = \begin{pmatrix} A & B \\ C & D \end{pmatrix}$ belongs to $Sp(2l, \mathbb{R})$ if, and only if, $T^t J T = J$. 
    This condition is equivalent to the following relations:$$ A^t C = C^t A, \quad B^t D = D^t B, \quad \text{and} \quad A^t D - C^t B = I.$$ These constraints are necessary to ensure the two defining properties of $\mathfrak{S}_l$: (a) Symmetry. The identities $A^t C = C^t A$ and $B^t D = D^t B$ guarantee that for each $Z\in\mathfrak{S}_l$, $T(Z) = (AZ + B)(CZ + D)^{-1}$ is a symmetric matrix. Without the symplectic requirement, the symmetry of $Z$ would not be preserved under the action of a general $SL(2l, \mathbb{R})$ matrix. (b) Positivity. it follows from the symplectic identity $T^t J T = J$ that 
    $$ \text{Im}(T(Z)) = ((CZ+D)^*)^{-1} \text{Im}(Z) (CZ+D)^{-1}. $$ Since $Z \in \mathfrak{S}_l$ implies $\text{Im}(Z) > 0$, this congruence transformation guarantees that $\text{Im}(T(Z))$ is also positive definite. In the specific case of $l=1$, one has $SL(2, \mathbb{R}) = Sp(2,\mathbb{R})$, which justifies the interchangeability of the terms in lower dimensions. However, for $l > 1$, the symplectic structure is the required framework to keep the mapping well-defined within $\mathfrak{S}_l$.}. The transfer matrices $T_{\pm}(n)$, however, are complex symplectic matrices acting on the space of $m$-functions. The reference  \ref{McBride} provides a necessary condition that a complex symplectic matrix $T\in SL(2l,\mathbb{C})$ must satisfy for the map $Z\mapsto T(Z)$ to be well defined. Here, we show that the maps $Z\mapsto T_{\pm}(n,z)(Z)$ are well defined on $\mathfrak{S}_l$. 
    
    \begin{lema}\label{87}
      Let, for each $n\in\mathbb{Z}$, $V_n$ be symmetric and $D_n$ be symmetric and invertible, and let $z\in\mathbb{C}_{+}\cup\mathbb{R}$. Then, 
      $T_{\pm}(n,z)(Z)\in \mathfrak{S}_l$ for each $Z\in\mathfrak{S}_l$.
    \end{lema}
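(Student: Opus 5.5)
The plan is to compute the fractional linear action of $T_{\pm}(n,z)$ explicitly and read off both defining properties of $\mathfrak{S}_l$, namely symmetry and positive definite imaginary part. Write $D=D_n$ and $V=V_n$, and let $Z\in\mathfrak{S}_l$. First, $Z$ is invertible: if $Z\mathbf{u}=0$, then $0=\operatorname{Im}\langle Z\mathbf{u},\mathbf{u}\rangle=\langle(\operatorname{Im}Z)\mathbf{u},\mathbf{u}\rangle$, so $\mathbf{u}=0$. Hence the factor $CZ+D=\mp D_nZ$ in \eqref{3.12} is invertible and $T_{\pm}(n,z)(Z)$ is well defined. Expanding \eqref{3.12} with the blocks of \eqref{3.14} gives
\[
T_{\pm}(n,z)(Z)=\Big(D^{-1}(zI-V)Z\pm D^{-1}\Big)(\mp DZ)^{-1}
=\mp D^{-1}\big(zI-V\big)D^{-1}-D^{-1}Z^{-1}D^{-1}.
\]
Symmetry is then immediate: $D$ and $V$ are real symmetric, and $Z^{-1}$ is symmetric because $Z$ is.

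For the imaginary part, the key identity is
\[
\operatorname{Im}(Z^{-1})=-(Z^{-1})^{*}(\operatorname{Im}Z)Z^{-1},
\]
which follows from $Z^{-1}-(Z^{-1})^{*}=(Z^{-1})^{*}(Z^{*}-Z)Z^{-1}$. Hence $-\operatorname{Im}(Z^{-1})>0$, since it is a congruence of $\operatorname{Im}Z>0$ by an invertible matrix. Because $D$ is real, conjugation by $D^{-1}$ commutes with taking imaginary parts, and we obtain
\[
\operatorname{Im}T_{\pm}(n,z)(Z)=D^{-1}\Big(\mp(\operatorname{Im}z)I-\operatorname{Im}(Z^{-1})\Big)D^{-1}.
\]
For $T_-$ this equals $D^{-1}\big((\operatorname{Im}z)I-\operatorname{Im}(Z^{-1})\big)D^{-1}$. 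This is positive definite for every $z\in\mathbb{C}_+\cup\mathbb{R}$, since it is the sum of a positive semidefinite and a positive definite matrix, conjugated by the invertible $D^{-1}$. For $T_+$ with $z\in\mathbb{R}$, the same argument applies verbatim.

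The main obstacle is $T_+$ with $\operatorname{Im}z>0$. In that case the term $-(\operatorname{Im}z)I$ has the wrong sign, and positivity does not follow from this computation. I do not expect it to hold in general. By Lemma~\ref{84}, the natural map along the $+$ orientation is the inverse one, $M\mapsto-(D_nMD_n+zI-V_n)^{-1}$, which sends $M_+(n)$ back to $M_+(n-1)$.

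My first attempt would be to use Lemma~\ref{85}(2) to transfer the $T_-$ result to $T_+$. Conjugation by $\operatorname{diag}(I,-I)$ acts on $Z$ as $Z\mapsto-Z$, which gives
\[
T_+(n,z)(Z)=-T_-(n,z)(-Z).
\]
The difficulty is that $-Z\notin\mathfrak{S}_l$, so this identity does not directly settle the case.

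I would therefore prove the $T_+$ case for $z\in\mathbb{R}$ as above. For $z\in\mathbb{C}_+$, I would verify whether the statement is intended with the convention that $T_+$ acts through this reflection on $-\overline{\mathfrak{S}_l}$, or via its inverse. If the statement is meant literally for $T_+$ with $\operatorname{Im}z>0$, a small scalar example ($l=1$ and $Z=i\varepsilon$ with $\varepsilon$ large) should be checked before committing, since it would show that the literal claim fails in that case.
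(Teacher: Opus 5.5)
Your computation is correct, and your suspicion about $T_+$ is justified; you should commit to it rather than hedge. For $T_-$ your argument is the paper's argument in a different form. The paper writes $T_-(n,z)=T_DT_VT_zT_J$ and uses two facts: $T_D,T_V,T_J\in Sp(2l,\mathbb{R})$ preserve $\mathfrak{S}_l$, and the translation $T_z\colon Z\mapsto Z+zI$ preserves it when $\operatorname{Im}z\ge 0$. You instead compute the composite $\mp D_n^{-1}(zI-V_n)D_n^{-1}-D_n^{-1}Z^{-1}D_n^{-1}$ directly and read off $\operatorname{Im}T_\pm(n,z)(Z)=D_n^{-1}\big(\mp(\operatorname{Im}z)I-\operatorname{Im}(Z^{-1})\big)D_n^{-1}$. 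Your route checks explicitly that $CZ+D=\mp D_nZ$ is invertible. It also makes the sign of the $z$-term visible, which is exactly what the paper's reduction hides.

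The paper handles $T_+$ by noting that $T_+$ and $T_-$ are conjugate via $\operatorname{diag}(I,-I)$ (Lemma~\ref{85}(2)), and concludes that it suffices to treat $T_-$. That step is invalid for the reason you give. The matrix $\operatorname{diag}(I,-I)$ acts by $Z\mapsto -Z$, which maps $\mathfrak{S}_l$ onto $-\mathfrak{S}_l$. So the identity $T_+(n,z)(Z)=-T_-(n,z)(-Z)$ only shows that $T_+(n,z)$ preserves $-\mathfrak{S}_l$. Equivalently, conjugating the factorization turns the translation $Z\mapsto Z+zI$ into $Z\mapsto Z-zI$.

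Your scalar test settles it: for $l=1$, $D_n=1$, $V_n=0$, $z=i$ and $Z=2i$, one gets $T_+(n,i)(2i)=-i-\frac{1}{2i}=-\frac{i}{2}\notin\mathbb{C}_+$. So the lemma holds only for $T_-$ with $z\in\mathbb{C}_+\cup\mathbb{R}$, and for $T_+$ with $z\in\mathbb{R}$ (where $T_+(n,z)\in Sp(2l,\mathbb{R})$). That is precisely what you prove.

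On the $+$ side with $\operatorname{Im}z>0$, the map that does preserve $\mathfrak{S}_l$ is the inverse $M\mapsto-(D_nMD_n+zI-V_n)^{-1}$, as you observed from Lemma~\ref{84}. Indeed $\operatorname{Im}(D_nMD_n+zI-V_n)=D_n(\operatorname{Im}M)D_n+(\operatorname{Im}z)I>0$, and $W\mapsto -W^{-1}$ preserves $\mathfrak{S}_l$.

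Your restricted version is also all the paper uses later. Theorem~\ref{91} and Corollary~\ref{DDEC} need only $T_-$ with $\operatorname{Im}z>0$, and Theorem~\ref{25} applies $P_+(n,t)$ only at real $t$.
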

    
    \begin{proof}
      The second identity in Lemma \ref{85} shows that for each $n\in\mathbb{Z}$ and each $z\in\mathbb{C}_+$, $T_{+}(n,z)$ and $T_{-}(n,z)$ are conjugated. Therefore, it is enough to show that $T_{-}(n,z)(Z)\in \mathfrak{S}_l$. Write the transfer matrix $T_{-}(n,z)$ as the following product of matrices:
    \[ T_{-}(n,z)=\begin{pmatrix}D_n^{-1}&0\\ 0&D_n\end{pmatrix}\begin{pmatrix}I&-V_n\\ 0&I\end{pmatrix}\begin{pmatrix}I&zI\\ 0&I\end{pmatrix}\begin{pmatrix}0&-I\\ I&0\end{pmatrix}=:T_{D}T_{V}T_{z}T_{J}. \]

          Since $J$ is a symplectic matrix, so it is $T_J$. Moreover, 
          $T_{D}, T_{V}\in Sp(2l,\mathbb{R});$ hence $T_{J}(Z)$, $T_{V}(Z)$, $T_{D}(W)\in \mathfrak{S}_l$ for each $Z,W\in\mathfrak{S}_l$ (see the previous footnote).

          Now, we affirm that for each $Z\in\mathfrak{S}_l$, $T_{z}(Z)\in\mathfrak{S}_l$. In order to prove this, note that 
          $$ T_z(Z) = (I \cdot Z + zI)(0 \cdot Z + I)^{-1} = Z + zI.$$ By decomposing $Z$ into its real and imaginary parts as $Z = X + iY$, and by writing $z = \text{Re}(z) + i\text{Im}(z)$, one gets $$ T_z(Z) = X + iY + (\text{Re}(z)I + i\text{Im}(z)I) = (X + \text{Re}(z)I) + i(Y + \text{Im}(z)I). $$

          The resulting real part, $X + \text{Re}(z)I$, is symmetric,  since both $X$ and $I$ are symmetric. Now, since 
          $Y > 0$ and $\text{Im}(z)I \ge 0$, it follows that $Y + \text{Im}(z)I>0$. 
          Consequently, $T_z(Z) \in \mathfrak{S}_l$. This concludes the proof, since $$T_-(n,z)(Z)=(T_D\circ T_V\circ T_z)(T_J(Z))=(T_D\circ T_V)(T_z(W))=T_D(T_V(U))=T_D(K) \in \mathfrak{S}_l,$$ with $W=T_J(Z) \in \mathfrak{S}_l$, $U=T_z(W) \in \mathfrak{S}_l$, $K=T_V(U) \in \mathfrak{S}_l$.
    \end{proof}
    
    \vspace{0.5cm}
    Next, we define a notion of distance between Weyl-Titchmarsh $m$-functions evaluated at $z\in\mathbb{C}_+$: let 
    $d_{\infty}:\mathfrak{S}_l\times\mathfrak{S}_l\rightarrow\mathbb{R}$ be given by the law
    \begin{equation} \label{3.15}
    d_{\infty}(Z_{1},Z_{2})=\inf_{Z(t)}\int_{0}^{1}F_{Z(t)}(\dot{Z}(t))dt\,,
    \end{equation}
 where 
    \begin{equation} \label{3.16}
    F_{Z}(\cdot)=\|Y^{-1/2}(\cdot)Y^{-1/2}\|,\ \ \ \ Z=X+iY\ ;
    \end{equation}
  the  infimum is taken over all differentiable paths $Z(t)$ joining $Z_{1}$ to $Z_{2}$. Here, $Y$ is a positive definite matrix, and so $Y^{-1/2}=(Y^{1/2})^{-1}=(Y^{-1})^{1/2}.$ The norm in equation (\ref{3.16}) is the operator norm of matrices acting on $\mathbb{C}^{l}$. It is straightforward to show that $d_{\infty}$ is a metric on $\mathfrak{S}_l$ (the so-called Finsler metric), from which follows that $(\mathfrak{S}_l,d_{\infty})$ is a metric space (see~\cite{Froese} for more details).
    
    \begin{lema}\label{89}
      Let $D_0 \in \mathbb{R}^{l \times l}$ be a symmetric invertible matrix, let $A\in \mathbb{R}^{l \times l}$ be positive definite and let 
      $\lambda>0$. Then,
    \[ A^{1/2}(A+\lambda I)^{-1/2}=(A(A+\lambda I)^{-1})^{1/2}=((I+\lambda A^{-1})^{-1})^{1/2}. \]
    Moreover, if $D_0^2-\lambda A$ is also positive definite, then
    \[ \|A^{1/2}(A+\lambda I)^{-1/2}\|^{2}<\frac{\|D_0\|^2}{\|D_0\|^2+\lambda^{2}}\ . \]
    \end{lema}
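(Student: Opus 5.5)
The plan is to diagonalize $A$ and reduce everything to scalar inequalities on its eigenvalues. Since $A$ is real symmetric and positive definite, I write $A=Q\,\mathrm{diag}(a_1,\ldots,a_l)\,Q^t$ with $Q$ orthogonal and every $a_j>0$. Then $A+\lambda I=Q\,\mathrm{diag}(a_j+\lambda)\,Q^t$. So $A$, $A+\lambda I$, $A^{-1}$, $I+\lambda A^{-1}$ and all their positive square roots and inverses (defined by continuous functional calculus) are diagonal in the same orthonormal basis, and in particular they commute.

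For the first chain of identities I compare all three expressions in this common eigenbasis:
\begin{itemize}
\item $A^{1/2}(A+\lambda I)^{-1/2}$ has eigenvalues $a_j^{1/2}(a_j+\lambda)^{-1/2}$;
\item $(A(A+\lambda I)^{-1})^{1/2}$ has eigenvalues $(a_j/(a_j+\lambda))^{1/2}$;
\item $((I+\lambda A^{-1})^{-1})^{1/2}$ has eigenvalues $(1+\lambda/a_j)^{-1/2}=(a_j/(a_j+\lambda))^{1/2}$.
\end{itemize}
These agree, so the three matrices coincide. The only point needing care is that $A(A+\lambda I)^{-1}$ is positive definite. This holds because it is a product of commuting positive definite matrices, so its square root is unambiguous.

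For the norm estimate, the matrix $A^{1/2}(A+\lambda I)^{-1/2}$ is symmetric and positive definite. Hence its operator norm squared is $\max_j a_j/(a_j+\lambda)=a_{\max}/(a_{\max}+\lambda)$, because $t\mapsto t/(t+\lambda)$ is increasing on $(0,\infty)$. Next I use the hypothesis $D_0^2-\lambda A>0$. Since $D_0$ is real symmetric, $D_0^2\le \|D_0\|^2 I$, so $\lambda A<D_0^2\le\|D_0\|^2 I$. Testing this on a unit eigenvector of $A$ for $a_{\max}$ gives the strict inequality $\lambda a_{\max}<\|D_0\|^2$, that is, $a_{\max}<\|D_0\|^2/\lambda$.

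Finally, by strict monotonicity of $t\mapsto t/(t+\lambda)$,
\[
\|A^{1/2}(A+\lambda I)^{-1/2}\|^2=\frac{a_{\max}}{a_{\max}+\lambda}<\frac{\|D_0\|^2/\lambda}{\|D_0\|^2/\lambda+\lambda}=\frac{\|D_0\|^2}{\|D_0\|^2+\lambda^2}.
\]
There is no real obstacle here. The only step that needs care is keeping the inequality strict, which comes from the strict positivity of $D_0^2-\lambda A$ evaluated on the top eigenvector of $A$.
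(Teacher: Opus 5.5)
Your proposal is correct and follows essentially the same route as the paper: functional calculus (here, simultaneous diagonalization of $A$) gives the three identities, and the norm bound follows by bounding the top eigenvalue of $A$ strictly by $\|D_0\|^2/\lambda$ and then using strict monotonicity of $t\mapsto t/(t+\lambda)$. Your justification of the strict eigenvalue bound is a slightly more explicit version of a step the paper only asserts: you use $D_0^2\le\|D_0\|^2 I$ and test $D_0^2-\lambda A>0$ on a unit eigenvector of $A$ for its largest eigenvalue.
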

    
    \begin{proof}
    The first set of identities $$ A^{1/2}(A+\lambda I)^{-1/2}=(A(A+\lambda I)^{-1})^{1/2}=((I+\lambda A^{-1})^{-1})^{1/2} $$ follows from the continuous functional calculus for self-adjoint operators: since $A$ is a positive definite matrix and $\lambda > 0$, the operator $A+\lambda I$ is invertible and $\sigma(A+\lambda I)\subset(\lambda, \infty)$. Thus, the functions $f(x) = \sqrt{x(x+\lambda)^{-1}}$ and $g(x) = \sqrt{(1+\lambda x^{-1})^{-1}}$ are well defined on the spectrum of $A$. 
    
    Now,  note that since $A$ is positive definite, $\sigma(A)\subset\mathbb{R}_+$. Moreover, 
   the fact that $D_0^2 - \lambda A$ is  positive definite implies that $A < \lambda^{-1}D_0^2$. Consequently, the eigenvalues of $A$ are strictly bounded by the norm of $\lambda^{-1}D_0^2$:$$ \sigma < \frac{\|D_0\|^2}{\lambda}. $$
    
   Now, consider the squared norm of the operator $A^{1/2}(A+\lambda I)^{-1/2}$. Since $A$ is positive definite, one has 
   $$ \|A^{1/2}(A+\lambda I)^{-1/2}\|^2 = \|A(A+\lambda I)^{-1}\| = \sup_{\sigma \in \sigma(A)} \frac{\sigma}{\sigma + \lambda}. $$
    
    The function $h(\sigma) = \frac{\sigma}{\sigma + \lambda}$ is strictly increasing for $\sigma > 0$, given that $h'(\sigma) = \frac{\lambda}{(\sigma + \lambda)^2}>0$ for each $\sigma>0$. Therefore, the maximum value of $h(\sigma)$ is $h(\sigma_{\max})$, with 
    $\sigma_{max} < \|D_0\|^2/\lambda$:$$ \|A(A+\lambda I)^{-1}\| < h\left(\frac{\|D_0\|^2}{\lambda}\right) = \frac{\|D_0\|^2/\lambda}{\|D_0\|^2/\lambda + \lambda}=\frac{\|D_0\|^2}{\|D_0\|^2+\lambda^{2}}.$$ 
       \end{proof}
    
    For each $T\in SL(2l,\mathbb{R})$, when considering the map from (\ref{3.12}) as a group action, [\ref{McBride}] and [\ref{Freitas}] have shown that this action is, in fact, distance preserving with respect to the metric $d_{\infty}$. 
    
    \begin{teo}\label{88}
    Let $T\in SL(2l,\mathbb{R})$ and $W_{1}, W_{2}\in\mathfrak{S}_l$. Then,
    \[ d_{\infty}(T(W_{1}),T(W_{2}))=d_{\infty}(W_{1},W_{2}). \]
    \end{teo}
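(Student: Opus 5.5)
The plan is to show that the map $W\mapsto T(W)$ preserves the Finsler norm \eqref{3.16} of each tangent vector. Then every differentiable path has the same length as its image, and taking infima gives the claim. Throughout I read the hypothesis as $T\in Sp(2l,\mathbb{R})$, as explained in the footnote after Lemma~\ref{85}. This is the setting in which $Z\mapsto T(Z)$ is a well-defined action on $\mathfrak{S}_l$; for $l=1$ it coincides with $SL(2,\mathbb{R})$. Write $T=\begin{pmatrix}A&B\\ C&D\end{pmatrix}$ with $A^tC=C^tA$, $B^tD=D^tB$ and $A^tD-C^tB=I$.

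\textbf{Step 1 (derivative of the action).} Let $W(t)$ be a differentiable path in $\mathfrak{S}_l$ and set $Q(t)=CW(t)+D$, which is invertible because the action is well defined. I will show that
\[
\frac{d}{dt}T(W(t))=(Q^{-1})^{t}\,\dot W\,Q^{-1}.
\]
Differentiating $T(W)=(AW+B)Q^{-1}$ gives $(A-T(W)C)\dot W Q^{-1}$. The symmetry of $T(W)$ gives $T(W)=(WC^t+D^t)^{-1}(WA^t+B^t)$. Combining this with the symplectic relations yields $(WC^t+D^t)(A-T(W)C)=I$, that is, $A-T(W)C=(Q^t)^{-1}$. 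This is the computation I expect to be the main obstacle: it is routine but sign- and transpose-sensitive. The symmetric case $l=1$ can serve as a check.

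\textbf{Step 2 (imaginary parts).} By the formula recalled in the footnote, $Y':=\operatorname{Im}T(W)=(Q^*)^{-1}YQ^{-1}$, where $Y=\operatorname{Im}W$. Since $T(W)$ is complex symmetric, $Y'$ is a real symmetric positive definite matrix. Hence $Y'^{\pm1/2}$ are real symmetric, and the same holds for $Y^{\pm1/2}$.

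\textbf{Step 3 (unitary conjugation).} Set $U:=Y^{1/2}Q^{-1}Y'^{-1/2}$. Then
\[
U^*U=Y'^{-1/2}(Q^*)^{-1}YQ^{-1}Y'^{-1/2}=I,
\]
so $U$ is unitary. Using Step 2 to transpose the real symmetric factors, one gets $U^t=Y'^{-1/2}(Q^{-1})^tY^{1/2}$. Therefore
\[
Y'^{-1/2}(Q^{-1})^t\dot W Q^{-1}Y'^{-1/2}=U^t\bigl(Y^{-1/2}\dot W Y^{-1/2}\bigr)U .
\]
The transpose of a unitary matrix is unitary, and the operator norm is unitarily invariant. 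Hence $F_{T(W)}\bigl(\tfrac{d}{dt}T(W)\bigr)=F_W(\dot W)$ pointwise in $t$.

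\textbf{Step 4 (conclusion).} Integrating Step 3 shows that $T$ maps every path from $W_1$ to $W_2$ to a path from $T(W_1)$ to $T(W_2)$ of the same length. This gives $d_\infty(T(W_1),T(W_2))\le d_\infty(W_1,W_2)$. Since $T^{-1}\in Sp(2l,\mathbb{R})$ acts as the inverse map on $\mathfrak{S}_l$, applying the same inequality to $T^{-1}$ gives the reverse inequality, and hence equality.
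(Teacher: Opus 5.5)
Your proof is correct. Note that the paper gives no argument for Theorem~\ref{88} at all: it attributes the result to the works of McBride and Freitas, so you are supplying a proof the paper omits.

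Reading the hypothesis as $T\in Sp(2l,\mathbb{R})$ is in fact necessary for $l>1$. For example, $T=\operatorname{diag}(A,D)$ with $\det A\det D=1$ but $A^tD\neq I$ gives $T(W)=AWD^{-1}$, which is generally not symmetric, so the left-hand side is undefined. Your reading covers every place where the paper invokes the theorem: the factors $T_D,T_V,T_J$ in the proof of Theorem~\ref{91}, and $P_+(n,t)$ with $t\in\mathbb{R}$ in Corollary~\ref{88P}. The transpose-sensitive identity of Step~1 checks out, since $(WC^t+D^t)(A-T(W)C)=W(C^tA-A^tC)+(D^tA-B^tC)=I$.

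Within the paper's own toolkit, there are two more obvious routes:
\begin{enumerate}
\item Factor $T$ into generators of $Sp(2l,\mathbb{R})$, as in the proof of Lemma~\ref{87}, and check invariance separately for $Z\mapsto Z+B$, $Z\mapsto AZA^t$ and $Z\mapsto -Z^{-1}$.
\item Use the closed formula $d_\infty(Z_1,Z_2)=2\ln\|S_{Z_1}^{-1}S_{Z_2}\|$ quoted in the appendix. One then needs that $S_{T(Z)}^{-1}TS_Z$ fixes $iI$ and hence lies in $Sp(2l,\mathbb{R})\cap O(2l,\mathbb{R})$.
\end{enumerate}
Your unitary-conjugation argument needs neither a generating set for $Sp(2l,\mathbb{R})$ nor the explicit distance formula, and it treats every $T$ in a single computation. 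It also gives the stronger pointwise statement that $T$ preserves the Finsler norm $F$ itself, from which equality of distances follows by your Step~4.
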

%
    %
   %
%
    %
    %
    
    The result stated in Theorem \ref{88} may not be true if $T\in SL(2l,\mathbb{C})$. Namely, for the 
    transfer matrices $T_{-}(n)$, the map defined by (\ref{3.12}) is in fact distance decreasing. 
    
    \begin{teo}\label{91}
    Let $z\in\mathbb{C}_{+}$, $y=\operatorname{Im}z>0$, and set $W_{j}=T_{-}(0,z)Z_{j}$ for $Z_{1},Z_{2}\in\mathfrak{S}_l$. Then, for each $n\in\mathbb{N}$,
    \begin{equation}
    d_{\infty}(T_{-}(n,z)W_{1},T_{-}(n,z)W_{2})\le\frac{\|D_0\|^2}{\|D_0\|^2+y^{2}}\ d_{\infty}(W_{1},W_{2}).
    \end{equation}
    \end{teo}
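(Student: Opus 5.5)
The plan is to use the factorization from the proof of Lemma~\ref{87}, $T_{-}(n,z)=T_DT_VT_zT_J$, together with Theorem~\ref{88}. The factors $T_D$, $T_V$ and $T_J$ lie in $Sp(2l,\mathbb{R})$ and therefore act as isometries for $d_\infty$. Consequently
\[
d_\infty(T_-(n,z)W_1,T_-(n,z)W_2)=d_\infty\big(T_J(W_1)+zI,\;T_J(W_2)+zI\big),
\]
and only the translation $T_z\colon Z\mapsto Z+zI$ needs to be estimated. The decisive information is that the $W_j$ are already images of $T_-(0,z)$. Therefore $U_j:=T_J(W_j)$ is not an arbitrary point of $\mathfrak{S}_l$, and its imaginary part should be controlled in terms of $D_0$ and $y$.

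\textbf{Step 1: an a priori bound for $\operatorname{Im}U_j$.} Write $W_j=T_D(K_j)$ with $K_j=T_V(T_z(T_J Z_j))$. Then $W_j=D_0^{-1}K_jD_0^{-1}$ and $\operatorname{Im}K_j\ge yI$. Since $T_J(W)=-W^{-1}$, we get $U_j=-D_0K_j^{-1}D_0$. Write $K_j=X+iY$ with $Y\ge yI$. Then
\[
\operatorname{Im}(-K_j^{-1})=(X+iY)^{-*}\,Y\,(X+iY)^{-1},
\]
and a standard argument (the bound $\operatorname{Im}(-K^{-1})\le (\operatorname{Im}K)^{-1}$) gives $\operatorname{Im}(-K_j^{-1})\le Y^{-1}\le y^{-1}I$. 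Hence $\operatorname{Im}U_j\le y^{-1}D_0^2$, that is, $D_0^2-y\operatorname{Im}U_j\ge 0$. I expect strictness, as required by Lemma~\ref{89}, either from the explicit formula or by a harmless perturbation. A small adjustment may also be needed when $n\neq 0$ so that $D_0$ is the relevant matrix. As stated, the estimate involves only $T_-(0,z)$ producing the constraint, and the subsequent translation by $zI$ is independent of $n$ after the isometries are removed.

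\textbf{Step 2: the translation contracts on the constrained region.} Take a path $U(t)$ joining $U_1$ to $U_2$ whose length is close to $d_\infty(U_1,U_2)=d_\infty(W_1,W_2)$. I would like it to stay in the convex region $\{D_0^2-y\operatorname{Im}U>0\}$; for instance, first use the straight segment, which stays in this convex set, and argue via near-geodesics. This is the main obstacle, because the infimum in \eqref{3.15} ranges over all paths. A fallback is to apply the whole argument to paths of the form $T_-(0,z)(Z(t))$, with $Z(t)$ joining $Z_1$ and $Z_2$, and use that $T_-(0,z)$ is itself non-expanding by the same factorization; this way every point on the path automatically satisfies the Step~1 bound.

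Along such a path, $U(t)+zI$ has the same derivative $\dot U$ and imaginary part $A+yI$, where $A=\operatorname{Im}U(t)$. Then
\[
\|(A+yI)^{-1/2}\dot U(A+yI)^{-1/2}\|\le \|A^{1/2}(A+yI)^{-1/2}\|^2\,\|A^{-1/2}\dot U A^{-1/2}\|,
\]
since $A$ and $A+yI$ commute. Lemma~\ref{89} with $\lambda=y$ bounds the first factor by $\|D_0\|^2/(\|D_0\|^2+y^2)$. Integrating along the path, taking the infimum, and invoking the isometries of Theorem~\ref{88} then yields the claimed inequality.
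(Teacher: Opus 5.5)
Your reduction is the same as the paper's: factor $T_-(n,z)=T_DT_VT_zT_J$, remove the real symplectic factors using Theorem~\ref{88}, bound $\operatorname{Im}U$ using that $W_j$ lies in the image of $T_-(0,z)$, and estimate the translation pointwise with Lemma~\ref{89}. Step~1 is fine: strictness is automatic, since $\operatorname{Im}K_j=yI+\operatorname{Im}(-Z_j^{-1})>yI$, and $n$ plays no role.

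\textbf{The gap.} It is the one you flag in Step~2, and neither of your remedies closes it.

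Remedy (a) cannot work, because the endpoint bound $\operatorname{Im}U_j<y^{-1}D_0^2$ alone does not imply the conclusion. Take $l=1$, $z=iy$ and $U_{1,2}=\pm R+i\varepsilon$ with $\varepsilon<y^{-1}D_0^2$. Then $d_\infty(U_1,U_2)\approx 2\log(2R/\varepsilon)$ and $d_\infty(U_1+iy,U_2+iy)\approx 2\log(2R/(\varepsilon+y))$, and their ratio tends to $1$ as $R\to\infty$. The straight segment does stay in your affinely convex region, but it is not length-minimizing, and minimizing paths between points of that region can leave it.

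Remedy (b) uses the right region. However, non-expansiveness of $T_-(0,z)$ only bounds the lengths of paths $T_-(0,z)(Z(t))$ by $d_\infty(Z_1,Z_2)$. Since $d_\infty(Z_1,Z_2)\ge d_\infty(W_1,W_2)$, the inequality goes the wrong way.

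\textbf{What is missing.} You need a convexity statement: some $d_\infty$-minimizing path from $W_1$ to $W_2$ stays in $T_-(0,z)(\mathfrak{S}_l)=\{W\in\mathfrak{S}_l:\operatorname{Im}W>yD_0^{-2}\}$. The paper does not supply this either. It asserts that every path between $W_1$ and $W_2$ has the form $T_-(0,z)Z(t)$, which is false: the image is a proper open subset of $\mathfrak{S}_l$, and paths can leave it.

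One way to fill the gap:
\begin{itemize}
\item After the real symplectic isometry $W\mapsto y^{-1}D_0WD_0$, the region becomes $\{\operatorname{Im}W>I\}=\bigcap_{|v|=1}\{v^{T}(\operatorname{Im}W)^{-1}v<1\}$.
\item Up to a positive factor, $\log v^{T}(\operatorname{Im}W)^{-1}v$ is a Busemann function of the symmetric space $\mathfrak{S}_l$; for $l=1$ the region is just a horodisk. Hence the region is convex for the Riemannian geodesics of $\mathfrak{S}_l$.
\item These geodesics are $d_\infty$-minimizing. Move the endpoints to $iI$ and $iD$ with $D$ diagonal; the curve $t\mapsto iD^t$ has $d_\infty$-length $\|\log D\|$. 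Conversely, $\|Y^{-1/2}\dot ZY^{-1/2}\|\ge\|Y^{-1/2}\dot YY^{-1/2}\|$ with $Y=\operatorname{Im}Z$, and the standard lower bound for Thompson's metric gives length at least $\|\log D\|$ for any path.
\item Take $U(t)=T_J(W(t))$ with $W(t)$ this geodesic. Your Step~1 bound then holds at every $t$, and your Step~2 computation finishes the proof.
\end{itemize}
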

    
    \begin{proof}
      As in the proof of Lemma \ref{87}, write $T_{-}(n,z)=T_{D}T_{V}T_{z}T_{J}$. Since $T_{D},T_{V},T_{J}\in SL(2l, \mathbb{R})$ and since $T_{z}$ has range in $\mathfrak{S}_l$, it follows from Theorem \ref{88} and Lemma \ref{87} 
      that
    \begin{align} \label{3.17}
    d_{\infty}(T_{-}(n,z)(W_{1}), T_{-}(n, z)(W_{2})) &= d_{\infty}(T_{D}T_{V}T_{z}T_{J} (W_{1}), T_{D}T_{V}T_{z}T_{J}(W_{2})) \nonumber \\
    &= d_{\infty}(T_{z}T_{J}(W_{1}),T_{z}T_{J}(W_{2})).
    \end{align}
    Moreover, since $T_{J}\in SL(2l,\mathbb{R})$, one gets
    \begin{equation} \label{3.18}
    d_{\infty}(T_{J}(W_{1}),T_{J}(W_{2}))=d_{\infty}(W_{1},W_{2}).
    \end{equation}

    Let $U_{j}=T_{J}(W_{j}) \in \mathfrak{S}_l$ for $j=1,2$. Thus, $T_{z}(U_{j})=U_{j}+zI$. Let $U(t)$ be a length minimizing path between $U_{1}$ and $U_{2}$, entirely contained within $\mathfrak{S}_l$. Then, $U(t)+zI$ is a path between $U_{1}+zI$ and $U_{2}+zI$, and
    \[ \frac{d}{dt}(U(t)+zI)=\dot{U}(t). \]
    
    Set $V(t)=\operatorname{Im} U(t)>0$, so $\operatorname{Im} (U(t)+zI)=V(t)+yI.$ It follows from the definition of the Finsler metric (\ref{3.16}) that
    \begin{align*}
    F_{(U(t)+zI)}(\dot{U}(t)) &= \|(V(t)+yI)^{-1/2}\dot{U}(t)(V(t)+yI)^{-1/2}\| \\
    &= \|(V(t)+yI)^{-1/2}V(t)^{1/2}V(t)^{-1/2}\dot{U}(t)V(t)^{-1/2}V(t)^{1/2}(V(t)+yI)^{-1/2}\| \\
    &\le \|(V(t)+yI)^{-1/2}V(t)^{1/2}\|\cdot\|V(t)^{-1/2}\dot{U}(t)V(t)^{-1/2}\|\cdot\|V(t)^{1/2}(V(t)+yI)^{-1/2}\|.
    \end{align*}
    
    Since $V(t)^{1/2}$ and $(V(t)+yI)^{-1/2}$ are symmetric matrices, one gets 
    \begin{equation} \label{3.19}
    F_{(U(t)+zI)}(\dot{U}(t))\le\|(V(t)+yI)^{-1/2}V(t)^{1/2}\|^{2}F_{U(t)}(\dot{U}(t)).
    \end{equation}

    Now, since 
    $W_{j}=T_{-}(0,z)Z_{j}$, it follows that for each path $W(t)\in \mathfrak{S}_l$ between $W_1$ and $W_2$, there exists a path $Z(t) \in \mathfrak{S}_l$ such that $W(t) = T_{-}(0,z)Z(t)$; hence, $U(t)$ is of the form $U(t) = T_J(W(t))$. By treating the multiplication $W(t) = T_{-}(0,z)Z(t)$ as the group action and by using the matrix representation of $T_{-}(0,z)$, one gets
    \begin{align*}
    W(t) &= T_{-}(0,z)Z(t) \\ &= \begin{pmatrix}D_0^{-1}(zI-V_0)&-D_0^{-1}\\ D_0&0\end{pmatrix}Z(t) \\ \\[-1.5em] &= (D_0^{-1}(zI-V_0)Z(t) - D_0^{-1})(D_0 Z(t))^{-1} \\ 
    &= D_0^{-1}(zI-V_0)D_0^{-1} - D_0^{-1} Z(t)^{-1} D_0^{-1}.
    \end{align*}

    By taking the imaginary part of both members of this identity and by noting that $\operatorname{Im}(-Z(t)^{-1}) > 0$ for $Z(t) \in \mathfrak{S}_l$, it follows that
    \[ \operatorname{Im} W(t) = y D_0^{-2} + D_0^{-1} \operatorname{Im}(-Z(t)^{-1}) D_0^{-1} > y D_0^{-2}, \]
  and so (by the continuous functional calculus for positive operators),  
    \begin{equation*}
    (\operatorname{Im} W(t))^{-1} < y^{-1} D_0^{2}\ .
    \end{equation*}

    Since $U(t) = T_J(W(t)) = -W(t)^{-1}$, by combining the previous relation with 
    $\operatorname{Im}(-W^{-1}) \le (\operatorname{Im} W)^{-1}$ ,
    one gets
    \[ V(t) = \operatorname{Im} U(t) \le (\operatorname{Im} W(t))^{-1} < y^{-1} D_0^{2}\ ; \]
    then, $D_0^2-yV(t)>0$, and since $V(t)>0$, it follows from 
    Lemma \ref{89} that 
    \[ \|(V(t) + yI)^{-1/2}V(t)^{1/2}\|^2 < \frac{\|D_0\|^2}{\|D_0\|^2 + y^2}\ . \]

    By replacing this into (\ref{3.19}), one gets
    \[ F_{(U(t)+zI)}(\dot{U}(t)) \le \frac{\|D_0\|^2}{\|D_0\|^2 + y^2}\ F_{U(t)}(\dot{U}(t)),  \]
   and integration over $t \in [0,1]$ yields
    \[ \int_{0}^{1} F(U(t)+zI)(\dot{U}(t)) dt \le \frac{\|D_0\|^2}{\|D_0\|^2 + y^2}\int_{0}^{1} F(U(t))(\dot{U}(t)) dt. \]

    Finally, by taking the infimum over all such paths $U(t)$ (remember that $\frac{d}{dt}(U(t)+zI)=\dot{U}(t)$), it follows from~\eqref{3.18} that 
    \begin{eqnarray*}
      d_{\infty}(T_{-}(n, z)W_{1}, T_{-}(n, z)W_{2})&=& d_{\infty}(T_{z}U_{1}, T_{z}U_{2}) \\ &\le& \frac{\|D_0\|^2}{\|D_0\|^2 + y^2}\ d_{\infty}(U_{1}, U_{2})\\
      &=&\frac{\|D_0\|^2}{\|D_0\|^2 + y^2}\ d_{\infty}(W_{1}, W_{2}). \end{eqnarray*}
  
    \end{proof}

    Set, for each $n\in\mathbb{N}$ and each $z\in\mathbb{C}$, $P_{\pm}(n, z) := T_{\pm}(n, z) \cdots T_{\pm}(1, z)$. Since for each $j = 1,\ldots, n$, $T_{\pm}(j, z)\in SL(2l, \mathbb{C})$, and since $SL(2l, \mathbb{C})$ is a group, it follows that $P_{\pm}(n, z) \in SL(2l, \mathbb{C})$ ($SL(2l, \mathbb{R})$ if $z\in\mathbb{R}$). 

    \begin{lema} \label{85A} Let $n\in\mathbb{Z}$ and let $z \in \mathbb{C}$. Then, for each $M\in\mathfrak{S}_l$,
      \[P_+(n,z)M=P_-(n,z)M.\]
      \end{lema}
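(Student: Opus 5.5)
The plan is to reduce the statement to the conjugation identity of Lemma~\ref{85}, part~2, and then check how the conjugating matrix acts through the fractional linear law~\eqref{3.12}. Write
\[
S=\begin{pmatrix}I&0\\ 0&-I\end{pmatrix}, \qquad S^2=I .
\]

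\emph{Step 1 (conjugation of the products).} Lemma~\ref{85} gives $T_-(j,z)=S\,T_+(j,z)\,S$ for every $j$. Since $S^2=I$, the inner factors cancel in the product:
\[
P_-(n,z)=S\,T_+(n,z)S\,S\,T_+(n-1,z)S\cdots S\,T_+(1,z)S=S\,P_+(n,z)\,S .
\]
This can also be set up as an induction on $n$, using $P_\pm(n,z)=T_\pm(n,z)P_\pm(n-1,z)$.

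\emph{Step 2 (passing to the action).} The law~\eqref{3.12} is a left action, $(T_1T_2)(Z)=T_1(T_2(Z))$, wherever it is defined. By Lemma~\ref{87} and induction, every intermediate image $P_\pm(k,z)M$ stays in $\mathfrak{S}_l$, so all the inverses involved exist. Hence
\[
P_-(n,z)M=S\bigl(P_+(n,z)(S(M))\bigr),
\]
and the claim reduces to understanding how $S$ acts. Directly from~\eqref{3.12}, $S(Z)=Z(-I)^{-1}=-Z$. Moreover, $T$ and $-T$ induce the same map, so the action of $S$ cannot be made trivial by rescaling.

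\emph{Step 3 (comparison, and the main obstacle).} Step 2 turns the statement into
\[
P_+(n,z)M=-P_+(n,z)(-M).
\]
I would test this first at $n=1$, writing $D=D_1$, $V=V_1$:
\[
T_+(1,z)M=-D^{-1}(zI-V)D^{-1}-D^{-1}M^{-1}D^{-1},\qquad
T_-(1,z)M=D^{-1}(zI-V)D^{-1}-D^{-1}M^{-1}D^{-1}.
\]
These differ by $2D^{-1}(zI-V)D^{-1}$, which is nonzero for $z\in\mathbb{C}_+$. So the identity cannot hold verbatim for an arbitrary $M\in\mathfrak{S}_l$ under the literal action~\eqref{3.12}, and this is precisely the step I expect to be the obstacle.

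The resolution I would pursue is to read the statement in the sense of Lemma~\ref{90}. There the $\pm$ transfer matrices act on the state vectors $\bigl(G(n+1);\mp D_nG(n)\bigr)$, which differ from each other exactly by the factor $S$, so $P_-$ is meant to act on the $S$-reflected representative of $M$. Under that convention, Step 1 gives equality of the two actions as stated. I would check this reading against the later places where Lemma~\ref{85A} is invoked, to fix the convention precisely. If no consistent reading emerges, I would record the identity in the corrected form
\[
P_-(n,z)M=-P_+(n,z)(-M),
\]
which is what Steps 1--2 establish rigorously.
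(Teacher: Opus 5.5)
Your diagnosis is right: with the action \eqref{3.12} the statement is false, so it cannot be proved as written. Your Steps 1--2 establish the correct replacement, $P_-(n,z)M=-P_+(n,z)(-M)$. A counterexample already appears for $l=1$, $D_1=1$, $V_1=0$, $z=1$, $M=i$: there $T_+(1,1)(i)=-1+i\neq 1+i=T_-(1,1)(i)$.

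The paper's proof is your Step~1 (it derives $S\,P_+(n,z)=P_-(n,z)\,S$ from Lemma~\ref{85}), followed by the chain $-P_+(n,z)M=S\bigl(P_+(n,z)M\bigr)=P_-(n,z)\bigl(S(M)\bigr)=-P_-(n,z)M$. The last equality is invalid: $P_-(n,z)\bigl(S(M)\bigr)=P_-(n,z)(-M)$, and the action is not odd. So drop the ``reading in the sense of Lemma~\ref{90}''. Made precise, it reflects both the input and the output by $S$, which just reproduces your corrected identity rather than the stated one.

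Two smaller points. First, your appeal to Lemma~\ref{87} to keep $P_+(k,z)M$ in $\mathfrak{S}_l$ fails for $\operatorname{Im}z>0$. That lemma is itself wrong for $T_+$ there, because conjugation by $S$ sends $\mathfrak{S}_l$ to $-\mathfrak{S}_l$. For instance, $T_+(1,i)(i)=0$ when $D_1=1$, $V_1=0$, so $P_+(2,i)(i)$ is not even defined. What the corrected identity actually needs is that the intermediate points $P_+(k,z)(-M)$ are well defined. This follows from the one-step identity $T_+(k,z)(-Z)=-T_-(k,z)(Z)$, valid for invertible $Z$, together with the fact that $T_-(k,z)$ maps $\mathfrak{S}_l$ into itself for $z\in\mathbb{C}_+\cup\mathbb{R}$. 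Induction on this one-step identity is also the cleanest proof of $P_+(n,z)(-M)=-P_-(n,z)M$.

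Second, on the later uses you planned to check: Theorem~\ref{25} and Proposition~\ref{33}(a) involve only real $t$, where $P_\pm(n,t)$ are real and so commute with complex conjugation. In both places the paper's chain contains a second invalid sign manipulation that compensates for the misuse of this lemma. The endpoints, $-\overline{P_+(n,t)M_j}=P_-(n,t)(-\overline{M_j})$ and $M_+^W(n,t)=-\overline{M_-^W(n,t)}$, follow directly from your corrected identity, so no special convention is needed.
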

    \begin{proof} It follows from Lemma~\ref{85}-(2) that for each $n\in\mathbb{Z}$ and each $z\in\mathbb{C}$,
\begin{eqnarray*}\left(\begin{array}{cc}
			I & \ 0 \\
			0 & -I \\
	   \end{array}\right)P_+(n,z)&=&\left(\begin{array}{cc}
			I & \ 0 \\
			0 & -I \\
	   \end{array}\right)T_+(n)\left(\begin{array}{cc}
			I & \ 0 \\
			0 & -I \\
	   \end{array}\right)\left(\begin{array}{cc}
			I & \ 0 \\
			0 & -I \\
  \end{array}\right)T_+(n-1)\cdots T_+(1)\\
  &=&T_-(n)\cdots T_-(1)\left(\begin{array}{cc}
			I & \ 0 \\
			0 & -I \\
	   \end{array}\right)= P_-(n,z)\left(\begin{array}{cc}
			I & \ 0 \\
			0 & -I \\
  \end{array}\right),\end{eqnarray*}
and so, 
       $$-P_+(n,z)M=\left(\begin{array}{cc}
			I & \ 0 \\
			0 & -I \\
	   \end{array}\right)P_+(n,z)M=
P_-(n,z)\left(\begin{array}{cc}
			I & \ 0 \\
			0 & -I \\
	   \end{array}\right)M=-P_-(n,z)M\, .$$
      \end{proof}

    The next result is a direct consequence of Theorem~\ref{88}.
    
\begin{cor}\label{88P}
    Let $n\in\mathbb{N}$ and let $t \in \mathbb{R}$. Then, for each $W_{1}, W_{2}\in\mathfrak{S}_l$, 
    \[ d_{\infty}(P_{+}(n, t)W_{1}, P_{+}(n, t)W_2) = d_{\infty}(W_{1},W_{2}). \]
    \end{cor}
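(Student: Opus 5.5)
The plan is to show that, for real spectral parameter $t$, each factor $T_+(j,t)$ belongs to $Sp(2l,\mathbb{R})\subset SL(2l,\mathbb{R})$, so that $P_+(n,t)$ does as well. Theorem~\ref{88} then applies directly to $P_+(n,t)$.

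First I observe that for $z=t\in\mathbb{R}$ every block of
\[
T_+(j,t)=\begin{pmatrix}D_j^{-1}(tI-V_j)& D_j^{-1}\\ -D_j&0\end{pmatrix}
\]
is real, because $D_j$ and $V_j$ are real symmetric matrices and $D_j$ is invertible. Lemma~\ref{85}-(1) gives $T_+(j,t)^tJT_+(j,t)=J$, so $T_+(j,t)\in Sp(2l,\mathbb{R})$. In particular $\det T_+(j,t)=1$, and hence $T_+(j,t)\in SL(2l,\mathbb{R})$. The products of real symplectic matrices are real symplectic, so
\[
P_+(n,t)=T_+(n,t)\cdots T_+(1,t)\in Sp(2l,\mathbb{R}),
\]
since $P_+(n,t)^tJP_+(n,t)=J$ follows by telescoping.

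Next I check that the map $W\mapsto P_+(n,t)W$ is well defined on $\mathfrak{S}_l$ and that it agrees with the composition of the individual fractional linear maps. By the footnote preceding Lemma~\ref{87}, each real symplectic matrix acts on $\mathfrak{S}_l$, and this is a group action. Lemma~\ref{87}, applied with $z=t\in\mathbb{R}$, also gives this for each factor. Consequently $P_+(n,t)W=T_+(n,t)(\cdots T_+(1,t)(W))$, and every intermediate value lies in $\mathfrak{S}_l$.

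Finally I apply Theorem~\ref{88} with $T=P_+(n,t)$. This gives
\[
d_\infty(P_+(n,t)W_1,P_+(n,t)W_2)=d_\infty(W_1,W_2).
\]
One could equally iterate Theorem~\ref{88} $n$ times, once for each factor. The only point needing care is the one flagged in the footnote: Theorem~\ref{88} is phrased for $SL(2l,\mathbb{R})$, but the action on $\mathfrak{S}_l$ is only guaranteed for symplectic matrices. I resolve this by using the symplectic identity from Lemma~\ref{85}, rather than determinant one alone, so that $P_+(n,t)$ is in the class where Theorem~\ref{88} applies. Beyond that I expect no real obstacle.
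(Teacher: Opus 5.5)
Your proposal is correct and follows the paper's route: just before the corollary, the paper notes that $P_{+}(n,t)$ is a real matrix of determinant one for $t\in\mathbb{R}$, and then calls the result a direct consequence of Theorem~\ref{88}. Your extra step, using Lemma~\ref{85}-(1) to place $P_{+}(n,t)$ in $Sp(2l,\mathbb{R})$, is a worthwhile sharpening: for $l>1$ it is the symplectic identity, not $\det=1$, that guarantees the fractional linear action preserves $\mathfrak{S}_l$, so Theorem~\ref{88} legitimately applies.
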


    \begin{cor}\label{DDEC}
    Let $n\in\mathbb{N}$, $z \in \mathbb{C}_{+}$ and $W\in\mathfrak{S}_l$. Then,
    \[ d_{\infty}(M_{-}(n, z), P_{-}(n, z)W) \le \frac{\|D_0\|^{2n}}{\left(\|D_0\|^2+y^{2}\right)^n}\ d_{\infty}(M_{-}(0, z), W). \]
    \end{cor}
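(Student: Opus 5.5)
The plan is to iterate the contraction of Theorem~\ref{91} $n$ times, using Lemma~\ref{90} to rewrite $M_-(n,z)$ as the image of $M_-(0,z)$ under the product of transfer matrices. By Lemma~\ref{90}, $M_-(k,z)=T_-(k,z)M_-(k-1,z)$ for every $k$, so by induction
\[
M_-(n,z)=T_-(n,z)\cdots T_-(1,z)M_-(0,z)=P_-(n,z)M_-(0,z).
\]
For $k=0,\dots,n$ set $A_k=P_-(k,z)M_-(0,z)=M_-(k,z)$ and $B_k=P_-(k,z)W$, with $P_-(0,z)=I$. Then $A_k=T_-(k,z)A_{k-1}$ and $B_k=T_-(k,z)B_{k-1}$. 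By Lemma~\ref{87}, all $A_k,B_k$ lie in $\mathfrak{S}_l$, so every $d_\infty$ below is well defined. The goal is the one-step estimate
\[
d_\infty(A_k,B_k)\le q\,d_\infty(A_{k-1},B_{k-1}),\qquad q=\frac{\|D_0\|^2}{\|D_0\|^2+y^2},
\]
from which the claim follows by induction on $k$ down to $d_\infty(A_0,B_0)=d_\infty(M_-(0,z),W)$.

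For $k\ge2$ I would apply Theorem~\ref{91} directly. Both $A_{k-1}$ and $B_{k-1}$ are of the form $T_-(\cdot,z)(\text{something in }\mathfrak{S}_l)$, so they can play the roles of $W_1,W_2$. The proof of Theorem~\ref{91} uses the specific matrix $T_-(0,z)$ only to obtain the bound $\operatorname{Im}W(t)>yD_0^{-2}$ along a path. For a general index the same computation, applied to $T_-(k-1,z)$, gives
\[
\operatorname{Im}W(t)>yD_{k-1}^{-2},
\]
and therefore a contraction factor $\|D_{k-1}\|^2/(\|D_{k-1}\|^2+y^2)$.

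\textbf{Main obstacle.} The factor in the statement is phrased with $D_0$, while the index-adapted argument naturally produces $D_{k-1}$ at each step. I would reconcile these as follows.
\begin{itemize}
\item Read Theorem~\ref{91} as stated: it asserts the factor $q$ (with $D_0$) for every $n\in\mathbb{N}$, once the points are images of $\mathfrak{S}_l$ under a transfer matrix. Applying it verbatim at every step gives exactly $q^n$.
\item If the index mismatch needs repair, replace $\|D_0\|$ by $\sup_m\|D_m\|$, which is finite by~\eqref{CSP}. The function $s\mapsto s/(s+y^2)$ is increasing, so this only weakens the factor uniformly, and the argument goes through unchanged.
\end{itemize}

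The first step $k=1$ needs separate care, because $A_0=M_-(0,z)$ and $B_0=W$ need not be images under a transfer matrix. For $A_0$ this is no problem: Lemma~\ref{90} gives $A_0=T_-(0,z)M_-(-1,z)$. For an arbitrary $W\in\mathfrak{S}_l$ it may fail. My first attempt is to note that the contraction estimate only requires $\operatorname{Im}U(t)$ to be bounded above along a near-minimizing path. If that cannot be secured for arbitrary $W$, I would fall back to using the non-expansiveness of each $T_-(k,z)$ at this one step; that non-expansiveness follows from~\eqref{3.19} together with $\|(V+yI)^{-1/2}V^{1/2}\|\le1$. Either way, I would then check that the exponent $n$ in the statement is attained, which is the step I expect to need this refinement.
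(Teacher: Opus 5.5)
Your route is the paper's own. The paper writes $M_-(n,z)=P_-(n,z)M_-(0,z)$ and calls the bound a direct consequence of Theorem~\ref{91}. The two obstacles you flag are exactly what that one-line proof passes over, and your proposal does not get past either of them.

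\emph{The index of $D$.} Your first reconciliation misreads Theorem~\ref{91}. Its hypothesis is $W_j=T_-(0,z)Z_j$: the points must lie in the range of $T_-(0,z)$ specifically, and the factor $\|D_0\|^2/(\|D_0\|^2+y^2)$ comes precisely from $\operatorname{Im}W(t)>yD_0^{-2}$. Your $A_{k-1},B_{k-1}$ lie in the range of $T_-(k-1,z)$, so at step $k\ge2$ the honest factor is $\|D_{k-1}\|^2/(\|D_{k-1}\|^2+y^2)$. Replacing $\|D_0\|$ by $\sup_m\|D_m\|$ proves a weaker inequality, not the stated one.

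\emph{The first step.} At $k=1$ the point $B_0=W$ is arbitrary, so only non-expansiveness is available. You are then left with $n-1$ contraction factors and an unresolved exponent.

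Neither gap can be closed, because the inequality is false as stated. Take $l=1$, $D_k\equiv1$, $V_k\equiv0$, $z=iy$. Then $M_-(k,z)\equiv ia$ with $a=(y+\sqrt{y^2+4})/2=y+1/a$, $d_\infty$ is the hyperbolic metric, and $P_-(1,z)(is)=i(y+1/s)$. For $W=is$ with $0<s<a$,
\[
\frac{d_\infty\big(M_-(1,z),P_-(1,z)W\big)}{d_\infty\big(M_-(0,z),W\big)}=\frac{\ln\big((y+1/s)/a\big)}{\ln(a/s)}\longrightarrow1\qquad(s\downarrow0),
\]
while the claimed factor is $q=1/(1+y^2)$. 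The same example with $y=10$, $n=2$, $s=0.1$ gives a ratio of about $10^{-3}>q^2\approx10^{-4}$.

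The $D_0$-only constant also fails for $n=2$ even for $W$ near $M_-(0,z)$, which lies in the range of $T_-(0,z)$. With $D_0=\delta$ small and $D_k=1$ for $k\neq0$, one has $M_-(0,z)=ia\delta^{-2}$. The two-step contraction there is of order $\delta^2$, while $q^2$ is of order $\delta^4$.

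What your iteration does give, granting Theorem~\ref{91} with $T_-(0,z)$ replaced by $T_-(k-1,z)$ at step $k\ge2$ and using non-expansiveness at step $1$, is
\[
d_\infty\big(M_-(n,z),P_-(n,z)W\big)\le\prod_{k=1}^{n-1}\frac{\|D_k\|^2}{\|D_k\|^2+y^2}\;d_\infty\big(M_-(0,z),W\big)\le\Big(\frac{\bar D^2}{\bar D^2+y^2}\Big)^{n-1}d_\infty\big(M_-(0,z),W\big),
\]
with $\bar D=\sup_k\|D_k\|<\infty$ by~\eqref{CSP}. This still decays geometrically, uniformly for $z$ in compact sets. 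That is all the proof of Theorem~\ref{25} uses. You should state and prove this corrected version rather than aim for the exponent $n$ with $D_0$.
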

    
    \begin{proof}
      It follows from Lemmas~\ref{84}, \ref{90} and Proposition \ref{86} that for each $z \in \mathbb{C}_{+}$ and each $n \in \mathbb{N}$, $M_{-}(n, z) = P_{-}(n, z)M_{-}(0, z)$, with 
      $M_{-}(n,z) \in \mathfrak{S}_l$. The result is now a direct consequence of 
      Theorem \ref{91}.
    \end{proof}


    \section[Harmonic measures and value distribution]{Matrix-valued harmonic measures, value distribution and convergence}\label{108}

   This section shifts the focus from the geometric contraction of transfer matrices to a measure-theoretic description of the boundary behavior (as $\epsilon\downarrow 0$) of the related $m$-functions. More specifically, by recalling the matrix-valued harmonic measures defined in Introduction, we provide the tools needed to analyze how the iterative dynamics in the Siegel upper half-plane translates into specific value distributions for the $m$-functions near the real axis.

   \vspace{0.3cm}

    \subsection{Matrix-valued harmonic measures and value distribution}
   
   Recall from (\ref{117}) the \textit{matrix-valued harmonic measures}:  for each $S \in \mathcal{B}(\mathbb{R})$, each $z\in\mathbb{C}_+$ and each Herglotz matrix-valued function $F\in\mathcal{H}$,
    $$\omega_{F(z)}^{I}(S):= \frac{1}{\pi}\int_S \operatorname{Im}\Big(\lambda I-i((\operatorname{Im}F(z))^{1/2}+\text{I})\Big)^{-1}\ d\lambda$$
    and
    $$\omega_{F(z)}^{R}(S):= \frac{1}{\pi}\int_S \operatorname{Im}\left(\frac{1}{\Vert\operatorname{Im}F(z)+I\Vert}\operatorname{Re}F(z)-\lambda I+iI\right)^{-1}\ d\lambda.$$

    \begin{lema}\label{47}
      Let $A, S \in \mathcal{B}(\mathbb{R})$, with $|A|<\infty$, let $F\in\mathcal{H}$  and let $\alpha=R,I$ stand for real or imaginary. Then, 
      the following identities hold: 
      
    \begin{enumerate}
    \item $$\int_A\omega_{F(t)}^\alpha(S)\ dt=\int_S\rho_\alpha^{(\lambda)}(A)\ \frac{1}{1+\lambda^2}\ d\lambda,$$ 
        where $d\rho_\alpha^{(\lambda)}(A)=(1+t^2)\chi_A(t)d\upsilon_\alpha^{(\lambda)}(t)$ ($\upsilon_\alpha^{(\lambda)}$ stands for the measure in the Herglotz representation of $F_\alpha^{(\lambda)}$), and for each $z\in\mathbb{C}_+$, $\lambda\in S$,
 
 $$F_I^{(\lambda)}(z)=(I+i\lambda((\operatorname{Im}F(z))^{1/2}+I))(\lambda I-i((\operatorname{Im}F(z))^{1/2}+I))^{-1}\,,$$
 $$F_R^{(\lambda)}(z)= \left(I+\lambda\left(\frac{1}{\Vert\operatorname{Im}F(z)+I\Vert}\operatorname{Re}F(z)+iI\right)\right)\left(\lambda I-\left(\frac{1}{\Vert\operatorname{Im}F(z)+I\Vert}\operatorname{Re}F(z)+iI\right)\right)^{-1}.$$ 
    \item For each $z=\zeta + i\lambda\in\mathbb{C}_+$, one has 
    
    $$\omega_{F(z)}^{\alpha}(S)=\frac{1}{\pi}\int_{-\infty}^{\infty}\omega_{F(t)}^{\alpha}(S)\frac{\lambda}{(t-\zeta)^2+\lambda^2}\ dt=\int_{-\infty}^{\infty} \omega_{F(t)}^{\alpha}(S)\ d\omega_z(t),$$ 

    \noindent where $\omega_z(\cdot)$ is the distribution function of the harmonic measure given by$$\omega_z(t) = \frac{1}{\pi} \arctan\left(\frac{t-\zeta}{\lambda}\right) + \frac{1}{2},\qquad t\in\mathbb{R}.$$
        \end{enumerate}
    \end{lema}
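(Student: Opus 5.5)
The plan is to reduce both identities to standard facts about matrix-valued Herglotz functions, through the algebraic identity behind the auxiliary functions $F_\alpha^{(\lambda)}$. For a matrix $w$ with $\lambda I - w$ invertible, and $\lambda$ real, one has
\[
(I+\lambda w)(\lambda I-w)^{-1} = -\lambda I + (1+\lambda^2)(\lambda I - w)^{-1}.
\]
Hence
\[
\operatorname{Im} F_\alpha^{(\lambda)}(z)=(1+\lambda^2)\operatorname{Im}(\lambda I-w_\alpha(z))^{-1},
\]
where $w_I(z)=i((\operatorname{Im}F(z))^{1/2}+I)$ and $w_R(z)=\Vert\operatorname{Im}F(z)+I\Vert^{-1}\operatorname{Re}F(z)+iI$. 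Since $\operatorname{Im} w_\alpha\ge I>0$, the matrix $\lambda I-w_\alpha$ is invertible. The imaginary part of its inverse is then positive definite by the congruence formula recalled in the footnote of Section~\ref{FLTD}. This identity is the first step. Comparing it with the definitions (\ref{117})--(\ref{107}), and keeping track of the sign convention $\lambda I - w$ versus $w-\lambda I$, the integrand of $\omega^\alpha_{F(z)}(S)$ equals $\frac{1}{\pi(1+\lambda^2)}\operatorname{Im}F_\alpha^{(\lambda)}(z)$.

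For part (1), I would invoke the Herglotz representation of $F_\alpha^{(\lambda)}$. This gives $\operatorname{Im}F_\alpha^{(\lambda)}(t+i\epsilon)\to \pi\, \frac{d\upsilon_{\alpha,ac}^{(\lambda)}}{dt}(1+t^2)$ for a.e.\ $t$, with the normalization implicit in the definition of $\rho_\alpha^{(\lambda)}$. Integrating over $A$ then yields $\int_A \frac1\pi\operatorname{Im}F^{(\lambda)}_\alpha(t)\,dt=\rho_\alpha^{(\lambda)}(A)$, where only the absolutely continuous part contributes, and I would record this interpretation explicitly. Then $\int_A\omega^\alpha_{F(t)}(S)\,dt = \int_A\int_S(\cdots)\,d\lambda\,dt$. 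All integrands are positive semidefinite matrices, so Tonelli applies entrywise: one can apply it to $\langle v,\cdot\,v\rangle$ for every vector $v$ and polarize. This allows the order of integration to be swapped, which gives the stated formula.

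For part (2), I would again write $\omega^\alpha_{F(z)}(S)=\frac1\pi\int_S\frac{1}{1+\lambda^2}\operatorname{Im}F^{(\lambda)}_\alpha(z)\,d\lambda$. For each fixed $\lambda$, I would use the Poisson representation of the imaginary part of a Herglotz function, namely $\operatorname{Im}F^{(\lambda)}_\alpha(\zeta+i y)=by+\frac{1}{\pi}\int\frac{y}{(t-\zeta)^2+y^2}\,(1+t^2)\,d\upsilon^{(\lambda)}_\alpha(t)$. In this representation the linear term and the singular part must be shown to be absent, or absorbed into the boundary values. The reason is that $0\le \operatorname{Im}(\lambda I-w)^{-1}\le (\operatorname{Im}w)^{-1}\le I$, so $\operatorname{Im}F^{(\lambda)}_\alpha$ is bounded. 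A bounded positive harmonic function is the Poisson integral of its a.e.\ boundary values. Substituting this representation and applying Fubini once more in $(\lambda,t)$ gives the first equality. The second equality is the identity $d\omega_z(t)=\frac{1}{\pi}\frac{\lambda}{(t-\zeta)^2+\lambda^2}\,dt$, obtained by differentiating the arctangent; here $\lambda$ denotes $\operatorname{Im}z$.

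The main obstacle is justifying that $z\mapsto F_\alpha^{(\lambda)}(z)$ is genuinely a (matrix) Herglotz function, or at least that $z\mapsto \operatorname{Im}(\lambda I-w_\alpha(z))^{-1}$ is harmonic. In the scalar case this follows from $w$ being analytic. Here, however, $(\operatorname{Im}F)^{1/2}$ and $\Vert\operatorname{Im}F+I\Vert^{-1}\operatorname{Re}F$ are not holomorphic in $z$. My first attempt would be to show directly that the map $Z\mapsto \operatorname{Im}(\lambda I - w_\alpha(Z))^{-1}$ composed with $F$ is harmonic. Failing that, I would take part (2) as the definition of the extension of $\omega^\alpha_{F(\cdot)}(S)$ into $\mathbb{C}_+$ via its boundary values. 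Part (1) then only requires the boundary-value identity, and that identity still follows from the pointwise algebra and Tonelli.
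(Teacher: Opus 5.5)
There is a genuine gap, and it is exactly the one you flag at the end. It cannot be closed. Both halves of your plan require $F^{(\lambda)}_\alpha$ to be a genuine Herglotz function: part (1) needs its measure $\upsilon^{(\lambda)}_\alpha$, and part (2) needs the step ``a bounded positive harmonic function is the Poisson integral of its boundary values.'' Since $(\operatorname{Im}F(z))^{1/2}$ and $\Vert\operatorname{Im}F(z)+I\Vert^{-1}\operatorname{Re}F(z)$ are not holomorphic in $z$, neither is $F^{(\lambda)}_\alpha$. Consequently $z\mapsto\omega^\alpha_{F(z)}(S)$ is in general not harmonic, and part (2) is in fact false as stated. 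Take $F(z)=zI\in\mathcal{H}$, $S=(-a,a)$ and $z=\zeta+iy$. Then
\[
\omega^I_{F(z)}(S)=\frac{2}{\pi}\arctan\left(\frac{a}{1+\sqrt{y}}\right) I ,
\]
whereas the boundary values $F(t)=tI$ give the constant $\omega^I_{F(t)}(S)=\frac{2}{\pi}\arctan(a)\, I$, whose Poisson integral is that same constant. The case $\alpha=R$ fails as well. At $\zeta=0$ the left-hand side equals $\frac{2}{\pi}\arctan(a)\, I$. The Poisson integral of $\omega^R_{F(t)}(S)=\frac{1}{\pi}\int_S\frac{d\lambda}{(t-\lambda)^2+1}\, I$, however, equals $\frac{2}{\pi}\arctan\left(\frac{a}{1+y}\right) I$.

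So your first attempt, proving harmonicity directly, must fail. Your fallback, defining the interior values as the Poisson integral, proves a different statement. What the paper uses later are the explicit interior values \eqref{117}--\eqref{107}, with the sign in \eqref{107} read as $\operatorname{Im}(\lambda I-w_R)^{-1}$, as you noticed. These are the values that Lemmas~\ref{21.} and~\ref{28.} estimate, and that Proposition~\ref{19} relates to boundary values through part (2).

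The paper's own proof follows your route but simply asserts that $F^{(y)}_I$ ``is analytic on $\mathbb{C}_+$''. It verifies $\operatorname{Im}F^{(y)}_I(z)>0$. It then uses the Herglotz representation, with Fubini and a $y\downarrow 0$ limit, for part (1), and the Poisson representation of $\operatorname{Im}F^{(\lambda)}_\alpha$ for part (2). So it does not supply the missing ingredient either, and your skepticism about that step is justified.

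What survives is the weaker form of part (1) that your Tonelli argument actually establishes. Suppose $\rho^{(\lambda)}_\alpha(A)$ is \emph{defined} as $\frac{1}{\pi}\int_A\operatorname{Im}F^{(\lambda)}_\alpha(t)\,dt$, computed from the a.e.\ boundary values $F(t)$. Then the identity follows from your algebraic relation $\operatorname{Im}F^{(\lambda)}_\alpha=(1+\lambda^2)\operatorname{Im}(\lambda I-w_\alpha)^{-1}$ together with Tonelli. But this is just a rewriting of the definition of $\omega^\alpha_{F(t)}(S)$, not a statement about the Herglotz measure of $F^{(\lambda)}_\alpha$. The absolutely continuous versus singular bookkeeping you mention then has no measure to attach to.
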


    \begin{proof}
      \textbf{1.}        In the first part of the proof, we consider the cases $\alpha=I$ and $\alpha=R$ separately. Let $\alpha=I$. In order to prove that for each $y\in S$,
      $F_I^{(y)}\in\mathcal{H}$, one just needs to show that for each $z\in\mathbb{C}_+$, $\operatorname{Im}F_I^{(y)}(z)>0$, since $F_I^{(y)}$  
      is analytic on $\mathbb{C}_+$; from now on, we  use the notation $1/(B^2+C^2)=(B-iC)^{-1}(B+iC)^{-1}$ for matrices $B$ and $C$ that commute.

        Since $((\operatorname{Im}F(z))^{1/2}+I)$ is a real symmetric matrix, there exist an orthogonal matrix $P$  and a diagonal matrix $D$ such that $((\operatorname{Im}F(z))^{1/2}+I)=PDP^T$; then, it follows from the functional calculus for self-adjoint matrices that 
        $$F_I^{(y)}(z)=P(I+iyD)P^T(P(yI-iD)P^T)^{-1}=P(I+iyD)(yI-iD)^{-1}P^T=P\widetilde{D}P^T.$$
Since $\widetilde{D}=(I+iyD)(yI-iD)^{-1}$ is a diagonal matrix, one concludes that $F_I^{(y)}(z)$ is a normal matrix. Moreover, it follows from the functional calculus that for each $i\in\{1,\ldots,l\}$,
        $$\widetilde{D}_{ii}=\frac{1+iy(1+\sqrt{\lambda^I_{i}})}{y-i(1+\sqrt{\lambda^I_{i}})}=\frac{(1+iy(1+\sqrt{\lambda^I_{i}}))(y+i(1+\sqrt{\lambda^I_{i}}))}{y^2+(1+\sqrt{\lambda^I_{i}})^2}$$
      (here, $\{\lambda_i^I\}_{i=1}^l=\{\lambda_i^I(z)\}_{i=1}^l$ denotes the set of the eigenvalues of $\operatorname{Im}F(z)$), and so $\operatorname{Im}\widetilde{D}_{ii}=\dfrac{(1+y^2)(1+\sqrt{\lambda^I_{i}})}{y^2+(1+\sqrt{\lambda^I_{i}})^2}>0$.  Since                $$(P\widetilde{D}P^T)^*=P\widetilde{D}^*P^T \ \ \ \ \text{and} \ \ \ \ \operatorname{Im}(P\widetilde{D}P^T)=\frac{P\widetilde{D}P^T-P\widetilde{D}^*P^T}{2i}=P\operatorname{Im}\widetilde{D}P^T,$$ it follows that for each $z\in\mathbb{C}_+$,
                \begin{equation}\label{10}
            \operatorname{Im}F_I^{(y)}(z)=\frac{(1+y^2)((\operatorname{Im}F(z))^{1/2}+I)}{y^2\text{I}+((\operatorname{Im}F(z))^{1/2}+I)^2}>0\ .
        \end{equation}        

                Hence, $F_I^{(y)}$ is Herglotz and
        $$\omega_{F(z)}^{I}(S)= \frac{1}{\pi}\int_S \frac{(\operatorname{Im}F(z))^{1/2}+I}{\lambda^2\text{I}+((\operatorname{Im}F(z))^{1/2}+I)^2}\ d\lambda=\frac{1}{\pi}\int_S \frac{1}{1+\lambda^2}\ \operatorname{Im}F_I^{(\lambda)}(z)\ d\lambda.$$ 

        The proof for the case $\alpha=R$ follows the same arguments presented above, so we just omit it. 

 One can write down both identities obtained above as follows:
       \begin{equation}\label{80}
           \omega_{F(z)}^{\alpha}(S)=\frac{1}{\pi}\int_S \frac{1}{1+\lambda^2}\ \operatorname{Im}F_\alpha^{(\lambda)}(z)\ d\lambda ,\ \ \ \ \ \alpha=R,I\ .
       \end{equation}

        Note also that from the previous discussion, one can also write
        \[\omega_{F(z)}^{I}(S)=P_I(z)D_{I}(z,S)P_I(z)^T,\]
        where $P_I(z)$ is an orthogonal matrix that diagonalizes $\operatorname{Im} F(z)$, and $D_I(z,S)$ is the diagonal matrix such that for each $i\in\{1,\ldots,l\}$,
        \[(D_{I}(z,S))_{ii}:=\frac{1}{\pi}\int_S\frac{1+\lambda_{i}^I(z)^{1/2}}{(1+\lambda_{i}^I(z)^{1/2})^2+\lambda^2}\ d\lambda.\]
        Moreover, one has for each $S\in\mathcal{B}(\mathbb{R})$, each $z\in\mathbb{C}_+$ and each $i\in\{1,\ldots,l\}$ that $(D_{I}(z,S))_{ii}\le (D_{I}(z,\mathbb{R}))_{ii} =1${\footnote{A direct calculation shows that $\frac{1}{\pi} \int_{\mathbb{R}} \frac{1+\lambda_{i}(z)^{1/2}}{(1+\lambda_{i}(z)^{1/2})^2+\lambda^2} d\lambda = 1$, as the integrand corresponds to the probability density function of a Cauchy distribution with scale parameter $A = 1+\lambda_{i}(z)^{1/2}$.}}, from which follows that $\Vert \omega_{F(z)}^{I}(S)\Vert\le 1$.

        By the same reasoning, one concludes that for each $S\in\mathcal{B}(\mathbb{R})$ and each $z\in\mathbb{C}_+$, there exists an orthogonal matrix $P_R(z)$ and a diagonal matrix  $D_R(z,S)$ such that 
        \[\omega_{F(z)}^{R}(S)=P_R(z)D_{R}(z,S)P_R(z)^T\ ,\] where 
        $$(D_{R}(z,S))_{ii} := \frac{1}{\pi} \int_S \frac{1}{(\lambda - \lambda_{i}^R(z))^2 + 1}\ d\lambda$$ 
        
        \noindent and $(D_{R}(z,S))_{ii}\le (D_{R}(z,\mathbb{R}))_{ii} = 1$\footnote{Similarly, a direct calculation shows that $\frac{1}{\pi} \int_{\mathbb{R}} \frac{1}{(\lambda - \lambda_{i}^R(z))^2 + 1} d\lambda = 1$, as the integrand corresponds to the probability density function of a Cauchy distribution with location parameter $\lambda_i^R(z)$ and unit scale parameter.} for each $i\in\{1,\ldots,l\}$; then, $\Vert \omega_{F(z)}^{R}(S)\Vert\le 1$.

        Now, it follows from this discussion, relation
        ~\eqref{80} and Fubini Theorem (recall that $|A|<\infty$) that 
                \begin{equation}\label{PEL}
            \int_A \omega_{F(t+iy)}^{\alpha}(S)\ dt =\frac{1}{\pi}\int_S \frac{1}{1+\lambda^2}\left(\int_A \operatorname{Im}F_\alpha^{(\lambda)}(t+iy)\ dt \right) d\lambda.
        \end{equation}
        
                Since
                \[\operatorname{Im}F_\alpha^{(\lambda)}(t+iy)=y\int_{\mathbb{R}_{\infty}} \frac{1+\zeta^2}{|\zeta-t-iy|^2}\ d\upsilon_\alpha^{(\lambda)}(\zeta)\] (by the Herglotz representation of $F_\alpha^{(\lambda)}$\footnote{Namely, it follows from the definition of a matrix-valued Herglotz function that
        $$\operatorname{Im}F_\alpha^{(\lambda)}(t+iy)=\int_{\mathbb{R}_{\infty}}\operatorname{Im}\left(\frac{1+\zeta (t+iy)}{\zeta-(t+iy)}\right)\ d\upsilon_\alpha^{(\lambda)}(\zeta)=\int_{\mathbb{R}_{\infty}}\frac{y(1+\zeta^2)}{(\zeta-t)^2+y^2}\ d\upsilon_\alpha^{(\lambda)}(\zeta)\ .$$}), one gets
                \begin{equation*}
            \int_A \omega_{F(t+iy)}^{\alpha}(S)\ dt =\frac{1}{\pi}\int_S \frac{1}{1+\lambda^2}\left( \int_A y \left(\int_{\mathbb{R}_{\infty}} \frac{1+\zeta^2}{|\zeta-t-iy|^2}\ d\upsilon_\alpha^{(\lambda)}(\zeta)\right)\ dt\right) d\lambda\ .\nonumber 
                \end{equation*}

                Now, for each $\lambda\in S$, it follows from Fubini Theorem that
                $$\frac{1}{\pi}\int_A\int_{\mathbb{R}_{\infty}} \frac{y(1+\zeta^2)}{|\zeta-t-iy|^2}\ d\upsilon_\alpha^{(\lambda)}(\zeta)\ dt = \int_{\mathbb{R}_{\infty}} (1+\zeta^2)\left( \frac{1}{\pi}\int_A\frac{y}{|\zeta-t-iy|^2}\ dt\right) d\upsilon_\alpha^{(\lambda)}(\zeta)\ ; $$
                thus, by combining the identity\footnote{If $A=(a,b)\subset\mathbb{R}$, then $$\lim_{y\to 0^+}\frac{1}{\pi}\int_a^b \frac{y}{(\zeta-t)^2+y^2}\ dt=\lim_{y\to 0^+}\frac{1}{\pi}\left[\arctan\left(\frac{b-\zeta}{y}\right)-\arctan\left(\frac{a-\zeta}{y}\right)\right]=\chi_{(a,b)}(\lambda).$$ The extension to the case where $A$ is a Borel set of finite measure follows from the fact that $A$ can be approximated by a finite union of disjoint intervals, or, equivalently, from the property of the Poisson kernel $P_y(x) = \frac{1}{\pi}\frac{y}{x^2+y^2}$ as an approximate identity, which ensures that the convolution $(P_y * \chi_A)(\zeta)$ (which is equal to the original integral in $A$) converges to $\chi_A(\zeta)$ almost everywhere as $y \to 0^+$.}
      
        \begin{equation*}
            \lim_{y\to 0^+}\frac{1}{\pi}\int_A\frac{y}{|\zeta-t-iy|^2}\ dt=\chi_A(\zeta)
        \end{equation*}
        with Fubini and Dominated Convergence Theorems\footnote{Note that there exists a positive matrix  $W^{(\lambda)}(\zeta)$ such that $d\upsilon_\alpha^{(\lambda)}(\zeta) = W^{(\lambda)}(\zeta)\ d\upsilon^{(\lambda)}_{\alpha\ tr}(\zeta)$, where $d\upsilon^{(\lambda)}_{\alpha\ tr}(\zeta)$ is the (scalar) trace measure of $d\upsilon_\alpha^{(\lambda)}(\zeta)$. Since the integral of a matrix-valued function can be viewed as a matrix whose entries are integrals of complex-valued functions, one can apply Dominated Convergence Theorem to each entry of the matrix individually.}, one gets
        \begin{align}\label{14}
          &\lim_{y\to 0^+} \frac{1}{\pi}\int_S \frac{1}{1+\lambda^2}\left(\int_{\mathbb{R}_{\infty}}  (1+\zeta^2)\left( \int_A \frac{y}{|\zeta-t-iy|^2}\ dt \right) d\upsilon_\alpha^{(\lambda)}(\zeta)\right) d\lambda \nonumber \\  &= \int_S \frac{1}{1+\lambda^2}\left( \int_{\mathbb{R}_{\infty}}  (1+\zeta^2)\ \chi_A(\zeta)\ d\upsilon_\alpha^{(\lambda)}(\zeta)\right) d\lambda \nonumber \\ &= \int_S \frac{1}{1+\lambda^2}\ \rho_\alpha^{(\lambda)}(A)\ d\lambda\,.
        \end{align}

        On the other hand, it follows from dominated convergence that for each $A,S\in\mathcal{B}(\mathbb{R})$, with $|A|<\infty$, and each $F\in\mathcal{H}$,
        \begin{equation}\label{14a}
          \lim_{y\to 0^+} \int_A \omega_{F(t+iy)}^{\alpha}(S)\ dt=\int_A \omega_{F(t)}^{\alpha}(S)\ dt.
          \end{equation}
        Namely, note that $\Vert \omega_{F(t+iy)}^{\alpha}(S)\Vert\le 1$ for each $t\in\mathbb{R}$ and each $y>0$; note also that since the eigenvalues and eigenvectors of $\operatorname{Im}F(z)$ and $\operatorname{Re}F(z)$ depend continuously on $z\in\mathbb{C}_+$, and that $\lim_{y\downarrow 0} F(t+iy)=F(t)$ exists for a.e. $t\in A$ (given that $F\in\mathcal{H}$), it follows from dominated convergence that $\omega_{F(t)}^{\alpha}(S)=\lim_{y\downarrow 0}\omega_{F(t+iy)}^{\alpha}(S)$ for a.e. $t\in A$, yielding~\eqref{14a}. 

        Thus, by combining relations~\eqref{PEL},~\eqref{14} and \eqref{14a}, one concludes that 
        $$\int_A\omega_{F(t)}^\alpha(S)\ dt=\int_S\rho_\alpha^{(\lambda)}(A)\ \frac{1}{1+\lambda^2}\ d\lambda,$$ 
        where $d\rho_\alpha^{(\lambda)}(A)=(1+t^2)\chi_A(t)d\upsilon_\alpha^{(\lambda)}(t)$ and $\upsilon_\alpha^{(\lambda)}$ is the measure in the Herglotz representation of $F_\alpha^{(\lambda)}$.

       \textbf{2.} In order to prove the second result, one needs to extend the previous identity to the case where $\chi_A(t)$ is replaced by a Borel function $p(t)$ (which can be done by dominated convergence and by noting the fact that $p(t)$ can be seen as the pointwize limit of a sequence $(s_n(t))$ of simple functions; again, it suffices to use the identity $d\upsilon_\alpha^{(\lambda)}(t) = W^{(\lambda)}(t)\ d\upsilon^{(\lambda)}_{\alpha\ tr}(t)$): 
                \begin{equation}\label{81}
          \int_{-\infty}^{\infty}p(t)\ \omega_{F(t)}^\alpha(S)\ dt=  \int_S \frac{1}{1+\lambda^2}\left(\int_{\mathbb{R}_{\infty}}(1+t^2)p(t)d\upsilon_\alpha^{(\lambda)}(t)\right)d\lambda\, .
        \end{equation}
        
         By setting $p(t)=\frac{1}{\pi}\operatorname{Im}\frac{1}{t-z}=\frac{1}{\pi}\frac{\lambda}{(t-\zeta)^2+\lambda^2}$ in~\eqref{81}, it follows from relation~\eqref{80} and from the following Herglotz representation of $F_\alpha^{(\lambda)}$, 
                         \begin{equation*}
          \frac{1}{\pi}\operatorname{Im}F_\alpha^{(\lambda)}(\zeta+iy)=  \frac{1}{\pi}\int_{\mathbb{R}_{\infty}}(1+t^2)\frac{y}{(t-\zeta)^2+y^2}d\upsilon_\alpha^{(\lambda)}(t),
        \end{equation*}
that, for each $z=\zeta+iy\in\mathbb{C}_+$,        
        %
        \begin{align}
            \omega_{F(z)}^{\alpha}(S)&=\frac{1}{\pi}\int_S \frac{1}{1+\lambda^2}\ \operatorname{Im}F_\alpha^{(\lambda)}(z)\ d\lambda \nonumber \\ &=\frac{1}{\pi}\int_S\frac{1}{1+\lambda^2}\left(\int_{\mathbb{R}_{\infty}}(1+t^2)\frac{y}{(t-\zeta)^2+y^2}\ d\upsilon_\alpha^{(\lambda)}(t)\right) d\lambda\ \nonumber \\ &=\frac{1}{\pi}\int_{-\infty}^{\infty}\omega_{F(t)}^{\alpha}(S)\frac{y}{(t-\zeta)^2+y^2}\ dt \nonumber \\ &=\int_{-\infty}^{\infty}\omega_{F(t)}^{\alpha}(S)\ d\omega_z(t)\,.
        \end{align}
   \end{proof}

   \subsection{Convergence of matrix-valued harmonic measures}\label{120}

    In this subsection, we discuss two important results concerning the convergence of matrix-valued harmonic measures.
    
    The next result says that given $A\in\mathcal{B}(\mathbb{R})$ with $|A|<\infty$, the difference $$\int_A \omega_{F(t+iy)}^{\alpha}(S)\ dt\ -\int_A \omega_{F(t)}^{\alpha}(S)\ dt$$ converges to zero (when $y\to 0^+$) uniformly over $S\in\mathcal{B}(\mathbb{R})$ and $F\in\mathcal{H}$.

    \begin{pro}\label{19}
        Let $A\in\mathcal{B}(\mathbb{R})$, with $|A|<\infty$. Then,  $$\lim_{y\to0^+}\sup_{F\in\mathcal{H};S\subset\mathbb{R}}\left\|\int_A \omega_{F(t+iy)}^{\alpha}(S)\ dt\ -\int_A \omega_{F(t)}^{\alpha}(S)\ dt\right\| = 0,$$ 
with $\alpha=I$ or $\alpha=R$.
    \end{pro}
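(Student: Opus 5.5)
The plan is to reduce the statement to a uniform estimate for the Poisson kernel acting on bounded matrix-valued functions. The difference is then controlled by how $\chi_A$ is smoothed by the kernel, and that quantity does not involve $F$ or $S$ at all.

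\textbf{Step 1: smoothing identity.} By Lemma~\ref{47}(2), for each $t\in\mathbb{R}$ and $y>0$,
\[
\omega_{F(t+iy)}^{\alpha}(S)=\int_{-\infty}^{\infty}\omega_{F(s)}^{\alpha}(S)\,P_y(t-s)\,ds,\qquad P_y(x)=\frac{1}{\pi}\frac{y}{x^2+y^2}.
\]
Integrating over $A$ and using Fubini, which is legitimate since $\Vert\omega_{F(s)}^{\alpha}(S)\Vert\le 1$, gives
\[
\int_A \omega_{F(t+iy)}^{\alpha}(S)\,dt-\int_A\omega_{F(t)}^{\alpha}(S)\,dt=\int_{-\infty}^{\infty}\omega_{F(s)}^{\alpha}(S)\,\big((P_y*\chi_A)(s)-\chi_A(s)\big)\,ds .
\]
Here $\omega_{F(s)}^\alpha(S)$ denotes the a.e.\ boundary value. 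The bound $\Vert\omega_{F(s)}^\alpha(S)\Vert\le1$ passes to the limit from the diagonalization argument in the proof of Lemma~\ref{47}. The integral is absolutely convergent because $P_y*\chi_A-\chi_A\in L^1(\mathbb{R})$; both terms have $L^1$ norm at most $|A|$.

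\textbf{Step 2: uniform bound.} Taking operator norms inside the integral, which is valid since the norm of a Bochner integral is at most the integral of the norm, yields
\[
\sup_{F\in\mathcal{H};\,S\subset\mathbb{R}}\left\|\int_A \omega_{F(t+iy)}^{\alpha}(S)\,dt-\int_A \omega_{F(t)}^{\alpha}(S)\,dt\right\|\le \Vert P_y*\chi_A-\chi_A\Vert_{L^1(\mathbb{R})}.
\]
The right-hand side is independent of $F$ and $S$.

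\textbf{Step 3: approximate identity.} Since $\chi_A\in L^1(\mathbb{R})$ and $(P_y)_{y>0}$ is an approximate identity in $L^1$, we have $\Vert P_y*\chi_A-\chi_A\Vert_{L^1}\to0$ as $y\to0^+$. The standard argument is to approximate $\chi_A$ in $L^1$ by a continuous compactly supported function $g$. Then $\Vert P_y*g-g\Vert_{L^1}\to0$ by uniform continuity of $g$ together with the mass-tail estimate for $P_y$, and the remaining terms are handled by $\Vert P_y*h\Vert_{L^1}\le\Vert h\Vert_{L^1}$.

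\textbf{Main obstacle.} The delicate point is Step 1. One must justify that the representation in Lemma~\ref{47}(2) holds with the boundary values $\omega^\alpha_{F(s)}(S)$, and that these boundary values are defined a.e.\ and bounded. For this I would rely on the almost-everywhere limit established before~\eqref{14a} and on the uniform bound $\Vert\omega^\alpha\Vert\le1$. Once this is in place, the uniformity over $F$ and $S$ is automatic, since everything is absorbed into the single number $\Vert P_y*\chi_A-\chi_A\Vert_1$.
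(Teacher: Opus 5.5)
\textbf{There is a genuine gap in Step~1, and the statement itself is false.} Your Steps~2--3 are, apart from a factor of $2$, the paper's own argument. The paper splits the error into the pieces over $A^c$ and over $A$ and uses that both are positive semidefinite. It arrives at $\int_A\omega_{t+iy}(A^c)\,dt=\frac12\Vert P_y*\chi_A-\chi_A\Vert_{L^1}$, which it sends to $0$ by Lebesgue differentiation and dominated convergence rather than by density of $C_c(\mathbb{R})$.

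The problem is Step~1, and the paper's proof shares it, because it starts from the same identity, Lemma~\ref{47}(2). A Poisson representation $\omega^\alpha_{F(x+iy)}(S)=\int\omega^\alpha_{F(s)}(S)P_y(x-s)\,ds$ requires $z\mapsto\omega^\alpha_{F(z)}(S)$ to be harmonic on $\mathbb{C}_+$. The a.e.\ boundary values and the bound $\Vert\omega^\alpha\Vert\le1$, which you propose to rely on, cannot replace harmonicity.

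In the scalar Breimesser--Pearson theory harmonicity is automatic, since $\frac1\pi\int_S\operatorname{Im}(\lambda-F(z))^{-1}d\lambda$ is the imaginary part of an analytic function of $z$. Here it fails: $\omega^I_{F(z)}(S)$ depends on $z$ only through $(\operatorname{Im}F(z))^{1/2}$, and $\omega^R_{F(z)}(S)$ only through $\operatorname{Re}F(z)/\Vert\operatorname{Im}F(z)+I\Vert$. Correspondingly, the functions $F^{(\lambda)}_\alpha$ in the proof of Lemma~\ref{47} are not analytic in general.

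Concretely, take $F(z)=zI$ and $S=(-1,1)$.
For $\alpha=I$, the boundary values $\omega^I_{F(t)}(S)=\frac12 I$ are constant, yet $\omega^I_{F(x+iy)}(S)=\frac2\pi\arctan\frac{1}{1+\sqrt y}\,I\neq\frac12 I$.
For $\alpha=R$, $\omega^R_{F(iy)}(S)=\frac12 I$. But the Poisson integral at $iy$ of $\omega^R_{F(t)}(S)=\int_S P_1(t-\lambda)\,d\lambda\,I$ equals $\int_S P_{1+y}(\lambda)\,d\lambda\,I=\frac2\pi\arctan\frac{1}{1+y}\,I$.

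No other argument can close the gap, because the proposition is false as stated. Take $\alpha=I$, $A=[0,1]$, $S=(-1,1)$ and $F_c(z)=czI$ with $c>0$. The boundary values are real, so
\[
\int_A\omega^I_{F_c(t)}(S)\,dt-\int_A\omega^I_{F_c(t+iy)}(S)\,dt=\Big(\frac12-\frac2\pi\arctan\frac{1}{1+\sqrt{cy}}\Big)I .
\]
Its norm tends to $\frac12$ as $c\to\infty$ for every fixed $y>0$. Hence the supremum over $F\in\mathcal{H}$ is at least $\frac12$ for all $y$.

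For $\alpha=R$, $F(z)=c\tan(z/h)\,I$ with $h\ll y$ and $c\gg1$ works the same way. Since $F(t+iy)\approx icI$, one gets $\omega^R\approx\frac12 I$ at height $y$. The real boundary values are asymptotically Cauchy distributed with scale $c$, which gives $\int_A\omega^R_{F(t)}(S)\,dt\approx\frac2\pi\arctan\frac{1}{1+c}\,I$.

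What survives is convergence for each fixed $F$ and $S$, by dominated convergence as in \eqref{14a}. The uniformity in $F$, on which Proposition~\ref{CVD} and Theorem~\ref{25} depend, is genuinely lost with the present definitions of $\omega^\alpha$. Running the Breimesser--Pearson argument requires quantities that are harmonic in $z$, as in the scalar case.
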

    \begin{proof}
It follows from Lemma~\ref{47} that for each $S\in\mathcal{B}(\mathbb{R})$ and each $z\in\mathbb{C}_+$, $$\omega_{F(t+iy)}^{\alpha}(S)= \frac{1}{\pi}\int_{-\infty}^{\infty} \omega_{F(u)}^{\alpha}(S)\frac{y}{(u-t)^2+y^2}\ du.$$

   Now, one has from Fubini Theorem that for each $A\in\mathcal{B}(\mathbb{R})$ with $|A|<\infty$, 
        \begin{align*}
            \int_A \omega_{F(t+iy)}^{\alpha}(S)\ dt &=\frac{1}{\pi}\int_A \left(\int_{-\infty}^{\infty} \frac{y}{(u-t)^2+y^2}\ \omega_{F(u)}^{\alpha}(S)\ du\right)dt \nonumber\\ &=\int_{-\infty}^{\infty}  \omega_{F(u)}^{\alpha}(S)\left( \frac{1}{\pi}\int_A \frac{y}{(u-t)^2+y^2}\ dt\right)du \nonumber\\ &=\int_{-\infty}^{\infty} \omega_{u+iy}(A)\ \omega_{F(u)}^{\alpha}(S)\ du\ .
        \end{align*}

   Therefore,
\begin{align*}
    &\left\|\int_A \omega_{F(t+iy)}^{\alpha}(S)\ dt\ - \int_A \omega_{F(t)}^{\alpha}(S)\ dt\right\| 
    =\left\|\int_{-\infty}^\infty \omega_{F(t)}^{\alpha}(S)\ \Big(\omega_{t+iy}(A)-\chi_A(t)\Big)\ dt \right\|
\end{align*} \vspace{-0.7cm}
\begin{align*}
    \ \ \ \ \ \ \ \ \ \ \ \ \ \ \ \ \ &=\left\|\int_{A^c} \omega_{F(t)}^{\alpha}(S)\ \omega_{t+iy}(A)\ dt - \int_{A} \omega_{F(t)}^{\alpha}(S)\ \omega_{t+iy}(A^c)\ dt \right\| \\
    &\le \max \left( \left\|\int_{A^c} \omega_{F(t)}^{\alpha}(S)\ \omega_{t+iy}(A)\ dt \right\| , \left\|\int_{A} \omega_{F(t)}^{\alpha}(S)\ \omega_{t+iy}(A^c)\ dt \right\| \right) \\
    &\le \max \left( \int_{A^c} \omega_{t+iy}(A)\ dt , \int_{A} \omega_{t+iy}(A^c)\ dt \right) \\
    &=\int_A \omega_{t+iy}(A^c)\ dt\ ;
\end{align*}
\noindent namely, the second equality is obtained by splitting the integral over $\mathbb{R}$ into $A$ and $A^c$, and by noting that for each $t\in A$, $\omega_{t+iy}(A) - \chi_A(t) = \omega_{t+iy}(A) - 1 = -\omega_{t+iy}(A^c)$, since $\omega_z$ is a probability measure; the first inequality follows from the fact that both integrals represent positive semi-definite matrices; the second inequality follows because $\Vert \omega_{F(z)}^{\alpha}(S)\Vert\le 1$; finally, by Fubini Theorem and the definition of $\omega_z$, one gets $\int_{A^c} \omega_{t+iy}(A) dt = \int_A \omega_{t+iy}(A^c) dt$, making both terms in the maximum equal to each other.

Set, for each $y>0$, $\epsilon_A(y):=\int_A \omega_{t+iy}(A^c)\ dt\ ;$ one needs to show that 
$\lim_{y\downarrow 0}\epsilon_A(y)= 0$. Recall, from Lebesgue’s Differentiation Theorem,
\begin{align}
        \omega_{t+iy}(A^c) &\leqslant \frac{1}{\pi}\int_{A^c \cap (t-Ny,t+Ny)}\frac{y}{(s-t)^2+y^2}\ ds + \frac{1}{\pi}\int_{|s-t|\geqslant Ny}\frac{y}{(s-t)^2+y^2}\ ds \nonumber \\ &=No(1)+1-\frac{2}{\pi}\arctan N. \nonumber
    \end{align}

Thus, by taking $y$ small enough and by noting that $N > 0$ is arbitrary, one gets  \linebreak $\lim_{y\downarrow 0}\omega_{t+iy}(A^c)=0$ for a.e. $t \in A$, from which follows, by dominated convergence,  that $\epsilon_A(y)\to 0$. 
    \end{proof}

    \vspace{0.5cm}

     The next result establishes a sufficient condition for a sequence of Herglotz functions $(F_n)$ to converge to a Herglotz function $F$ \textit{in the value distribution sense}, that is, for each $A\in\mathcal{B}(\mathbb{R})$ with $|A|<\infty$, and each bounded $S\in\mathcal{B}(\mathbb{R})$,
      \[\lim_{n\to\infty}\int_A\omega^\alpha_{F_n(t)}(S)\ dt=\int_A\omega^\alpha_{F(t)}(S)\ dt.\]

\begin{obs} We highlight that this definition of convergence in value distribution is weaker than the one presented by Pearson and Breimesser in~[\ref{Breimesser1}]; here, $S\in\mathcal{B}(\mathbb{R})$ must be bounded, and in their, it can be any Borel subset of $\mathbb{R}$. Such restriction is due to the fact that the matrix-valued harmonic measures $\omega^\alpha_{F(z)}(S)$ are only continuous in $\mathfrak{S}_l$ (with respect to the matrix norm), as far as we know, in case $S\in\mathcal{B}(\mathbb{R})$ is bounded. We refer to the proof of Lemma~\ref{28.} for details. 
\end{obs}

    \begin{pro}\label{CVD} Let $\{F,(F_n)\}\subset\mathcal{H}$ be such that for each compact $K\subset \mathbb{C}_+$ and each $z\in K$, $\lim_{n\to\infty}\Vert F_n(z)-F(z)\Vert=0$. Then, $(F_n)$ converges to $F$ in the value distribution sense.
      \end{pro}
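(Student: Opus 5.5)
The strategy is the standard Breimesser--Pearson one: move from boundary values to points inside $\mathbb{C}_+$ with Proposition~\ref{19}, use locally uniform convergence there, and return. Fix $A\in\mathcal{B}(\mathbb{R})$ with $|A|<\infty$, a bounded $S\in\mathcal{B}(\mathbb{R})$, and $\alpha\in\{I,R\}$. For $y>0$ we split
\begin{align*}
\Big\|\int_A\omega^\alpha_{F_n(t)}(S)\,dt-\int_A\omega^\alpha_{F(t)}(S)\,dt\Big\|
&\le \Big\|\int_A\omega^\alpha_{F_n(t)}(S)\,dt-\int_A\omega^\alpha_{F_n(t+iy)}(S)\,dt\Big\|\\
&\quad+\Big\|\int_A\big(\omega^\alpha_{F_n(t+iy)}(S)-\omega^\alpha_{F(t+iy)}(S)\big)\,dt\Big\|\\
&\quad+\Big\|\int_A\omega^\alpha_{F(t+iy)}(S)\,dt-\int_A\omega^\alpha_{F(t)}(S)\,dt\Big\|.
\end{align*}
By Proposition~\ref{19}, the first and third terms are bounded by a quantity $\eta(y)$. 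This bound is uniform in $n$, because the supremum there runs over all of $\mathcal{H}$ and all $S$, and $\eta(y)\to0$ as $y\downarrow0$. So, given $\varepsilon>0$, we first fix $y$ with $\eta(y)<\varepsilon$. It then remains to show that the middle term tends to $0$ as $n\to\infty$ for this fixed $y$.

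For the middle term we use continuity of the map $W\mapsto\omega^\alpha_W(S)$ on $\mathfrak{S}_l$ with respect to the matrix norm, for bounded $S$ (the continuous dependence established in Appendix~\ref{92}). Since $F_n(t+iy)\to F(t+iy)$ in norm, we get $\omega^\alpha_{F_n(t+iy)}(S)\to\omega^\alpha_{F(t+iy)}(S)$ for every $t$.

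We still need dominated convergence over $t\in A$. This is available because $\|\omega^\alpha_{W}(S)\|\le1$ for every $W$ (shown in the proof of Lemma~\ref{47}) and $|A|<\infty$. Applying it entrywise makes the middle term go to $0$. Taking $\limsup_n$ gives a total bound of $2\varepsilon$, and since $\varepsilon$ is arbitrary, convergence in the value distribution sense follows.

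The main obstacle is the pointwise continuity step. The hypothesis gives convergence on compact subsets of $\mathbb{C}_+$, and this covers each point $t+iy$ individually, which is all that dominated convergence needs. No uniformity in $t$ over an unbounded $A$ is required.

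The real content lies in checking continuity of $W\mapsto\omega^\alpha_W(S)$, so we verify it directly from the formulas. For $\alpha=I$, the integrand $\operatorname{Im}(\lambda I-i((\operatorname{Im}W)^{1/2}+I))^{-1}$ depends continuously on $W$ via continuity of the matrix square root. It is bounded uniformly in $\lambda\in S$ because $(\operatorname{Im}W)^{1/2}+I\ge I$ keeps the inverse bounded, so dominated convergence in $\lambda$ over the bounded set $S$ gives continuity. For $\alpha=R$, the factor $\|\operatorname{Im}W+I\|^{-1}\operatorname{Re}W$ is continuous in $W$, and the added term $+iI$ keeps the resolvent bounded by $1$. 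Boundedness of $S$ then again allows dominated convergence in $\lambda$. This is where the restriction to bounded $S$ enters.
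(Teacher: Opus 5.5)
Your proof is correct. Its skeleton is the paper's: the three-term split, with Proposition~\ref{19} absorbing both boundary terms uniformly in $n$. The difference is in the middle term.

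The paper reads the hypothesis as uniform convergence on compacts. It picks a compact $K_{\mathbb{R}}\subset A$ with $|A\setminus K_{\mathbb{R}}|<\epsilon$. It then applies the quantitative modulus of Lemma~\ref{28.} uniformly on $K_{\mathbb{R}}+iy_0$, bounding that piece by $v_\alpha(\epsilon,C)|K_{\mathbb{R}}|$, and bounds the piece over $A\setminus K_{\mathbb{R}}$ crudely by $2|A\setminus K_{\mathbb{R}}|$.

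You need only three things: pointwise convergence on the line $\mathbb{R}+iy$ (which is what the statement literally assumes), qualitative continuity of $W\mapsto\omega^\alpha_W(S)$, and dominated convergence in $t$. So you need neither the compact exhaustion nor an explicit modulus.

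The paper's route buys uniformity. Its final bound $2\epsilon+2|A\setminus K_{\mathbb{R}}|+v_\alpha(\epsilon,C)|K_{\mathbb{R}}|$ is independent of $S\in\mathcal{B}((-C,C))$. Yours gives convergence for each fixed $S$ only. That is all the definition asks, and all that Theorem~\ref{27} and Proposition~\ref{33}(d) use.

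Conversely, your route does not actually need $S$ bounded. Near a fixed $W_0\in\mathfrak{S}_l$, the spectra of $(\operatorname{Im}W)^{1/2}+I$ and of $\|\operatorname{Im}W+I\|^{-1}\operatorname{Re}W$ stay in a fixed bounded set. Hence both $\lambda$-integrands are dominated in norm by $c\,(1+\lambda^2)^{-1}$, and continuity of $W\mapsto\omega^\alpha_W(S)$ holds for every Borel $S$. Since Proposition~\ref{19} is uniform over all $S$, your argument gives value distribution convergence for arbitrary Borel $S$. The boundedness you invoke at the end comes only from dominating by a constant, not from the method.
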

    \begin{proof}    Let $A\in\mathcal{B}(\mathbb{R})$ be such that $|A|<\infty$, and let $0<\epsilon<1/2$. It follows from Proposition \ref{19} that there exists $y_0=y_0(\epsilon,A)>0$ such that for each $y\in(0,y_0]$, each $n\in\mathbb{N}$ and each $S\in\mathcal{B}(\mathbb{R})$,
                \begin{equation}\label{129}\left\|\int_A \left(\omega_{F_n(t)}^{\alpha}(S)- \omega_{F_n(t+iy)}^{\alpha}(S)\right)\ dt\right\|<\epsilon\ ,\ \ \ \left\|\int_A \left(\omega_{F(t)}^{\alpha}(S)- \omega_{F(t+iy)}^{\alpha}(S)\right)\ dt\right\|<\epsilon\ .\end{equation}

        Fix $y_0>0$ as above,  choose a compact set $K_\mathbb{R}\subset A$ so that $|A\setminus K_\mathbb{R}|<\epsilon$ (such compact set exists, since the Lebesgue measure is regular), and set $K:=K_\mathbb{R}+iy_0=\{z=x+iy_0\mid x\in K_{\mathbb{R}}\}$. This set $K \subset \mathbb{C}_+$ is clearly compact.

        Then, by hypothesis, there exists $n_0=n_0(\epsilon,K)$ such that for each $n\geqslant n_0$ and each $z\in K$, $\Vert F_n(z)-F(z)\Vert<\epsilon$. It follows  from Lemma \ref{28.} that there exists a function $v_\alpha:\mathbb{R}_+\times\mathbb{R}_+\rightarrow\mathbb{R}_+$, with $\lim_{s\downarrow 0}v_\alpha(s,w)=0$ for each $w\in\mathbb{R}_+$, such that for each $x\in K_{\mathbb{R}}$, each  $C>0$ and each $S\in\mathcal{B}((-C,C))$, 
                \begin{equation}\label{29}
            \left\|\omega_{F_n(x+iy_0)}^\alpha(S)-\omega_{F(x+iy_0)}^\alpha(S)\right\|<v_\alpha(\epsilon,C).
        \end{equation}

               Therefore, it follows from relations~\eqref{129} and~\eqref{29} that for each fixed $n\ge n_0$, $C>0$, and $S\in\mathcal{B}((-C,C))$, 
        \begin{align}
             &\left\|\int_A \left(\omega_{F_n(t)}^{\alpha}(S)- \omega_{F(t)}^{\alpha}(S)\right)\ dt\right\| \leqslant \nonumber \\ &\leqslant\left\|\int_A \left(\omega_{F_n(t)}^{\alpha}(S)- \omega_{F_n(t+iy_0)}^{\alpha}(S)\right)\ dt\right\|+\left\|\int_{A\setminus K_\mathbb{R}} \left(\omega_{F_n(t+iy_0)}^{\alpha}(S)- \omega_{F(t+iy_0)}^{\alpha}(S)\right)\ dt\right\| \nonumber \\ &+\left\|\int_{K_\mathbb{R}} \left(\omega_{F_n(t+iy_0)}^{\alpha}(S)- \omega_{F(t+iy_0)}^{\alpha}(S)\right)\ dt\right\|+\left\|\int_A \left(\omega_{F(t+iy_0)}^{\alpha}(S)- \omega_{F(t)}^{\alpha}(S)\right)\ dt\right\|\leqslant \nonumber \\ &\leqslant 2\epsilon+2|A\setminus K_\mathbb{R}|+v_\alpha(\epsilon,C)|K_\mathbb{R}|\ , \nonumber
        \end{align}
        where one has used that for each $z\in\mathbb{C}_+$ and each $H\in\mathcal{H}$, $\|\omega_{H(z)}^\alpha(S)\|\leqslant1$. 
        Since $0<\varepsilon<1/2$ and $A\in\mathcal{B}(\mathbb{R})$, with $|A|<\infty$, were arbitrarily chosen, this proves that
        $$\lim_{n\to\infty}\int_A\omega^\alpha_{F_n(t)}(S)\ dt=\int_A\omega^\alpha_{F(t)}(S)\ dt.$$
    \end{proof}


     \section[Potential spaces and the reflectionless property of the omega-limit set]{Potential spaces and the reflectionless property of the $\omega$-limit set}\label{102}


    \vspace{0.3cm}
    \subsection{Topology of potential spaces}\label{97}
    
   Let, for each $C>0$, $\mathcal{V}^C$ be the space of bounded sequences $(D_n,V_n)$ such that for each $n\in\mathbb{Z}$, $$(C+1)^{-1}\leqslant s_l(D_n)\leqslant s_1(D_n)\leqslant C+1, \qquad 0\le s_1(V_n)\leqslant C.$$
    
     Let $\mathcal{P}_l(\mathbb{R},(C+1)^{-1},C+1)$ denote the set of $l\times l$ positive definite matrices whose singular values{\footnote{In this case, their singular values coincide with their eigenvalues.}} belong to the compact interval $[(C+1)^{-1},C+1]$. 

     We induce on $\mathcal{P}_l(\mathbb{R},(C+1)^{-1},C+1)$ the topology of $\mathbb{R}^{l(l+1)/2}$. This set is bounded (as a subset of $\mathbb{R}^{l(l+1)/2}$) and closed, since the set of positive definite matrices whose singular values belong to the compact interval $[(C+1)^{-1},C+1]$ is closed (by the continuity of the singular values with respect to the matrix elements. Therefore, $\mathcal{P}_l(\mathbb{R},(C+1)^{-1},C+1)$ is a compact subset of $(M(l,\mathbb{R}),\rho)$ (here, $\rho$ can be taken as the metric induced by the Frobenius norm).

     Similarly, $\mathcal{S}_l(\mathbb{R},C)$ (the set of $l\times l$ bounded symmetric matrices  whose norms are bounded by $C$) is a compact subset of $(M(l,\mathbb{R}),\rho)$.

     \begin{pro}\label{convexity_pot}The sets $\mathcal{P}_l(\mathbb{R}, (C+1)^{-1}, C+1)$ and $\mathcal{S}_l(\mathbb{R}, C)$ are convex subsets of $(M(l, \mathbb{R}), \rho)$.\end{pro}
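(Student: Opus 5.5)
The plan is to characterize both sets by quadratic-form inequalities, which are preserved under convex combinations. The key observation is that for a real symmetric matrix the singular values are the absolute values of the eigenvalues. For a positive definite matrix they are exactly the eigenvalues. So I will first rewrite
\[
\mathcal{P}_l(\mathbb{R},(C+1)^{-1},C+1)=\{P\in M(l,\mathbb{R})\mid P^t=P,\ (C+1)^{-1}\|x\|^2\le \langle Px,x\rangle\le (C+1)\|x\|^2\ \ \forall x\in\mathbb{R}^l\},
\]
using the Rayleigh quotient characterization: the smallest and largest eigenvalues are $\min_{\|x\|=1}\langle Px,x\rangle$ and $\max_{\|x\|=1}\langle Px,x\rangle$. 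Similarly,
\[
\mathcal{S}_l(\mathbb{R},C)=\{V\in M(l,\mathbb{R})\mid V^t=V,\ |\langle Vx,x\rangle|\le C\|x\|^2\ \ \forall x\in\mathbb{R}^l\},
\]
since for symmetric $V$ one has $\|V\|=s_1(V)=\max_{\|x\|=1}|\langle Vx,x\rangle|$.

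Next, take $P_1,P_2$ in the first set and $\theta\in[0,1]$, and set $P_\theta=\theta P_1+(1-\theta)P_2$. Symmetry is linear, so $P_\theta^t=P_\theta$. For each $x$,
\[
\langle P_\theta x,x\rangle=\theta\langle P_1x,x\rangle+(1-\theta)\langle P_2x,x\rangle .
\]
Because the weights are nonnegative and sum to one, the two-sided bound passes to $P_\theta$. In particular $P_\theta$ is positive definite, and its eigenvalues, which equal its singular values, lie in $[(C+1)^{-1},C+1]$.

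For $\mathcal{S}_l(\mathbb{R},C)$ the same argument applies. Alternatively one can use the triangle inequality directly: $\|\theta V_1+(1-\theta)V_2\|\le\theta\|V_1\|+(1-\theta)\|V_2\|\le C$, and symmetry is again preserved.

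There is no real obstacle. The only point needing care is justifying the identification of singular values with eigenvalues (or their absolute values) in the symmetric case, so that the operator norm bounds translate into quadratic-form bounds. This follows from the orthogonal diagonalization $P=Q\Lambda Q^t$, which gives $P^tP=Q\Lambda^2Q^t$. Note also that convexity is a purely linear-algebraic property, so the choice of the Frobenius metric $\rho$ plays no role.
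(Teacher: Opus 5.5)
Your proof is correct and follows essentially the same route as the paper. For $\mathcal{P}_l(\mathbb{R},(C+1)^{-1},C+1)$ both arguments pass the two-sided quadratic-form bound $(C+1)^{-1}\le\langle x,Px\rangle\le C+1$ for unit $x$ to convex combinations, and for $\mathcal{S}_l(\mathbb{R},C)$ the paper uses the triangle inequality for the Frobenius norm $\rho(\cdot,0)$, whereas you use the operator norm $s_1$. Your choice actually matches the defining condition $s_1(V_n)\le C$ of $\mathcal{V}^C$, and since every norm ball is convex, either norm works.
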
\begin{proof}Firstly, let $V_1, V_2 \in \mathcal{S}_l(\mathbb{R}, C)$ and $\theta \in [0, 1]$. By the triangle inequality and by the absolute homogeneity of the Frobenius norm $\rho$, one has $$\rho(\theta V_1 + (1-\theta)V_2, 0) \leq \theta \rho(V_1, 0) + (1-\theta) \rho(V_2, 0).$$ Since $\rho(V_j, 0) \leq C$ for $j=1,2$, it follows that $\rho(\theta V_1 + (1-\theta)V_2, 0) \leq C$, thus $\theta V_1 + (1-\theta)V_2 \in \mathcal{S}_l(\mathbb{R}, C)$.
     
     Now, let $D_1, D_2 \in \mathcal{P}_l(\mathbb{R}, (C+1)^{-1}, C+1)$. The condition $(C+1)^{-1} I \le D_j \le (C+1) I$ is equivalent, for each unit vector $x \in \mathbb{R}^l$,  to
    $$(C+1)^{-1} \leq \langle x, D_j x \rangle \leq C+1, \quad j=1,2.$$

     For the convex combination $D_\theta = \theta D_1 + (1-\theta)D_2$, the linearity of the inner product yields
    $$\langle x, D_\theta x \rangle = \theta \langle x, D_1 x \rangle + (1-\theta) \langle x, D_2 x \rangle.$$
    Since $\langle x, D_\theta x \rangle$ is a convex combination of two values in $[(C+1)^{-1}, C+1]$, it remains in the same interval. As this holds for every unit vector $x$, the singular values (which coincide with the eigenvalues here) of $D_\theta$ are bounded by $(C+1)^{-1}$ and $C+1$, which proves that $D_\theta \in \mathcal{P}_l(\mathbb{R}, (C+1)^{-1}, C+1)$.
    \end{proof}
    
    \vspace{0.5cm}
    Thus, $\mathcal{V}^C = \prod_{n\in\mathbb{Z}}\mathcal{P}_l(\mathbb{R},(C+1)^{-1},C+1)\times\mathcal{S}_l(\mathbb{R},C)$ is compact in the product topology. Such  topology is metrizable, and a possible choice for the metric is given by 
    \begin{equation}\label{62}
        \rho_{\infty}(V,W)=\sum_{n=-\infty}^{\infty}2^{-|n|}\left(\rho\left(D_n^V,D_n^W\right) + \rho\left(V_n^V,V_n^W\right)\right)\ ,
    \end{equation}
  with $V,W:\mathbb{Z}\to \mathcal{P}_l(\mathbb{R},(C+1)^{-1},C+1)\times\mathcal{S}_l(\mathbb{R},C)$ given by $V(n)=(D_n^V,V_n^V)$ for each $n\in\mathbb{Z}$ (and similarly for $W$).
    
  Now, for each $V:A\to \mathcal{P}_l(\mathbb{R},(C+1)^{-1},C+1)\times\mathcal{S}_l(\mathbb{R},C)$,  $W:B\to \mathcal{P}_l(\mathbb{R},(C+1)^{-1},C+1)\times\mathcal{S}_l(\mathbb{R},C)$, with $A, B \subset\mathbb{Z}$, one defines 
    \begin{equation}\label{63a}
        \rho_{\infty}(V,W)=\sum_{n\in A\cap B}2^{-|n|}\left(\rho\left(D_n^V,D_n^W\right) + \rho\left(V_n^V,V_n^W\right)\right)\ . 
    \end{equation}

    \begin{obs}
        In order to make the notation less cumbersome, set for each $n\in\mathbb{Z}$
      $$\rho\left(V(n),W(n)\right):=\rho\left(D_n^V,D_n^W\right) + \rho\left(V_n^V,V_n^W\right).$$
        Then, one can write 
         $$\rho_{\infty}(V,W)=\sum_{n=-\infty}^{\infty}2^{-|n|}\ \rho\left(V(n),W(n)\right)\ .$$      
    \end{obs}

    One may refer to each $V\in\mathcal{V}^C$ as a potential, or even as a whole-line potential. If $V\in\mathcal{V}^C$, one writes $V_{\pm}$ for the restrictions of $V$ to $\mathbb{Z}_\pm$, and one denotes the sets of such restrictions by
    \[\mathcal{V}_{\pm}^C=\{V_\pm\mid V\in\mathcal{V}^C\}.\]

    Naturally, \eqref{63a} becomes, in this case, 
\begin{equation}\label{63b}
  \rho_{\infty}(V_\pm,W_\pm)=\sum_{n\in \mathbb{Z}_\pm}2^{-|n|}\rho\left(V_\pm(n),W_\pm(n)\right)\ . 
    \end{equation}

 Now, let $V\in\mathcal{V}^C$ be a potential defined in $\mathbb{Z}_+$ (that is, a half-line potential). Then, set $$\omega(V):=\{W\in \mathcal{V}^C\mid \text{there is a sequence}\ n_j\to\infty\ \text{such that}\ \rho_{\infty}(S^{n_j}V,W)\to 0\},$$ where for each $k,n\in\mathbb{Z}_+$, $(S^kV)(n)=V(n+k)$.

    \begin{pro}\label{34}
 Let $V\in\mathcal{V}^C_+$ for some $C>0$. Then, the set $\omega(V)\subset\mathcal{V}^C$ is compact, nonempty, and S is a homeomorphism in $\omega(V)$. Furthermore, $\lim_{n \to \infty}\rho_{\infty}(S^{n}V,\omega(V))\to 0$.
    \end{pro}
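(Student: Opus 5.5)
The plan is to run the standard $\omega$-limit set argument from topological dynamics on the compact metric space $(\mathcal{V}^C,\rho_\infty)$. There is one technical point to settle first. $V$ is a half-line potential, and $S^nV$ is defined on $\{-n,-n+1,\dots\}$. So $\rho_\infty(S^{n}V,W)$ is understood through \eqref{63a}, with the sum running over the common domain. To work inside $\mathcal{V}^C$, I would extend each $S^nV$ to a whole-line potential $\widetilde{S^nV}\in\mathcal{V}^C$ by putting a fixed admissible value, say $(I,0)$, at the sites $k<-n$. The discrepancy is then controlled by
\[
|\rho_\infty(S^nV,W)-\rho_\infty(\widetilde{S^nV},W)|\le \sum_{k<-n}2^{-|k|}\cdot 2\,\mathrm{diam}\to 0 ,
\]
where $\mathrm{diam}$ is the diameter of the compact factor $\mathcal{P}_l\times\mathcal{S}_l$. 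Hence convergence in either sense is the same.

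\emph{Nonempty and compact.} The sequence $(\widetilde{S^nV})$ lies in the compact space $\mathcal{V}^C$ (compact by Tychonoff, as noted before \eqref{62}). It therefore has a convergent subsequence, and its limit lies in $\omega(V)$. For closedness, write
\[
\omega(V)=\bigcap_{N}\overline{\{\widetilde{S^nV}: n\ge N\}} .
\]
This is an intersection of closed subsets of a compact space, so it is compact. Equivalently, a diagonal argument shows that a limit of points of $\omega(V)$ is again in $\omega(V)$.

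\emph{$S$ is a homeomorphism of $\omega(V)$.} On $\mathcal{V}^C$ the shift $S$ is a bijection with inverse $S^{-1}$. Both are continuous, since
\[
\rho_\infty(SV,SW)=\sum_n 2^{-|n|}\rho(V(n+1),W(n+1))\le 2\rho_\infty(V,W),
\]
using $2^{-|n|}\le 2\cdot 2^{-|n+1|}$, and the same estimate holds for $S^{-1}$. For invariance, take $W\in\omega(V)$ with $S^{n_j}V\to W$. Continuity gives $S^{n_j\pm1}V\to S^{\pm1}W$, and $n_j\pm1\to\infty$, so $SW,S^{-1}W\in\omega(V)$. Thus $S(\omega(V))=\omega(V)$. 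A continuous bijection of a compact metric space onto itself is a homeomorphism, so this part is done.

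\emph{Convergence to $\omega(V)$.} Suppose not. Then there are $\delta>0$ and $n_j\to\infty$ with $\rho_\infty(S^{n_j}V,\omega(V))\ge\delta$. By compactness a further subsequence converges to some $W$, and by definition $W\in\omega(V)$. This contradicts the bound, since $\rho_\infty(S^{n_j}V,W)\to0$ along that subsequence.

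The only real obstacle is the bookkeeping caused by the half-line domain. This affects the well-definedness of $S^nV$ as an element of a whole-line space, and continuity of $S$ with respect to the truncated metric \eqref{63a}. The extension trick above handles both, because the tail discrepancy vanishes uniformly.
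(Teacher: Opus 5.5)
Your proof is correct and is essentially the argument the paper relies on: the paper omits the proof and cites Remling's Proposition~1.3, which is this standard topological-dynamics argument (compactness of $\mathcal{V}^C$ for nonemptiness and closedness, continuity of $S^{\pm1}$ for invariance, and a subsequence argument for $\rho_\infty(S^nV,\omega(V))\to 0$). Extending $S^nV$ by $(I,0)$ is a sound way to get around the fact that the truncated quantity \eqref{63a} is not a genuine metric, and because this extension commutes with the shift, every step you give goes through.
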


    \begin{proof} Since the proof is identical to the proof of Proposition~1.3 in~\cite{remling}, we omit it.
%
  %
    \end{proof}

    \vspace{0.3cm}
    \subsection[Weyl-Titchmarsh matrix-valued m-functions for the restrictions J+ and J-]{Weyl-Titchmarsh matrix-valued $m$-functions for the restrictions $J_{\pm}$}\label{114}
    
    We are now interested in showing that 
    the maps $$\mathcal{V}^C_\pm\to \mathcal{H},\ \ \ \ W_\pm \mapsto M_\pm^W(0,\cdot)\,,$$ are homeomorphisms onto their images, where we endow $\mathcal{V}^C_\pm$ with the metric defined by~\eqref{63b} and $\mathcal{H}$ with the metric of the uniform convergence on compact sets of $\mathbb{C}_+$.

    Here, $M_\pm^W(0,z)$ stand for the Weyl-Titchmarsh $m$-functions $M_{\pm}(n,z)$ for $n=0$, $z\in\mathbb{C}_+$, and for the Jacobi matrix-valued operator  $J$ associated with $W\in\mathcal{V}^C$. Such functions $M_{\pm}^W(0,z)$ are the Weyl-Titchmarsh $m$-functions of the operator $J$ restricted to the half-lines $\mathbb{Z}_\pm$, and will be denoted simply by $M_\pm(z)$ (see Subsection~\ref{SMM} for details). 

Recall from Proposition~\ref{3.3} and Lemma~\ref{84} (in case $n=0$) that for each $z\in\mathbb{C}_+$,
\begin{equation}\label{MeJ} M_+(z)=\langle\Delta_{1},(J_{+}-zI)^{-1}\Delta_{1}\rangle\, ,
\end{equation}
and (after performing some algebraic manipulations) that
\begin{equation*} M_-(z)=D_0^{-1}\left(zI-V_0+D_{-1}\langle\Delta_{-1},(J_{-}-zI)^{-1}\Delta_{-1}\rangle\ D_{-1}\right)D_0^{-1} 
\end{equation*}
(compare such identities with the ones presented in page 15 in~[\ref{remling}]).

Then, by Spectral Theorem, there exists a positive definite matrix-valued Borel measure of compact support $\mu_+$ such that for each $i,j\in\{1,\ldots,l\}$,
$$\left(M_+(z)\right)_{ij}=\langle\delta^1_i,(J_+-zI)^{-1}\delta^1_j\rangle=\int \frac{d\langle e_i,\mu_+(\lambda)e_j\rangle}{\lambda-z}\, ,$$ where $\{e_1,\ldots,e_l\}$ stands for the canonical basis of $\mathbb{C}^l$. It follows again from Spectral Theorem that for each $n\in\mathbb{N}$ and each $a,b\in\mathbb{C}^l$, $$\langle\delta_1\otimes a,(J_+)^n\delta_1 \otimes b\rangle=\int \lambda^n d\langle a,\mu_+(\lambda)b\rangle\, .$$

\begin{obs}\label{OBSFUND} In particular, the set
\begin{equation*}
  \Sigma_{ac,l}(J_+):=\{\lambda\in\mathbb{R}\mid\lim_{\epsilon\to 0^+} M_+(\lambda+i\epsilon)\ \textrm{exists\ finitely}, \ \operatorname{rank}(\operatorname{Im}(M_+(\lambda+i0))) = l\}
  \end{equation*}
is an essential support for the absolutely continuous spectrum of multiplicity $l$ of $J_+$. 
This fact plays a fundamental role in the results of this section and of the following one.
\end{obs}

In the same  vein, there exists another positive definite matrix-valued Borel measure of compact support $\mu_-$ such that for each $n\in\mathbb{N}$ and each $a,b\in\mathbb{C}^l$, $$\langle\delta_{-1}\otimes a,(J_-)^n(\delta_{-1} \otimes b)\rangle=\int \lambda^n d\langle a,\mu_-(\lambda)b\rangle\, ,$$
and so (by Spectral Theorem), \[M_-(z)=D_0^{-1}\left(zI-V_0+D_{-1}\int\frac{d\mu_-(\lambda)}{\lambda-z}\ D_{-1}\right)D_0^{-1}.\]
    
    \begin{lema}\label{32}
      The maps $$\mathcal{V}^C_\pm\to \mathcal{H},\ \ \ \ W_\pm \mapsto M_\pm(\cdot),$$ are homeomorphisms onto their images. 
    \end{lema}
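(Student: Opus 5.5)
The plan is to follow Remling's scalar argument (Lemma~1.4 of [\ref{remling}]). Since $\mathcal{V}^C_\pm$ is compact (a closed subset of a product of compact sets) and $\mathcal{H}$, with the topology of uniform convergence on compact subsets of $\mathbb{C}_+$, is Hausdorff, it suffices to prove two facts. First, the map $W_\pm\mapsto M_\pm$ is continuous. Second, it is injective. A continuous bijection from a compact space onto a Hausdorff space is a homeomorphism onto its image. I treat $M_+$; the case of $M_-$ is identical, using the representation of $M_-(z)$ through $D_0,V_0,D_{-1}$ and $\mu_-$ given above.

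\emph{Continuity.} Let $W^{(k)}\to W$ in $\rho_\infty$. Convergence in this metric is exactly entrywise convergence of $D_n^{(k)}\to D_n$ and $V_n^{(k)}\to V_n$ for every fixed $n$. The operators $J_+^{(k)}$ are uniformly bounded, since $\|J_+\|\le 2(C+1)+C$. Because $\Delta_1$ is finitely supported, $J_+^{(k)}\to J_+$ strongly on finitely supported vectors, and hence strongly everywhere by the uniform bound. The resolvent identity
\[
(J_+^{(k)}-z)^{-1}-(J_+-z)^{-1}=(J_+^{(k)}-z)^{-1}(J_+-J_+^{(k)})(J_+-z)^{-1}
\]
and the bound $\|(J-z)^{-1}\|\le (\operatorname{Im}z)^{-1}$ then give strong resolvent convergence. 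By \eqref{MeJ}, this yields pointwise convergence $M_+^{(k)}(z)\to M_+(z)$. The family is locally uniformly bounded on compact subsets of $\mathbb{C}_+$ (again by $\|M_+(z)\|\le(\operatorname{Im}z)^{-1}$). Its derivatives are also uniformly bounded there, since $M_+'(z)=\langle\Delta_1,(J_+-z)^{-2}\Delta_1\rangle$ has norm at most $(\operatorname{Im}z)^{-2}$. Equicontinuity (or Vitali/Montel applied entrywise) therefore upgrades pointwise convergence to uniform convergence on compacts.

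\emph{Injectivity.} Suppose $M_+^W=M_+^{W'}$. Then the matrix measures $\mu_+$ coincide, so all moments $\int\lambda^n\,d\mu_+=\langle\Delta_1,J_+^n\Delta_1\rangle$ agree. Equivalently, one can compare the large-$z$ expansions $M_+(z)=-\sum_n z^{-n-1}\langle\Delta_1,J_+^n\Delta_1\rangle$. From these moments I recover the coefficients recursively, using the Riccati relation of Lemma~\ref{84}:
\[
D_nM_+(n)D_n=-M_+(n-1)^{-1}-(zI-V_n).
\]
Expanding $-M_+(0,z)^{-1}=zI-V_1+D_1M_+(1,z)D_1$ as $z\to\infty$, with $M_+(1,z)=-z^{-1}I+O(z^{-2})$, the constant term gives $V_1$. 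The $z^{-1}$ term gives $D_1^2$, and since $D_1$ is positive definite it is determined as the unique positive square root. Then $M_+(1,\cdot)=-D_1^{-1}\bigl(M_+(0)^{-1}+zI-V_1\bigr)D_1^{-1}$ is determined, and induction recovers all $(D_n,V_n)$, $n\ge1$. (An index shift relative to Lemma~\ref{84} is needed here, and I expect to have to adjust which $D_n$ appears.)

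For $M_-$ the analogous expansion of $M_-(z)=D_0^{-1}(zI-V_0+\cdots)D_0^{-1}$ yields $D_0^{-2}$ from the leading $z$ coefficient. It yields $V_0$ from the next term (after conjugation by the known $D_0$), and $D_{-1}\mu_-(\mathbb{R})D_{-1}=D_{-1}^2$ thereafter, with the recursion continuing leftwards.

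\emph{Main obstacle.} The delicate point is injectivity in the matrix setting. One must extract $D_n$ from $D_n^2$ without ambiguity. The scalar sign choice $a_n>0$ is replaced here by positive definiteness of $D_n$, which is exactly what the space $\mathcal{V}^C$ imposes, because $D^2$ determines $D$ only among positive matrices. One must also keep the noncommutative conjugations $D_n(\cdot)D_n$ in the correct order throughout the recursion.
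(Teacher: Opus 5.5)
Your proof is correct. It has the same skeleton as the paper's: injectivity, continuity, and then continuity of the inverse from compactness of $\mathcal{V}^C_\pm$. Both main steps, however, use different tools.

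\textbf{Injectivity.} The paper computes the moments $\langle\Delta_1,J_+^k\Delta_1\rangle$ explicitly for $k=1,2,3$, obtaining $V_1$, then $D_1^2+V_1^2$, then $D_1V_2D_1$ plus known terms. It then asserts that the odd and even moments continue to isolate $D_1\cdots D_{n-1}V_nD_{n-1}\cdots D_1$ and $D_1\cdots D_n^2\cdots D_1$ modulo ``some function of'' earlier coefficients. You instead peel off one site at a time with the Schur-complement/Riccati identity of Lemma~\ref{84}. This makes each inductive step literally identical to the first: after recovering $V_1,D_1$ you face the same problem for $M_+(1,\cdot)$, the $m$-function of the shifted potential. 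The moment bookkeeping is thereby avoided. The index adjustment you anticipate is not needed, since Lemma~\ref{84} with $n=1$ reads exactly $D_1M_+(1,z)D_1+M_+(0,z)^{-1}+zI-V_1=0$.

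\textbf{Continuity.} The paper appeals to the expansion \eqref{30} and to continuous dependence of finitely many moments on $W(1),\dots,W(N)$. Strictly speaking, this controls $M_+$ only where \eqref{30} converges, i.e.\ for $|z|>3C+2$. Passing to arbitrary compact subsets of $\mathbb{C}_+$ requires a Vitali-type normal-families argument, using $\|M_+(z)\|\le(\operatorname{Im}z)^{-1}$, which the paper leaves implicit. Your strong-resolvent-convergence argument, combined with the derivative bound $(\operatorname{Im}z)^{-2}$, gives locally uniform convergence on all of $\mathbb{C}_+$ directly, so on this point your version is the more complete one.

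The paper's route, in return, stays closer to Remling's moment-based presentation and ties the reconstruction explicitly to the spectral measure $\mu_+$.
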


    \begin{proof}
      We just prove the result for $W_+$ and $M_+$; the proof for $W_-$ and $M_-$ is almost identical to this one. Firstly, we show that the map is bijective.

      Naturally, the map $W_+\mapsto M_+$ is well defined: given a (half-line) potential $W_+\in\mathcal{V}_+^C$, there exists a unique Jacobi operator $J_+$ defined by $W_+$, which by its turn determines, by relation~\eqref{MeJ}, a unique matrix-valued Weyl-Titchmarsh $m$-function $M_+$. Now, one needs to show that the map $M_+\mapsto W_+$ is also well defined. 

      Given that the matrix-valued measure $\mu_+$ has bounded support, it follows from relation~\eqref{MeJ} that one can write 
                \begin{equation}\label{30}
            M_+(w^{-1})=-\sum_{n\geqslant0}w^{n+1}\int \lambda^n\ d\mu_+(\lambda)\ 
        \end{equation}
                on a disk centered at $w=0$, and so $M_+(w^{-1})$ determines $\langle\delta_1\otimes a,(J_+)^n\delta_1 \otimes b\rangle$ for each $n\in\mathbb{N}$.

                On the other hand, 
                $$J_+(\delta_1 \otimes a)=\delta_2 \otimes D_1a+\delta_1 \otimes V_1a\, ,$$ and so, $$\langle b, V_1a\rangle\ =\ \langle\delta_1 \otimes b, J_+(\delta_1 \otimes a)\rangle\ ;$$hence, $\langle\Delta_1, J_+\Delta_1\rangle$ determines $V_1$. Now, one has for each $a\in\mathbb{C}^l$ that
\begin{align}
            J_+^{2}(\delta_1 \otimes a)&=\delta_1\otimes D_1^2a+\delta_3\otimes D_2D_1a \nonumber \\ &+\delta_2\otimes D_1V_1a+\delta_2\otimes V_2D_1a+\delta_1\otimes V_1^2a \nonumber \\ &=\delta_1\otimes \left(D_1^2+V_1^2\right)a \nonumber \\ &+\delta_2\otimes \left(D_1V_1+V_2D_1\right)a+\delta_3\otimes D_2D_1a\, , \nonumber
        \end{align}
and so        
$$\langle b,D_1^2a\rangle\ =\ \langle\delta_1 \otimes b, J_+^{2}(\delta_1 \otimes a)\rangle-\langle b,V_1^2a\rangle\ ,\qquad b\in\mathbb{C}^l\, .$$ 

Since $D_1$ is positive definite, it is the unique solution to the equation $A^2=D_1^2$. 
Therefore, $D_1^2$ completely determines $D_1$, hence $\langle\Delta_1, J^2_+\Delta_1\rangle$  and $V_1$ determine $D_1$.

One has, for each $a\in\mathbb{C}^l$, 
\begin{align}
            J^{3}_+(\delta_1 \otimes a)&=\delta_1\otimes (D_1^2V_1+D_1V_2D_1)a+\delta_3\otimes (D_2D_1V_1+D_2V_2D_1)a \nonumber \\ &+\delta_2\otimes (D_1^3+D_1V_1^2)a+\delta_2\otimes (D_2^2D_1)a+\delta_4\otimes (D_3D_2D_1)a \nonumber \\ &+\delta_1\otimes (V_1D_1^2+V_1^3)a+\delta_2\otimes (V_2D_1V_1+V_2^2D_1)a+\delta_3\otimes (V_3D_2D_1)a \nonumber \\ &=\delta_1\otimes (D_1^2V_1+V_1D_1^2+V_1^3+D_1V_2D_1)a \nonumber \\ &+\delta_2\otimes (D_1^3+D_1V_1^2+D_2^2D_1+V_2D_1V_1+V_2^2D_1)a \nonumber \\ &+\delta_3\otimes (D_2D_1V_1+D_2V_2D_1+V_3D_2D_1)a \nonumber \\ &+\delta_4\otimes (D_3D_2D_1)a\ , \nonumber 
\end{align}
from which follows that for each $b\in\mathbb{C}^l$,
           \begin{equation*}
            \langle\delta_1 \otimes b, J^3_+(\delta_1 \otimes a)\rangle\ =\ \langle b,D_1^2V_1a\rangle+\langle b,V_1D_1^2a\rangle+\langle b,V_1^3a\rangle+\langle b,D_1V_2D_1a\rangle\, ; 
        \end{equation*} 
hence
        \begin{equation*}
            \langle b,D_1V_2D_1a\rangle\ =\ \langle\delta_1 \otimes b, J_+^{3}(\delta_1 \otimes a)\rangle-\langle b,V_1^3a\rangle-\langle b,D_1^2V_1a\rangle-\langle b,V_1D_1^2a\rangle\ . 
        \end{equation*}

        This shows that $V_2=D_1^{-1}\left(\langle\Delta_1, J^{3}_+\Delta_1\rangle-V_1^3-D_1^2V_1-V_1D_1^2\right)D_1^{-1}$.  One can show, in the same way, that for each $n\in\mathbb{N}$, 
        \begin{align}
            &\langle b, D_1\cdots D_{n-1}V_nD_{n-1}\cdots D_1a\rangle\ =\nonumber \\ &=\ \langle\delta_1\otimes b, (J_+)^{2n-1}(\delta_1 \otimes a)\rangle+\ \text{some function of}\ \ V_1,\ldots, V_{n-1},D_1,\ldots, D_{n-1} \nonumber
        \end{align}
and
        \begin{align}
            &\langle b, D_1\cdots D_{n-1}D_n^2D_{n-1}\cdots D_1a\rangle\ =\nonumber \\ &=\ \langle \delta_1\otimes b, (J_+)^{2n}(\delta_1 \otimes a)\rangle+\ \text{some function of}\ \ V_1,\ldots, V_n,D_1,\ldots, D_{n-1}\ . \nonumber
        \end{align}
      
 Therefore, by the same arguments presented above, one concludes that $\big\{\langle\Delta_1,(J_+)^n\Delta_1\rangle\big\}_{n\ge 1}$ inductively determines $\{V_n\}_{n\geqslant1}$ and $\{D_n\}_{n\geqslant1}$ .

 It remains to show that the map $W_+ \mapsto M_+$ is a homeomorphism; its continuity follows from identity~\eqref{30} and the fact that for each $0\le n\le 2N$, the moments $\mu_n=\int\lambda^n\ d\mu_+(\lambda)$ depend continuously on $W(1),\ldots, W(N)$. 

 Since the inverse of a continuous map between two compact metric spaces is continuous, it follows that $M_+\mapsto W_+$ is also continuous. 
    \end{proof}

    \vspace{0.3cm}
    \subsection[The reflectionless property of the omega-limit set]{The reflectionless property of the $\omega$-limit set}\label{103}

    Our main goal in this subsection is to extend Theorem 1.4 and Proposition 4.1 of [\ref{remling}] to the case of matrix-valued Jacobi operators. In order to do this, one needs to show that the matrix-valued harmonic measures $\omega^{\alpha}_{F(z)}(S)$ depend continuously on $F(z)\in\mathfrak{S}_l$  with respect to the metric $d_\infty$ and to the spectral norm. Since the proofs of these results are quite long, we have opted to present then in Appendix \ref{92}.

     \begin{lema}\label{21.}
       Let $0<\epsilon<1/2$, let $F,H \in\mathcal{H}$ and let $z\in\mathbb{C}_+$ be such that 
       $d_{\infty}(F(z),H(z)) < \epsilon$. Then,
       \begin{enumerate}
         \item there exists a function $u:\mathbb{R}_+\times\mathbb{R}_+\rightarrow\mathbb{R}_+$, with $\lim_{x\downarrow 0}u(x,y)=0$ for each $y\in\mathbb{R}_+$, such that for each $S\in\mathcal{B}(\mathbb{R})$ with $|S|<\infty$,
            $$\left\|\omega_{F(z)}^{I}(S)-\omega_{H(z)}^{I}(S)\right\|\leqslant u(\epsilon, |S|)\,;$$
            
            \item there exists a function $v:\mathbb{R}_+\times\mathbb{R}_+\rightarrow\mathbb{R}_+$, with $\lim_{x\downarrow 0}v(x,y)=0$ for each $y\in\mathbb{R}_+$, such that for each $C>0$ and each $S\in\mathcal{B}((-C,C))$, 
            $$\left\|\omega_{F(z)}^{R}(S)-\omega_{H(z)}^{R}(S)\right\|\leqslant v(\epsilon, C).$$
        \end{enumerate}
    \end{lema}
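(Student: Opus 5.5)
The plan is to reduce both items to Lipschitz-type bounds for the matrix integrands, as functions of $F(z)$, and then integrate over $S$. Write $F(z)=X_1+iY_1$ and $H(z)=X_2+iY_2$. The first step converts the hypothesis $d_\infty(F(z),H(z))<\epsilon$ into operator-norm control. Take a near-minimizing path $Z(t)=X(t)+iY(t)$ from $F(z)$ to $H(z)$. Along it we have $\|Y^{-1/2}\dot Y Y^{-1/2}\|\le F_{Z(t)}(\dot Z(t))$, so $\log Y(t)$ moves by at most about $\epsilon$. The standard Gronwall argument then gives
\[
e^{-2\epsilon}Y_1\le Y_2\le e^{2\epsilon}Y_1 .
\]
Consequently $\|Y_1^{-1/2}Y_2Y_1^{-1/2}-I\|\le e^{2\epsilon}-1$. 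Together with $\|Y^{-1/2}\dot XY^{-1/2}\|\le F_Z(\dot Z)$, this yields
\[
\|Y_1^{-1/2}(X_2-X_1)Y_1^{-1/2}\|\le \epsilon e^{2\epsilon}
\quad\text{and}\quad
\|X_2-X_1\|\le \epsilon e^{2\epsilon}\max_t\|Y(t)\|\le \epsilon e^{4\epsilon}\|Y_1\| .
\]

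\textbf{Item 1.} Set $A_j=Y_j^{1/2}+I\ge I$. Operator monotonicity of the square root and the comparison above give $\|A_2-A_1\|\lesssim(e^{\epsilon}-1)\|Y_1\|^{1/2}$. This bound depends on $\|Y_1\|$, which is not uniform. To avoid that dependence, use instead the relative bound
\[
e^{-\epsilon}A_1\le A_2\le e^{\epsilon}A_1 ,
\]
which holds because $Y_2^{1/2}\le e^{\epsilon}Y_1^{1/2}$ and $1\le e^{\epsilon}$. Next write the integrand through the resolvent identity:
\[
(\lambda-iA_1)^{-1}-(\lambda-iA_2)^{-1}=(\lambda-iA_1)^{-1}\,i(A_1-A_2)\,(\lambda-iA_2)^{-1}.
\]
Its norm is at most $\|(\lambda-iA_1)^{-1}(A_1-A_2)\|\,\|(\lambda-iA_2)^{-1}\|$. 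Using $\|(\lambda-iA_1)^{-1}A_1\|\le1$ and $\|A_1^{-1}(A_1-A_2)\|\lesssim e^{\epsilon}-1$ (after symmetrizing with $A_1^{\pm1/2}$), and $\|(\lambda-iA_2)^{-1}\|\le(1+\lambda^2)^{-1/2}\le1$, the integrand difference is bounded by $c(e^{\epsilon}-1)$. Integrating over $S$ then gives
\[
u(\epsilon,|S|)=c\,(e^{\epsilon}-1)\,|S|/\pi .
\]

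\textbf{Item 2.} Set $N_j=\|Y_j+I\|$ and $B_j=N_j^{-1}X_j$, so that the integrand is $\operatorname{Im}(B_j-\lambda+i)^{-1}$. Its difference is bounded by $\|(B_1-\lambda+i)^{-1}\|\,\|B_1-B_2\|\,\|(B_2-\lambda+i)^{-1}\|\le\|B_1-B_2\|$. For the key estimate,
\[
\|B_1-B_2\|\le N_1^{-1}\|X_1-X_2\|+\|X_2\|\,|N_1^{-1}-N_2^{-1}| .
\]
The first term is at most $\epsilon e^{4\epsilon}\|Y_1\|/\|Y_1+I\|\le\epsilon e^{4\epsilon}$. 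The second term needs $|N_1-N_2|\le(e^{2\epsilon}-1)N_1$, so it equals $\|B_2\|\,|N_2/N_1-1|$. Here $\|B_2\|$ is not bounded a priori. This is the main obstacle, and the reason $C$ enters the bound.

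To handle it, split according to whether $\|B_2\|$ is large relative to $C$. If $\|B_2\|\le 2C+K$, the integrand bound gives a difference of at most $(2C+K)(e^{2\epsilon}-1)+\epsilon e^{4\epsilon}$, times $2C/\pi$ after integrating over $S\subset(-C,C)$. If $\|B_2\|$ is large, the eigenvalues of $B_1$ and $B_2$ that lie far outside $(-C,C)$ contribute Cauchy densities of size at most $1/((|\mu|-C)^2+1)$ on $S$. Then one of two things happens: the perturbation moves such an eigenvalue by an amount proportional to itself but with a smaller kernel, giving a bound of order $\epsilon$ again; or one simply uses the Lipschitz bound of the scalar Cauchy density in its location, together with the relative shift $\epsilon|\mu|$ weighted by $|\mu|^{-2}$. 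Both give $O(\epsilon)$ uniformly in the size of $\mu$. Combining the two cases defines $v(\epsilon,C)$, which tends to $0$ as $\epsilon\downarrow0$.

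I expect the hardest step to be the uniform control of the eigenvalues of $B$ far from $S$, in the matrix, non-commuting setting. To get it, I would first try passing through the resolvent with the weight $(|B_1|+I)$, that is, bounding
\[
\|(B_1-\lambda+i)^{-1}(|B_1|+I)\|\lesssim C
\]
for $\lambda\in(-C,C)$, and then using $\|(|B_1|+I)^{-1}(B_1-B_2)\|=O(\epsilon)$.
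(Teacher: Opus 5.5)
Your plan works and takes a different route from the paper, but two steps have to be stated differently before it goes through.

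\textbf{Item 1.} The bound $\|A_1^{-1}(A_1-A_2)\|\lesssim e^{\epsilon}-1$ does not follow from $e^{-\epsilon}A_1\le A_2\le e^{\epsilon}A_1$. That comparison controls only the symmetric quantity $\|A_1^{-1/2}(A_1-A_2)A_1^{-1/2}\|\le e^{\epsilon}-1$. Passing to the one-sided quantity can cost a factor of order $\|A_1^{1/2}\|\,\|A_1^{-1/2}\|$; take $A_2=A_1^{1/2}(I+E)A_1^{1/2}$ with $E$ small and off-diagonal. Split the whole product symmetrically instead:
\[
(\lambda-iA_1)^{-1}(A_1-A_2)(\lambda-iA_2)^{-1}=\big[(\lambda-iA_1)^{-1}A_1^{1/2}\big]\big[A_1^{-1/2}(A_1-A_2)A_1^{-1/2}\big]\big[A_1^{1/2}A_2^{-1/2}\big]\big[A_2^{1/2}(\lambda-iA_2)^{-1}\big].
\]
The outer brackets have norm at most $1$ because $A_j\ge I$. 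The second is at most $e^{\epsilon}-1$. The third is at most $e^{\epsilon/2}$ because $A_1\le e^{\epsilon}A_2$. Note that the last factor is not just $\|(\lambda-iA_2)^{-1}\|$. With this splitting, $u(\epsilon,|S|)=e^{\epsilon/2}(e^{\epsilon}-1)|S|/\pi$ works.

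\textbf{Item 2.} Discard the case split. Since $B_1$ and $B_2$ need not commute, following the eigenvalues that lie far from $(-C,C)$ does not control the matrix difference. Your closing weighted-resolvent idea is already a complete proof. Write
\[
B_2-B_1=N_2^{-1}(X_2-X_1)+\big(N_1/N_2-1\big)B_1 .
\]
Your bounds then give $\|(|B_1|+I)^{-1}(B_2-B_1)\|\le \epsilon e^{6\epsilon}+e^{2\epsilon}-1$. Moreover $\|(B_1-\lambda+i)^{-1}(|B_1|+I)\|\le C+2$ for $|\lambda|<C$, and $\|(B_2-\lambda+i)^{-1}\|\le 1$. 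Integrating over $S\subset(-C,C)$ gives $v(\epsilon,C)=\frac{2C(C+2)}{\pi}\big(\epsilon e^{6\epsilon}+e^{2\epsilon}-1\big)$, with no case distinction.

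\textbf{Comparison with the paper.} The paper converts $d_\infty(F(z),H(z))<\epsilon$ into norm bounds through the closed formula \eqref{15} and a Cartan decomposition of $S_{Z_1}^{-1}S_{Z_2}$. This directly yields the one-sided bounds \eqref{69}, such as $\|Y_1^{-1/2}Y_2^{1/2}-I\|<2\epsilon$, which is why it can work with $X^{-1}(A-B)$ in item 1. Your path-plus-Gronwall argument uses only the definition \eqref{3.15} and gives symmetric Loewner bounds. These suffice for operator monotonicity, but they force the symmetric splitting above.

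For the integrands, the paper uses a Hadamard-type formula $\int_0^1 f'(tA+(1-t)B)(A-B)\,dt$. In item 2 it handles the unbounded term $\gamma B$ by absorbing $B$ into $((X-\lambda I)^2+I)^{-1}$, paying a factor $|\lambda|<C$. This is the same mechanism as your weight $|B_1|+I$.

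Your resolvent identity has the advantage of being exact for non-commuting matrices. The paper computes $f'(X)$ by functional calculus and then multiplies by $A-B$. In \eqref{77} it also moves $B$ past a function of $B+\beta t\mathcal{C}$. Both steps are literally valid only when the matrices involved commute; in general one needs the Fr\'echet derivative. Your route therefore gives a more robust proof, with constants of the same order.
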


    \begin{lema}\label{28.}
      Let $0<\epsilon<1/2$, let $F,H \in\mathcal{H}$ and let $z\in\mathbb{C}_+$ be such that 
      $\|F(z)-H(z)\|<\epsilon$. Then,
        \begin{enumerate} 
        \item  there exists a function $u:\mathbb{R}_+\times\mathbb{R}_+\rightarrow\mathbb{R}_+$, with $\lim_{x\downarrow 0}u(x,y)=0$ for each $y\in\mathbb{R}_+$, such that for each $S\in\mathcal{B}(\mathbb{R})$ with $|S|<\infty$,
          \[\left\|\omega_{F(z)}^{I}(S)-\omega_{H(z)}^{I}(S)\right\|\leqslant u(\epsilon, |S|);\]
        \item there exists a function $v:\mathbb{R}_+\times\mathbb{R}_+\rightarrow\mathbb{R}_+$, with $\lim_{x\downarrow 0}v(x,y)=0$ for each $y\in\mathbb{R}_+$, such that for each $C>0$ and each $S\in\mathcal{B}((-C,C))$, 
          \[\left\|\omega_{F(z)}^{R}(S)-\omega_{H(z)}^{R}(S)\right\|\leqslant v(\epsilon, C).\]
        \end{enumerate}
    \end{lema}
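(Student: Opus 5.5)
We argue directly from the definitions~\eqref{117} and~\eqref{107}, treating the integrands as resolvents of matrices and using the resolvent identity. Write $F=F(z)$, $H=H(z)$, $Y_F=\operatorname{Im}F$, $Y_H=\operatorname{Im}H$. Since $\|F-H\|<\epsilon$, we have $\|Y_F-Y_H\|<\epsilon$ and $\|\operatorname{Re}F-\operatorname{Re}H\|<\epsilon$. The plan is to bound the norm of the difference of the integrands pointwise in $\lambda$ and then integrate over $S$, exploiting either $|S|<\infty$ (case $I$) or $S\subset(-C,C)$ (case $R$).

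\emph{Case $\alpha=I$.} Set $A_F=Y_F^{1/2}+I$ and $A_H=Y_H^{1/2}+I$; both satisfy $A\ge I$. By operator monotonicity of the square root (Powers--Størmer type inequality), $\|Y_F^{1/2}-Y_H^{1/2}\|\le\|Y_F-Y_H\|^{1/2}<\epsilon^{1/2}$, hence $\|A_F-A_H\|<\epsilon^{1/2}$. For real $\lambda$, $A\ge I$ gives $\|(\lambda I-iA)^{-1}\|\le(\lambda^2+1)^{-1/2}$, because $\lambda I-iA$ is normal with eigenvalues $\lambda-ia_k$, $a_k\ge1$. The resolvent identity gives
\[
(\lambda I-iA_F)^{-1}-(\lambda I-iA_H)^{-1}=i(\lambda I-iA_F)^{-1}(A_F-A_H)(\lambda I-iA_H)^{-1},
\]
so the difference of the imaginary parts has norm at most $\epsilon^{1/2}(1+\lambda^2)^{-1}\le\epsilon^{1/2}$. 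Integrating over $S$ yields $u(\epsilon,|S|)=\pi^{-1}\epsilon^{1/2}|S|$. Alternatively, the bound $\min(\epsilon^{1/2}|S|/\pi,\ \epsilon^{1/2})$ even works uniformly, using $\int_{\mathbb{R}}(1+\lambda^2)^{-1}d\lambda=\pi$.

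\emph{Case $\alpha=R$.} Set $B_F=\|Y_F+I\|^{-1}\operatorname{Re}F$ and define $B_H$ similarly. The integrand is $\operatorname{Im}(B-\lambda I+iI)^{-1}$, and since $B$ is real symmetric, $\|(B-\lambda I+iI)^{-1}\|\le1$. The resolvent identity then bounds the integrand difference by $\|B_F-B_H\|$, giving $\|\omega^R_F(S)-\omega^R_H(S)\|\le\pi^{-1}2C\,\|B_F-B_H\|$. It remains to estimate
\[
\|B_F-B_H\|\le\frac{\|\operatorname{Re}F-\operatorname{Re}H\|}{\|Y_F+I\|}+\|\operatorname{Re}H\|\left|\frac{1}{\|Y_F+I\|}-\frac{1}{\|Y_H+I\|}\right|.
\]
The first term is at most $\epsilon$, since $\|Y_F+I\|\ge1$. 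For the second term, $\big|\|Y_F+I\|-\|Y_H+I\|\big|<\epsilon$ and both norms are at least $1$, so it is bounded by $\epsilon\|\operatorname{Re}H\|/\|Y_H+I\|$.

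\emph{Main obstacle.} This last factor $\|\operatorname{Re}H\|/\|Y_H+I\|$ is not controlled by $\epsilon$ and $C$ alone, since $\operatorname{Re}H$ may be huge while $\operatorname{Im}H$ stays bounded. To get a function $v(\epsilon,C)$ independent of $H$, I would first exploit that for large $\|B\|$ the measure has most of its mass outside $(-C,C)$. Diagonalize $B_H=P\Lambda P^T$. Eigendirections with $|\lambda_k|>2C$ contribute at most $\pi^{-1}\cdot2C/(\lambda_k-C)^2$, which is small. The difficulty is that $B_F$ and $B_H$ may not be simultaneously diagonalizable, so I would split the difference via spectral projections onto $\{|\lambda_k|\le R\}$ and $\{|\lambda_k|>R\}$ with $R=R(\epsilon,C)$. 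On the bounded part the resolvent estimate above applies with $\|\operatorname{Re}H\|\lesssim R\|Y_H+I\|$. On the unbounded part both harmonic measures are $O(C/R^2)$, after perturbing the relevant eigenvalues of $B_F$ by at most $\epsilon(1+R)$ (Weyl's inequality). Choosing $R=\epsilon^{-1/2}$ then gives $v(\epsilon,C)\to0$ as $\epsilon\downarrow0$. I expect this truncation and its interaction with non-commuting projections to be the technical core, in the same way the bounded-$S$ restriction is flagged in the Remark preceding Proposition~\ref{CVD}.
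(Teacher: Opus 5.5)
Your argument for $\alpha=I$ is correct and matches the paper's. Both use the Powers--St{\o}rmer inequality to get $\|(\operatorname{Im}F(z))^{1/2}-(\operatorname{Im}H(z))^{1/2}\|<\sqrt{\epsilon}$, then bound the integrand pointwise in $\lambda$, ending with $u(\epsilon,|S|)=\pi^{-1}\sqrt{\epsilon}\,|S|$. The paper goes through a mean-value (``Hadamard'') formula with $f_\lambda(x)=x/(x^2+\lambda^2)$ and bounds $f_\lambda'(X)(A-B)$ as a matrix product. Your resolvent identity is cleaner: it needs no commutation between $A$ and $B$, and it even gives the bound $\sqrt{\epsilon}$ uniformly in $S$.

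For $\alpha=R$, however, the proposal does not contain a proof. You correctly diagnose why bounding the middle factor of the resolvent identity by $\|B_F-B_H\|$ fails: $B_F-B_H=\mathcal{C}+\gamma B_H$, where $\mathcal{C}=\|Y_F+I\|^{-1}(\operatorname{Re}F-\operatorname{Re}H)$ and $\gamma=(\|Y_H+I\|-\|Y_F+I\|)/\|Y_F+I\|$, and $\|B_H\|$ is uncontrolled. But the spectral truncation you then sketch is left open at exactly its hard step. That step is the cross terms between spectral subspaces of the non-commuting $B_F$ and $B_H$, and there is no spectral gap at $R$ to control the eigenvectors. Also, the claimed $\epsilon(1+R)$ eigenvalue perturbation needs the relative form $B_F=(1+\gamma)B_H+\mathcal{C}$, not Weyl applied to $B_F-B_H$, which is itself unbounded.

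The missing observation is that no truncation is needed: keep $B_H$ next to its own resolvent. Since $\|\mathcal{C}\|<\epsilon$, $|\gamma|<\epsilon$, and $\|(B-\lambda I+iI)^{-1}\|\le 1$ for real symmetric $B$,
\[
\bigl\|(B_F-\lambda I+iI)^{-1}-(B_H-\lambda I+iI)^{-1}\bigr\|\le\|\mathcal{C}\|+|\gamma|\,\bigl\|B_H(B_H-\lambda I+iI)^{-1}\bigr\|.
\]
The functional calculus of the single matrix $B_H$, with eigenvalues $b_k$, gives $\|B_H(B_H-\lambda I+iI)^{-1}\|=\max_k|b_k|/\sqrt{(b_k-\lambda)^2+1}\le 1+|\lambda|$. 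For $\lambda\in S\subset(-C,C)$ the integrand difference is therefore at most $\epsilon(2+C)$, so one may take $v(\epsilon,C)=\frac{2}{\pi}\epsilon C(C+2)$.

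This is precisely the paper's mechanism (items 3--6 in the proof of Lemma~\ref{21.}, item 2, reused for Lemma~\ref{28.}). Writing $X-\lambda I=(1+t\gamma)(B+\beta t\mathcal{C}-\lambda\beta I)$, the paper re-expresses $B$ through $X-\lambda I$ and absorbs it into $((X-\lambda I)^2+I)^{-1}$. This yields $\|((X-\lambda I)^2+I)^{-1}\gamma B\|\le\gamma\beta(\|\mathcal{C}\|+|\lambda|+\frac12)$. Your intuition that large eigenvalues of $B_H$ carry little mass in $(-C,C)$ is exactly right; the resolvent factor implements it automatically. The boundedness of $S$ enters only through the factor $|\lambda|$.
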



    \begin{teo}\label{25}
        Let $J$ be a matrix-valued Jacobi operator with bounded coefficients. Then, for each $C>0$, each $A\in\mathcal{B}(\Sigma_{ac,l})$ such that $|A|>0$, and each $S\in\mathcal{B}((-C,C))$, one has
        \begin{equation*}
            \lim_{n\to\infty}\left\|\int_A\omega_{M_-(n,t)}^{\alpha}(-S)\ dt-\int_A\omega_{M_+(n,t)}^{\alpha}(S)\ dt\right\|=0. 
        \end{equation*}
        Moreover, the convergence is uniform in $S$.
    \end{teo}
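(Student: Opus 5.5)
We follow Remling's proof of Theorem~1.4 in [\ref{remling}], replacing the scalar ingredients by Proposition~\ref{19}, Lemma~\ref{47}, Lemma~\ref{21.} and Corollary~\ref{DDEC}.

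\textbf{Step 1 (reflection at the boundary).} We first record what the ac condition gives at time $n=0$. For a.e.\ $t\in A\subset\Sigma_{ac,l}$, $M_+(0,t)$ exists and $\operatorname{Im}M_+(0,t)>0$. For such $t$ we compare the harmonic measures of $M_+(0,t)$ and of $-\overline{M_+(0,t)}\in\mathfrak{S}_l$. Directly from \eqref{117}--\eqref{107}, replacing $F$ by $-\overline F$ keeps $\operatorname{Im}F$ and flips $\operatorname{Re}F$. For $\alpha=I$ this gives
\[
\omega^I_{-\overline F}(-S)=\omega^I_F(S),
\]
since the density is even in $\lambda$. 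For $\alpha=R$, the substitution $\lambda\mapsto-\lambda$ turns $\operatorname{Im}(-cX-\lambda I+iI)^{-1}$ into $\operatorname{Im}(cX-\lambda I+iI)^{-1}$ after conjugating, where $c=\Vert\operatorname{Im}F+I\Vert^{-1}$ and $X=\operatorname{Re}F$. So $\omega^\alpha_{-\overline F}(-S)=\omega^\alpha_F(S)$ for both values of $\alpha$.

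\textbf{Step 2 (propagation along the transfer matrices).} Next we propagate to time $n$. Let $t\in A$ with $y>0$ small, and write $z=t+iy$. By Lemma~\ref{90} and Lemma~\ref{85A},
\[
M_-(n,z)=P_-(n,z)M_-(0,z)\qquad\text{and}\qquad -\overline{M_+(n,t)}=P_-(n,t)\bigl(-\overline{M_+(0,t)}\bigr),
\]
using that $P_\pm(n,t)$ is real for real $t$ and commutes with the reflection of Lemma~\ref{85}(2). Corollary~\ref{88P} then says that $P_\pm(n,t)$ preserves $d_\infty$ for real $t$.

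The core estimate is that the \emph{boundary values} $M_-(n,t)$ are close in value distribution to $-\overline{M_+(n,t)}$ on most of $A$. Following Remling, we work at height $y$ and compare $M_-(n,t+iy)$ with $P_-(n,t+iy)(-\overline{M_+(0,t)})$. By Corollary~\ref{DDEC} and the isometry property, their $d_\infty$-distance is bounded by $d_\infty\bigl(M_-(0,t+iy),\,-\overline{M_+(0,t)}\bigr)$ transported through the flow, uniformly in $n$.

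This is the step I expect to be the main obstacle. Remling's scalar argument uses that the harmonic measure is exactly invariant under real M\"obius maps, $\omega_{T F}(TS)=\omega_F(S)$. Our $\omega^\alpha$ are ad hoc and not covariant under $Sp(2l,\mathbb{R})$. My first attempt would therefore avoid covariance and instead use Lemma~\ref{47}(2), namely the Poisson representation $\omega^\alpha_{F(z)}(S)=\int\omega^\alpha_{F(t)}(S)\,d\omega_z(t)$. Combined with Proposition~\ref{19}, this lets us replace boundary values by values at height $y$ uniformly in $F$ and $S$.

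\textbf{Step 3 (comparison at height $y$).} At height $y$ we invoke Lemma~\ref{21.}: whenever the two points are $d_\infty$-close, the harmonic measures differ by at most $u(\epsilon,|S|)$ or $v(\epsilon,C)$. The needed closeness, $d_\infty\bigl(M_-(n,t+iy),\,-\overline{M_+(n,t+iy)}\bigr)<\epsilon$ on a set $A_\epsilon\subset A$ with $|A\setminus A_\epsilon|<\epsilon$ and for $n$ large, would come from the contraction factor $\|D_0\|^{2n}/(\|D_0\|^2+y^2)^n$ in Corollary~\ref{DDEC}. The point is that both $M_-(n,\cdot)$ and the propagated reflected $M_+$ are images of $\mathfrak{S}_l$ under the same long product of contractions. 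Egorov's theorem, applied to $M_+(0,t+iy)\to M_+(0,t)$, makes the choice of $A_\epsilon$ uniform.

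\textbf{Step 4 (assembling the estimate).} Finally we combine everything with a triangle inequality, as in the proof of Proposition~\ref{CVD}:
\[
\Bigl\|\int_A\bigl(\omega^\alpha_{M_-(n,t)}(-S)-\omega^\alpha_{M_+(n,t)}(S)\bigr)\,dt\Bigr\|\le 4\epsilon+2|A\setminus A_\epsilon|+|A|\max\bigl(u(\epsilon,2C),\,v(\epsilon,C)\bigr).
\]
The bound holds for all $n\ge n_0(\epsilon)$, uses $\|\omega\|\le1$, and uses Step 1 to identify $\omega^\alpha_{-\overline{M_+}}(-S)$ with $\omega^\alpha_{M_+}(S)$. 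Since none of the constants depends on $S\subset(-C,C)$, letting $\epsilon\to0$ gives both the limit and its uniformity in $S$.
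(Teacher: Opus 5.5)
Your Step 1 is correct and matches the paper, and Lemma~\ref{21.} and Proposition~\ref{19} are the right tools. However, Steps 2--3 do not produce a comparison you can use.

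The $d_\infty$-closeness you target in Step 3 is not supplied by Corollary~\ref{DDEC}. Because the coefficients are real, $-\overline{M_+(n,z)}=P_-(n,\bar z)\bigl(-\overline{M_+(0,z)}\bigr)$. So at height $y$ the reflected function is propagated by $P_-(n,t-iy)$, not by the contracting product $P_-(n,t+iy)$ that propagates $M_-(n,t+iy)$. The two are not images under the same product of contractions, and nothing forces them together.

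The object in your Step 2, $G(t+iy)=P_-(n,t+iy)\bigl(-\overline{M_+(0,t)}\bigr)$, is something the contraction does control. But it is not a Herglotz function of $z=t+iy$, because the seed depends on $\operatorname{Re}z$. Proposition~\ref{19} (equivalently, the Poisson representation of Lemma~\ref{47}(2)) holds only for $F\in\mathcal{H}$. So you cannot transfer $\int_A\omega^\alpha_{G(t+iy)}(-S)\,dt$ down to $\int_A\omega^\alpha_{-\overline{M_+(n,t)}}(-S)\,dt$. In general $t\mapsto-\overline{M_+(n,t)}$ is not the boundary value of any Herglotz function, and Egorov's theorem does not address this.

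The missing idea is to freeze the seed. Remove a set $A_0$ with $|A_0|<\epsilon$ and split the rest of $A$ into finitely many bounded sets $A_1,\dots,A_N$, with fixed $M_j\in\mathfrak{S}_l$ such that $d_\infty(M_+(0,t),M_j)<\epsilon$ on $A_j$. This is possible because $M_+(0,t)\in\mathfrak{S}_l$ for a.e.\ $t\in A\subset\Sigma_{ac,l}$ and $\mathfrak{S}_l$ is $\sigma$-compact. Then, on each $A_j$:
\begin{enumerate}
\item At the boundary, Corollary~\ref{88P} gives $d_\infty\bigl(M_+(n,t),P_+(n,t)M_j\bigr)<\epsilon$ for every $n$. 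Lemma~\ref{21.} and your Step 1 then let you replace $\omega^\alpha_{M_+(n,t)}(S)$ by $\omega^\alpha_{F_{n,j}(t)}(-S)$, where $F_{n,j}(z):=P_-(n,z)(-\overline{M_j})$ is a genuine Herglotz function by Lemma~\ref{87}.
\item Proposition~\ref{19}, being uniform in $F$, moves both $F_{n,j}$ and $M_-(n,\cdot)$ to a common height $y$.
\item At that height, apply Corollary~\ref{DDEC} with the fixed seed $-\overline{M_j}$. Together with boundedness of $d_\infty\bigl(M_-(0,t+iy),-\overline{M_j}\bigr)$ over the compact set $\overline{A_j}+iy$, this gives $d_\infty\bigl(M_-(n,t+iy),F_{n,j}(t+iy)\bigr)<\epsilon$ uniformly in $t\in A_j$ for $n\ge n_j$. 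Lemma~\ref{21.} then applies again.
\end{enumerate}
This also disposes of your covariance worry. No $Sp(2l,\mathbb{R})$-invariance of $\omega^\alpha$ is needed, only the isometry of $P_+(n,t)$ at real $t$ and the $d_\infty$-continuity of $\omega^\alpha$.
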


    \begin{proof}
      Fix $C>0$, $A\in\mathcal{B}(\Sigma_{ac,l})$ with $|A|>0$, and $S\in\mathcal{B}((-C,C))$. Let 
      $\epsilon>0$ and decompose $A=\bigcup_{j=0}^N A_j$ so that, for each $t\in A\setminus A_0$, $M_+(t)=\lim_{y\to 0^+}M_+(t+iy)$ exists and $M_+(t)\in\mathfrak{S}_l$. 
      Moreover, we require the existence of $M_j \in\mathfrak{S}_l$ such that
 for each $t\in A_j$,        
        \begin{equation*}
            d_{\infty}(M_+(t),M_j)<\epsilon
        \end{equation*}
        and $|A_0|<\epsilon$. Finally, for each $j\in\{1,\ldots,N\}$, $A_j$ must be bounded.
        
        Firstly, put in $A_0$ each $t\in A$ for which $M_+(t)$ does not exist (for now, $|A_0|=0$, since $M_+\in\mathcal{H}$ (see [\ref{FGET}]) and $A\in\mathcal{B}(\Sigma_{ac,l})$).

        Now, take (sufficiently large) compact subsets $K\subset\mathfrak{S}_l$ and $K'\subset\mathbb{R}$ such that $$A_0=\{t\in A\mid M_+(t) \notin K\ \ \text{or}\ \ t \notin K'\}$$
        satisfies  $|A_0|<\epsilon$ (such subsets $K$ and $K'$ exist, given that both $\mathfrak{S}_l$ and $\mathbb{R}$ are $\sigma$-compact spaces{\footnote{$\mathfrak{S}_l$ is a homogeneous space of the symplectic group $Sp(2l,\mathbb{R}$), which is a Lie group and thus, $\sigma$-compact.}}). Subdivide $K$ into a finite number of subsets of diameter less than $\epsilon$, take inverse images of such sets by $M_+$, and intersect with $K'$ to obtain $A_j$, $j\in\{1,\ldots,N\}$. We may choose the matrices $M_j$ as follows: for each $j$, select a point $t_j \in A_j$ and set $M_j = M_+(t_j)$. By the construction of the sets $A_j$ described above, it follows that $M_j \in \mathfrak{S}_l$ and, for every $t \in A_j$, $d_{\infty}(M_+(t), M_j) < \epsilon$.

 It follows from Lemma~\ref{87} that for each $n\in\mathbb{N}$ and each $t\in A\setminus A_0$, $M_+(n,t)=P_+(n,t)M_+(t)\in\mathfrak{S}_l$. Since $P_+(n,t)$ is an isometry in $\mathfrak{S}_l$ (by Corollary~\ref{88P}), it follows from each $n\in\mathbb{N}$ and each $t\in A_j$ that $$d_{\infty}(M_+(n,t), P_+(n,t)M_j)<\epsilon\,.$$

        Hence, by Lemma~\ref{21.}, there exists a function $v_\alpha:\mathbb{R}_+\times\mathbb{R}_+$, with $\lim_{x\downarrow 0}v_\alpha(x,y)=0$ for each $y> 0$, so that 
        $$\left\|\omega_{M_+(n,t)}^{\alpha}(S)-\omega_{P_+(n,t)M_j}^{\alpha}(S)\right\|\leqslant v_\alpha\big(\epsilon, C\big);$$ thus,
        \begin{equation}\label{M+O}
          \left\| \int_{A_j}\omega_{M_+(n,t)}^{\alpha}(S)\ dt-\int_{A_j}\omega_{P_+(n,t)M_j}^{\alpha}(S)\ dt\right\| \leqslant v_\alpha\big(\epsilon, C\big)|A_j|.\end{equation}

        The next step consists in showing that for each $n\in\mathbb{N}$ and each $t\in A\setminus A_0$, 
        \begin{equation}\label{O+-}
          \omega_{P_+(n,t)M_j}^{\alpha}(S)=\omega_{-\overline{P_+(n,t)M_j}}^{\alpha}(-S)=\omega_{P_-(n,t)(-\overline{M_j})}^{\alpha}(-S).
        \end{equation}
      
        Namely, note that for each $W=X+iY\in\mathfrak{S}_l$ and each $S\in\mathcal{B}(\mathbb{R})$, 
        \begin{align}\label{122}
          \omega_{-\overline{W}}^{I}(-S)&=\frac{1}{\pi}\int_{-S}\frac{(\operatorname{Im}(-\overline{W}))^{1/2}+I}{((\operatorname{Im}(-\overline{W}))^{1/2}+I)^2+\lambda^2I}\ d\lambda \nonumber \\ &= \frac{1}{\pi}\int_{-S}\frac{Y^{1/2}+I}{(Y^{1/2}+I)^2+\lambda^2I}\ d\lambda \nonumber 
          \nonumber \\ &=\omega_{W}^{I}(S) 
        \end{align}
        \noindent and
         \begin{align}\label{123}
           \omega_{-\overline{W}}^{R}(-S)&=\frac{1}{\pi}\int_{-S}\frac{I}{\left({\|\operatorname{Im}(-\overline{W})+I\|}^{-1}\operatorname{Re}(-\overline{W})-\lambda I\right)^2+I}\ d\lambda \nonumber \\ 
           &=\frac{1}{\pi}\int_S\frac{I}{\left(\|Y+I\|^{-1}X-uI\right)^2+I}\ du;\nonumber \\
           &=\omega_{W}^{R}(S)\ ; 
        \end{align}
 in particular, %
 \begin{equation}\label{OP+P-}
   \omega_{P_+(n,t)M_j}^{\alpha}(S)=\omega_{-\overline{P_+(n,t)M_j}}^{\alpha}(-S).
   \end{equation}

Now, it follows from Lemma~\ref{85A} that for each $n\in\mathbb{N}$ and each $t\in\mathbb{R}$, 
\begin{equation}\label{OP+P-1}-\overline{P_+(n,t)M_j}=-P_+(n,t)\overline{M_j}=P_+(n,t)(-\overline{M_j})=P_-(n,t)(-\overline{M_j}).
\end{equation}

Relation~\eqref{O+-} is now  a consequence of relations~\eqref{OP+P-} and~\eqref{OP+P-1}. 

   It follows from Proposition \ref{19} that there exists $y=y(\epsilon)>0$ such that for each $F\in\mathcal{H}$ and each $j=1,\ldots,N$,
        \begin{equation}\label{20}
           \left\|\int_{A_j} \omega_{F(t+iy)}^{\alpha}(-S)\ dt\ -\int_{A_j} \omega_{F(t)}^{\alpha}(-S)\ dt\right\|\leqslant\epsilon|A_j|.
       \end{equation}

        Fix such $y>0$. Since, for each $j\in\{1,\ldots,N\}$, $K_j:=\{t+iy\mid t\in \overline{A_j}\}$ is a compact subset of $\mathbb{C}_+$ (each $A_j$ is a bounded subset of $A$, by construction), and since  for each $j\in\{1,\ldots,N\}$,  
        $\{M_-(0,z)\mid z\in K_j\}$ is a compact subset of $\mathfrak{S}_l$ (since $\mathbb{C}_+\ni z\mapsto M_-(0,z)\in\mathfrak{S}_l$ is a continuous map), it follows from Corollary~\ref{DDEC} that there exists $n_j=n_j(\epsilon,K_j)$ such that for each $n\ge n_j$ and each $t\in A_j$, 
       $$d_\infty\left(M_-(n,t+iy),P_-(n,t+iy)(-\overline{M_j})\right)<\epsilon$$
(here, one uses the fact that $\sup_{z\in K_j}d_\infty\left(M_-(0,z),-\overline{M_j}\right)<\infty$, which allows $n_j$ to be chosen uniformly over $K_j$).  Then, by Lemma~\ref{21.} again, one has for each $j\in\{1,\ldots,N\}$ and each $n\ge n_j$,  
       \begin{equation}\label{20A}
         \left\|\int_{A_j} \omega_{P_-(n,t+iy)(-\overline{M_j})}^{\alpha}(-S)\ dt\ -\int_{A_j} \omega_{M_-(n,t+iy)}^{\alpha}(-S)\ dt\right\|\leqslant v_\alpha(\epsilon,C)|A_j|.
         \end{equation}
         
 By combining~\eqref{20} with~\eqref{20A}, one gets for each $j\in\{1,\ldots,N\}$, 
\begin{eqnarray}\label{M-O}&&\left\|\int_{A_j} \omega_{F(t)}^{\alpha}(-S)\ dt\ -\int_{A_j} \omega_{H(t)}^{\alpha}(-S)\ dt\right\|\nonumber\\&\le&\left\|\int_{A_j} \omega_{F(t)}^{\alpha}(-S)\ dt\ -\int_{A_j} \omega_{F(t+iy)}^{\alpha}(-S)\ dt\right\|+\left\|\int_{A_j} \omega_{F(t+iy)}^{\alpha}(-S)\ dt\ -\int_{A_j} \omega_{H(t+iy)}^{\alpha}(-S)\ dt\right\|\nonumber\\&+&\left\|\int_{A_j} \omega_{H(t+iy)}^{\alpha}(-S)\ dt\ -\int_{A_j} \omega_{H(t)}^{\alpha}(-S)\ dt\right\|\leqslant 2\epsilon|A_j|+v_\alpha(\epsilon,C)|A_j|,\end{eqnarray}
\noindent where $F(t+iy)=P_-(n,t+iy)(-\overline{M_j})$  and $H(t+iy)=M_-(n,t+iy)$.

Finally, it follows from~\eqref{M+O},~\eqref{O+-} and~\eqref{M-O} that for each $n\geqslant n_0:=\max\{n_j\mid j\in\{1,\ldots,N\}\}$, 
        $$\left\|\int_A\left(\omega_{M_+(n,t)}^{\alpha}(S)-\omega_{M_-(n,t)}^{\alpha}(-S)\right)dt\right\|\leqslant \sum_j\left\|\int_{A_j}\left(\omega_{M_+(n,t)}^{\alpha}(S)-\omega_{M_-(n,t)}^{\alpha}(-S)\right)dt\right\|$$ $$\leqslant\sum_j\left(\left\|\int_{A_j}\left(\omega_{M_+(n,t)}^{\alpha}(S)-\omega_{P_+(n,t)M_j}^{\alpha}(S)\right)dt\right\|+\left\|\int_{A_j}\left(\omega_{P_-(n,t)(-\overline{M}_j)}^{\alpha}(-S)-\omega_{M_-(n,t)}^{\alpha}(-S)\right)dt\right\|\right)$$ $$\le \sum_j\left(2\epsilon|A_j|+2v_\alpha(\epsilon,C)|A_j|\right)\leqslant \left(2\epsilon+2v_\alpha(\epsilon,C)\right)|A|.$$
   \end{proof}

    
 Before we present the next result, recall that for each $A\in\mathcal{B}(\mathbb{R})$,
    $$\mathcal{R}(A)=\left\{W\in\bigcup_{C>0}\mathcal{V}^C\mid W\ \text{reflectionless on}\ A\right\}.$$

    \begin{teo}[The reflectionless property of the $\omega$-limit set]\label{27}
      Let $V\in\mathcal{V}^C_+$ 
      and let $\emptyset\neq\Sigma_{ac,l}$ be the essential support of the absolutely continuous spectrum of multiplicity $l$ of $J_+^V$. Then, $$\omega(V)\subset\mathcal{R}(\Sigma_{ac,l})\ .$$ 
    \end{teo}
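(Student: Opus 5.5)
Fix $W\in\omega(V)$ and a sequence $n_j\to\infty$ with $\rho_\infty(S^{n_j}V,W)\to 0$. The plan follows Remling's scheme. First, pass the convergence of potentials to convergence of $m$-functions. Then use Theorem~\ref{25} to identify the limiting value distributions of $M_+^W$ and $-\overline{M_-^W}$ on $\Sigma_{ac,l}$. Finally, conclude pointwise equality a.e. from the equality of these harmonic-measure integrals for all $A$ and $S$.

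\textbf{Step 1 (convergence of $m$-functions).} The half-line $m$-function $M_+(n_j,\cdot)$ of $V$ is exactly the $m$-function $M_+^{S^{n_j}V}(0,\cdot)$ of the shifted half-line potential. For $M_-(n_j,\cdot)$, the relevant left half-line is $\{0,\dots,n_j\}$, i.e.\ a finite piece of $V$. By Lemma~\ref{90}, $M_-(n_j,z)=P_-(n_j,z)M_-(0,z)$. By Corollary~\ref{DDEC}, uniformly on compacts of $\mathbb{C}_+$, $d_\infty(M_-(n_j,z),P_-(n_j,z)Z)\to 0$ for any fixed starting value $Z$; in particular this holds for the value generated by any extension of the potential to the left. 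Hence $M_-(n_j,\cdot)$ becomes asymptotically independent of the left boundary and approaches $M_-^{W}(0,\cdot)$ locally uniformly. The underlying reason is that $M_-^W$ depends continuously on $W_-$ (Lemma~\ref{32}), and finitely many transfer matrices approximate it. Likewise, Lemma~\ref{32} gives $M_+(n_j,\cdot)\to M_+^W(0,\cdot)$ uniformly on compacts, because $S^{n_j}V|_{\mathbb{Z}_+}\to W_+$. Here one must convert $d_\infty$-closeness on compacts into norm closeness, which is fine since $d_\infty$ and the norm topology agree on $\mathfrak{S}_l$.

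\textbf{Step 2 (value distributions).} By Proposition~\ref{CVD}, for every bounded Borel $A$ and bounded $S$ we get
\[\int_A\omega^\alpha_{M_\pm(n_j,t)}(\pm S)\,dt\to\int_A\omega^\alpha_{M_\pm^W(t)}(\pm S)\,dt.\]
Combined with Theorem~\ref{25}, for every $A\subset\Sigma_{ac,l}$ of finite positive measure and every $S\subset(-C,C)$, this yields
\[\int_A\omega^\alpha_{M_-^W(t)}(-S)\,dt=\int_A\omega^\alpha_{M_+^W(t)}(S)\,dt .\]
By \eqref{122}--\eqref{123}, the left-hand side equals $\int_A\omega^\alpha_{-\overline{M_-^W(t)}}(S)\,dt$. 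Since $A$ is arbitrary, Lebesgue differentiation gives, for a.e.\ $t\in\Sigma_{ac,l}$ and all $S$ in a countable generating family of bounded intervals,
\[\omega^\alpha_{-\overline{M_-^W(t)}}(S)=\omega^\alpha_{M_+^W(t)}(S),\qquad \alpha=I,R.\]

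\textbf{Step 3 (recovering the matrices), the main obstacle.} We must show that equality of $\omega^I$ and $\omega^R$ for all bounded $S$ forces $M_+^W(t)=-\overline{M_-^W(t)}$. As functions of $S$, the matrices $\omega^I(S)$ are matrix Cauchy densities integrated over $S$. Differentiating in $S$ (letting intervals shrink) recovers the density
\[\operatorname{Im}\bigl(\lambda-i((\operatorname{Im}F)^{1/2}+I)\bigr)^{-1}\]
for a.e.\ $\lambda$, hence by analyticity for all $\lambda$. Evaluating at $\lambda=0$ gives $((\operatorname{Im}F)^{1/2}+I)^{-1}$, and therefore $\operatorname{Im}F$. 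With $\|\operatorname{Im}F+I\|$ now known, the $\omega^R$ density evaluated at $\lambda$ determines $(X/c-\lambda)^2+I$ and hence $X=\operatorname{Re}F$ (using the $\lambda$-linear term). The remaining difficulty lies in two places. First, the boundary values $M_\pm^W(t)$ must exist a.e.\ on $\Sigma_{ac,l}$ so that $\omega_{M^W_\pm(t)}$ makes sense. Second, the restriction to bounded $S$ together with the factor $C$ in Theorem~\ref{25} must be handled by letting $C\to\infty$ through a countable sequence. Given these, we obtain $M_+^W(t)=-\overline{M_-^W(t)}$ a.e.\ on $\Sigma_{ac,l}$, i.e.\ $W\in\mathcal{R}(\Sigma_{ac,l})$.
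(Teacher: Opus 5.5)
Your proposal is correct and follows essentially the same route as the paper. Lemma~\ref{32} and Proposition~\ref{CVD} carry the convergence $S^{n_j}V\to W$ over to the integrated harmonic measures, Theorem~\ref{25} identifies the two limits, and the matrices are recovered from the $\omega^I$-density at $\lambda=0$ (giving $\operatorname{Im}M_+^W(t)=\operatorname{Im}M_-^W(t)$) and from the $\lambda$-linear term of the $\omega^R$-density (giving $\operatorname{Re}M_+^W(t)=-\operatorname{Re}M_-^W(t)$). Your extra care with the finite left half-line of a half-line potential (via Corollary~\ref{DDEC} or a whole-line extension), and with countability in $S$ and $C$, fills in points the paper passes over silently.
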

    \begin{proof}
      Let $W\in\omega(V)$. 
      Then, there exists $n_j\xrightarrow[j\to\infty]{}\infty$ such that $\rho_\infty(S^{n_j}V,W)\xrightarrow[j\to\infty]{}0$. By Lemma \ref{32}, 
      $\|M_{\pm}(n_j,z)-M_{\pm}(z)\|\xrightarrow[j\to\infty]{}0$ uniformly on compact subsets of $\mathbb{C}_+$, where $M_{\pm}(n_j,z):=M_{\pm}^{S^{n_j}V}(0,z)$ and $M_{\pm}(z):=M_{\pm}^W(0,z)$.    It follows from Proposition~\ref{CVD} that for each $A\in\mathcal{B}(\Sigma_{ac,l})$ such that $|A|>0$, each $C>0$ and each $S\in\mathcal{B}((-C,C))$, 
        $$\lim_{j\to\infty}\int_A\omega^\alpha_{M_\pm(n_j,t)}(S)\ dt=\int_A\omega^\alpha_{M_\pm(t)}(S)\ dt.$$

        By combining this result with Theorem~\ref{25}, one gets 
        $$\int_A\omega^\alpha_{M_-(t)}(-S)\ dt=\int_A\omega^\alpha_{M_+(t)}(S)\ dt\ ,$$
        and so, given that $A\in\mathcal{B}(\Sigma_{ac,l})$ was arbitrarily chosen,  one concludes that $\omega^\alpha_{M_-(t)}(-S)=\omega^\alpha_{M_+(t)}(S)$ for a.e. $t\in\Sigma_{ac,l}$. 

        Now, for each $z\in\mathbb{C}_+$, each $F\in\mathcal{H}$ and each $U\in\mathcal{B}(\mathbb{R})$, it follows from Spectral Theorem (as discussed in the proof of Lemma \ref{47}) that one can write
        $$\omega^I_{F(z)}(U)=P_I(z)D_I(z,U)P_I(z)^T\ ,$$
where $P_I(z)$ is an orthogonal matrix that diagonalizes $\operatorname{Im} F(z)$ (whose columns are formed by the eigenvectors of $\operatorname{Im} F(z)$) and $D_I(z,U)$ is a diagonal matrix such that for each $i\in\{1,\ldots,l\}$,
        \[(D_{I}(z,U))_{ii}:=\frac{1}{\pi}\int_S\frac{1+\lambda_{i}^I(z)^{1/2}}{(1+\lambda_{i}^I(z)^{1/2})^2+\lambda^2}\ d\lambda,\]
where $\{\lambda^I_1(z),\ldots,\lambda^I_l(z)\}$ are the eigenvalues of $\operatorname{Im}F(z)$.        

Since $\lim_{y\downarrow 0}\operatorname{Im}F(t+iy)=\operatorname{Im}F(t)$ for a.e. $t\in \Sigma_{ac,l}$ (see Remark~\ref{OBSFUND}),  it follows that $\lim_{y\downarrow 0}\lambda^I_j(t+iy)=\lambda^I_j(t)$ and that  $$\lim_{y\downarrow 0}\Vert P_I(t+iy)-P_I(t)\Vert=0 ,\ \ \ \ \lim_{y\downarrow 0}\Vert D_I(t+iy,U)-D_I(t,U)\Vert=0$$ 
for a.e. $t\in\Sigma_{ac,l}$.

Moreover, one has for each $i\in\{1,\ldots,l\}$ that $(D_{I}(z,U))_{ii}\le (D_{I}(z,\mathbb{R}))_{ii} =1$, and so, by dominated convergence, one gets 

        $$\lim_{y\to0^+}\left(D_I(t+iy,U)\right)_{jj}=\frac{1}{\pi}\int_S\frac{(\lambda^I_j(t))^{1/2}+1}{((\lambda^I_j(t))^{1/2}+1)^2+\lambda^2}\ d\lambda=\left(D_I(t,U)\right)_{jj} >0.$$

By summing up this discussion, one concludes that for a.e. $t\in\Sigma_{ac,l}$,
\begin{eqnarray*}\frac{1}{\pi}\int_S\frac{(\operatorname{Im}M_+(t))^{1/2}+I}{((\operatorname{Im}M_+(t))^{1/2}+I)^2+\lambda^2I}\ d\lambda &=&P_I^+(t)D_I^+(t,S)\left(P_I^+(t)\right)^T\\
 &=&\omega^I_{M_+(t)}(S)=\omega^I_{M_-(t)}(-S)\\ &=&P_I^-(t)D_I^-(t,-S)\left(P_I^-(t)\right)^T\\
  &=&\frac{1}{\pi}\int_S\frac{(\operatorname{Im}M_-(t))^{1/2}+I}{((\operatorname{Im}M_-(t))^{1/2}+I)^2+\lambda^2I}\ d\lambda\,;
\end{eqnarray*}  
in particular, by setting $S=(-a,a)$ (with $0<a<C$) and by setting, for each $\lambda\in(-a,a)$ and a.e. $t\in\Sigma_{ac,l}$,
$$f_t(\lambda):=\frac{(\operatorname{Im}M_+(t))^{1/2}+I}{((\operatorname{Im}M_+(t))^{1/2}+I)^2+\lambda^2I}\ ,\ \ \ \ \  g_t(\lambda):=\frac{(\operatorname{Im}M_-(t))^{1/2}+I}{((\operatorname{Im}M_-(t))^{1/2}+I)^2+\lambda^2I}\ , $$ one gets
        $$\int_{-a}^af_t(\lambda)\ d\lambda=\int_{-a}^ag_t(\lambda)\ d\lambda\ .$$

   Given that $f_t(\lambda)$ and $g_t(\lambda)$ are matrix-valued continuous functions, it follows from the arbitrary choice of $0<a<C$ and from the Mean Value Theorem for integrals that $f_t(0)=g_t(0)$ for a.e. $t\in\Sigma_{ac,l}$; alternatively, it follows that there exists  $N_I\subset\Sigma_{ac,l}$, with $|N_I|=0$, such that for each $t\in \Sigma_{ac,l}\setminus N_I$, 
        \begin{equation}\label{26}
            \operatorname{Im}M_+(t)=\operatorname{Im}M_-(t). 
        \end{equation}

        Similarly, one can show that for each $t\in\Sigma_{ac,l}\setminus N_R$ (where $N_R$ is a set of zero Lebesgue measure), 
        \begin{eqnarray*}
            &&\int_S\frac{I}{\left((\|\operatorname{Im}M_+(t)+I\|)^{-1}\operatorname{Re}M_+(t)-\lambda I\right)^2+I}\ d\lambda\\ &=& \int_{-S}\frac{I}{\left((\|\operatorname{Im}M_-(t)+I\|)^{-1}\operatorname{Re}M_-(t)-\lambda I\right)^2+I}\ d\lambda \nonumber \\&= &\int_S\frac{I}{\left(-(\|\operatorname{Im}M_-(t)+I\|)^{-1}\operatorname{Re}M_-(t)-\lambda I\right)^2+I}\ d\lambda\ . \nonumber
        \end{eqnarray*}

        By taking $S=(a,b)$, with $-C<a<b<C$, the previous identity can be rewritten as
        $$\int_a^b \Big(h_t(\lambda)-l_t(\lambda)\Big)\ d\lambda\ ,$$
        where for each $t\in \Sigma_{ac,l}\setminus N_R$,
        $$h_t(\lambda):=\left(\left(\frac{1}{\|\operatorname{Im}M_+(t)+I\|}\operatorname{Re}M_+(t)-\lambda I\right)^2+I\right)^{-1}\ ,$$
        $$l_t(\lambda):=\left(\left(\frac{-1}{\|\operatorname{Im}M_-(t)+I\|}\operatorname{Re}M_-(t)-\lambda I\right)^2+I\right)^{-1}\ .$$ 

        It follows from the same argument presented before that for each $\lambda\in(-C,C)$ and each $t\in \Sigma_{ac,l}\setminus N_R$, 
        $$\left(\frac{1}{\|\operatorname{Im}M_+(t)+I\|}\operatorname{Re}M_+(t)-\lambda I\right)^2=\left(\frac{-1}{\|\operatorname{Im}M_-(t)+I\|}\operatorname{Re}M_-(t)-\lambda I\right)^2\ .$$

        By expanding both sides of the identity and by simplifying, one gets    
        \begin{align}\label{AB}
            &\left(\frac{\operatorname{Re}M_+(t)}{\|\operatorname{Im}M_+(t)+I\|}\right)^2-\left(\frac{\operatorname{Re}M_-(t)}{\|\operatorname{Im}M_-(t)+I\|}\right)^2 \nonumber \\ &-2\lambda\left(\left(\frac{\operatorname{Re}M_+(t)}{\|\operatorname{Im}M_+(t)+I\|}\right)+\left(\frac{\operatorname{Re}M_-(t)}{\|\operatorname{Im}M_-(t)+I\|}\right)\right)=0\ . 
        \end{align}

        By combining~\eqref{26} with~\eqref{AB}, one concludes that for each $t\in \Sigma_{ac,l}\setminus \{N_I\cup N_R\}$,~\eqref{AB} is valid for each $\lambda\in(-C,C)$ if, and only if, 
        $\operatorname{Re}M_+(t)=-\operatorname{Re}M_-(t)$.  

        Therefore, one concludes that for each $t\in\Sigma_{ac,l} \setminus \{N_I\cup N_R\}$ (with $|N_I\cup N_R|=0$), 
        $$M_+(t)=-\overline{M_-(t)}\,,$$ 
        which proves that $W\in\mathcal{R}(\Sigma_{ac,l})$. 
        
    \end{proof}


    \section[The Oracle Theorem]{The Oracle Theorem for matrix-valued Jacobi operators}\label{104}

    This section provides the main result of this work, 
    consolidating the analytical and geometric results established thus far into a definitive statement on the deterministic nature of matrix-valued Jacobi operators $J$ such that $J_+$ has nontrivial absolutely continuous spectrum of multiplicity $l$. More specifically, by employing the reflectionless property as a structural constraint, we show that the presence of an absolutely continuous component of multiplicity $l$ in the spectral measure is not merely a local property, but a global signature that dictates the asymptotic evolution of the potential. 

    \vspace{0.3cm}
    \subsection{Properties of reflectionless potentials}\label{115}

    This subsection presents some properties of reflectionless potentials. 

    \begin{lema}\label{61}
        Let $W\in\mathcal{V}^C$, let $A\in\mathcal{B}(\mathbb{R})$,  and set $M_\pm(\cdot)=M_\pm^W(0,\cdot)$. Then, $W\in\mathcal{R}(A)$ if, and only if, for each $B\in\mathcal{B}(A)$ so that $|B|<\infty$, and for each bounded $S\in\mathcal{B}(\mathbb{R})$, 
                \begin{equation}\label{31}
            \int_B \omega^\alpha_{M_-(t)}(-S)\ dt = \int_B \omega^\alpha_{M_+(t)}(S)\ dt. 
        \end{equation}
   
    \end{lema}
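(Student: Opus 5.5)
We prove the two implications separately. Both rest on the pointwise symmetry identities \eqref{122} and \eqref{123}, namely $\omega^\alpha_{-\overline{W}}(-S)=\omega^\alpha_{W}(S)$ for every $W\in\mathfrak{S}_l$, and on the way the proof of Theorem~\ref{27} recovers $\operatorname{Im}$ and $\operatorname{Re}$ from harmonic measures.

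\emph{($\Rightarrow$)} Suppose $W\in\mathcal{R}(A)$, so that $M_+(t)=-\overline{M_-(t)}$ for a.e.\ $t\in A$. First we must give $\omega^\alpha_{M_\pm(t)}$ a meaning at boundary points where $\operatorname{Im}M_\pm(t)$ might fail to be positive definite. The definitions \eqref{117} and \eqref{107} only involve $(\operatorname{Im}F)^{1/2}+I$ and $\|\operatorname{Im}F+I\|^{-1}\operatorname{Re}F$, and both expressions make sense whenever the boundary value exists with $\operatorname{Im}\ge 0$. This is also the sense used in \eqref{14a}. Next, for a.e.\ $t\in B\subset A$ we write $M_-(t)=-\overline{M_+(t)}$. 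Then $\operatorname{Im}M_-(t)=\operatorname{Im}M_+(t)$ and $\operatorname{Re}M_-(t)=-\operatorname{Re}M_+(t)$. The change of variables $\lambda\mapsto-\lambda$, carried out exactly as in \eqref{122} and \eqref{123}, gives $\omega^\alpha_{M_-(t)}(-S)=\omega^\alpha_{M_+(t)}(S)$ pointwise a.e.\ on $B$. Integrating over $B$ gives \eqref{31}. The integrals are finite because $\|\omega^\alpha\|\le 1$ and $|B|<\infty$.

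\emph{($\Leftarrow$)} Suppose \eqref{31} holds for all admissible $B$ and all bounded $S$. Fix a countable family of bounded intervals $S$ with rational endpoints. Since \eqref{31} holds for every $B\subset A$ of finite measure, the matrix-valued integrands must agree a.e.: we apply the identity entrywise and use Lebesgue differentiation on $A\cap[-k,k]$. This yields a null set $N\subset A$ such that $\omega^\alpha_{M_-(t)}(-S)=\omega^\alpha_{M_+(t)}(S)$ for all $t\in A\setminus N$, all rational intervals $S$, and both values of $\alpha$. By continuity in the endpoints, the identity extends to all bounded intervals.

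From here we repeat the argument in the proof of Theorem~\ref{27}. Take $S=(-a,a)$ and differentiate in $a$ at $a=0$ (or use continuity of the integrands). For $\alpha=I$ this gives $((\operatorname{Im}M_+)^{1/2}+I)^{-1}=((\operatorname{Im}M_-)^{1/2}+I)^{-1}$, hence $\operatorname{Im}M_+(t)=\operatorname{Im}M_-(t)$. For $\alpha=R$ we take $S=(a,b)$ and differentiate in $b$; this equates the squares of $c_+\operatorname{Re}M_+-\lambda I$ and $-c_-\operatorname{Re}M_--\lambda I$ for all $\lambda$ in an interval, where $c_\pm=\|\operatorname{Im}M_\pm+I\|^{-1}$. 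Comparing the coefficients linear in $\lambda$ gives $c_+\operatorname{Re}M_+=-c_-\operatorname{Re}M_-$. Since $c_+=c_-$ by the equality of the imaginary parts, we get $\operatorname{Re}M_+=-\operatorname{Re}M_-$, and therefore $M_+(t)=-\overline{M_-(t)}$ a.e.\ on $A$.

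The main obstacle is the boundary-value issue in both directions. The paper defines reflectionlessness on an arbitrary Borel set $A$, not necessarily inside $\Sigma_{ac,l}$, so $\operatorname{Im}M_\pm(t)$ may be singular or the limits may fail to exist on part of $A$. I would handle this by working on the full-measure subset of $A$ where both boundary values exist, which is guaranteed since $M_\pm\in\mathcal{H}$. On that subset the recovery formulas above are still valid, because $(\operatorname{Im}M)^{1/2}+I$ is invertible even when $\operatorname{Im}M$ is only positive semidefinite.
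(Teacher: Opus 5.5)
Your proposal is correct and follows the paper's proof. The forward direction is exactly the pointwise symmetry \eqref{122}--\eqref{123} integrated over $B$, with $\omega^\alpha$ evaluated at boundary points where $\operatorname{Im}M_\pm(t)$ is only semidefinite, and the converse is the recovery argument from the second half of the proof of Theorem~\ref{27}; your version is somewhat more careful than the paper's, since you pass from the integrated identity to a pointwise one with a single null set (rational intervals plus continuity in the endpoints) and differentiate in the endpoints instead of invoking a matrix-valued mean value theorem for integrals.
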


    \begin{proof}
      It follows from the second part of the proof of Theorem \ref{27} that if~\eqref{31} is valid for each $W\in\mathcal{V}^C$, each $A,B\in\mathcal{B}(\mathbb{R})$ such that $B\subset A$, $|A|<\infty$, and  each bounded $S\in\mathcal{B}(\mathbb{R})$, then $W\in\mathcal{R}(A)$.

      Now, if $W\in\mathcal{R}(A)$, then for a.e. $t\in A$, $M_+(t)=-\overline{M_-(t)}$, from which follows that for each $S\in\mathcal{B}(\mathbb{R})$,\\[-0.5em]      
      \[\omega^\alpha_{M_+(t)}(S)=\omega^\alpha_{-\overline{M_-(t)}}(S)=\omega^\alpha_{M_-(t)}(-S)\]
          
      \noindent (see (\ref{122}) and (\ref{123})), with $\Vert\omega^\alpha_{M_+(t)}(S)\Vert,\Vert\omega^\alpha_{-\overline{M_-(t)}}(S)\Vert\le 1$ (namely, just replace $F(z)$ with $M_+(t)$ or with $-\overline{M_-(-t)}$ in the definition of $\omega^\alpha_{F(z)}(S)$, and note that $\omega^\alpha_{F(z)}(S)$ is well defined even if $F(z)\notin\mathfrak{S}_l$, although $\operatorname{Re}(F(z))$ must be symmetric and $\operatorname{Im}(F(z))\ge 0$). The result follows by integrating this identity over $B\in\mathcal{B}(A)$.
    \end{proof}

    \begin{pro}\label{33}
        Let $A\in\mathcal{B}(\mathbb{R})$ so that $|A|>0$, and let $W\in\mathcal{R}(A)$. Then:
        \begin{enumerate}[label=(\alph*)]
            \item for each $n\in\mathbb{Z}$ and for a.e. $t\in A$, $M_+^W(n,t)=-\overline{M_-^W(n,t)}$ (that is, $S^nW\in\mathcal{R}(A)$ for each $n\in\mathbb{Z}$);
            \item $A\subset\Sigma_{ac,l}(W_\pm)$;
            \item $W_\pm$ uniquely determine $W$, that is, the restriction maps $$\Phi_\pm:\mathcal{R}(A)\to\mathcal{R}_\pm(A)\ ,\ \ \ \ W\mapsto W_\pm$$ are injective;
            \item for each $C>0$, $\mathcal{R}^C(A)=\mathcal{R}(A)\cap\mathcal{V}^C$ is compact;
            \item set $\mathcal{R}_\pm^C(A):=\{W_\pm\mid W\in\mathcal{R}^C(A)\}$; then, the map 
            $$\mathcal{R}_-^C(A) \to \mathcal{R}_+^C(A)\ ,\ \ \ \ \ W_-\to W_+$$
            \noindent is uniformly continuous.
        \end{enumerate}
    \end{pro}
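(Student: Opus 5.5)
I treat the five items in order, following the scalar argument of Remling (Proposition~4.1 in~[\ref{remling}]) and replacing scalar identities by the matrix-valued ones above.

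For (a), I use Lemma~\ref{90}, Lemma~\ref{85A} and the boundary-value version of the transfer-matrix action. For a.e.\ $t\in A$ the limits $M_\pm^W(0,t)$ exist, and $M_\pm^W(n,t)=P_\pm(n,t)M_\pm^W(0,t)$ holds. For $n\ge 1$ this comes from passing $y\downarrow 0$ in $M_\pm(n,t+iy)=P_\pm(n,t+iy)M_\pm(0,t+iy)$, using continuity of the fractional linear action wherever $CZ+D$ is invertible. For $n<0$ I use the inverse matrices. Since $t$ is real, $P_+(n,t)$ is real, and the computation in~\eqref{OP+P-1} gives
\[
M_+^W(n,t)=P_+(n,t)M_+^W(0,t)=P_+(n,t)\bigl(-\overline{M_-^W(0,t)}\bigr)=-\overline{P_+(n,t)M_-^W(0,t)} .
\]
Lemma~\ref{85A} then turns $P_+(n,t)M_-^W(0,t)$ into $P_-(n,t)M_-^W(0,t)=M_-^W(n,t)$. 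A small technical point is that Lemma~\ref{85A} is stated for $M\in\mathfrak{S}_l$, but its proof uses only the conjugation identity, so it extends to boundary values wherever the action is defined.

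For (b), I take imaginary parts. Reflectionlessness gives $\operatorname{Im}M_+(t)=\operatorname{Im}M_-(t)$, so $2\operatorname{Im}M_+(t)=\operatorname{Im}(M_+(t)-M_-(t))$. The key step, and the one I expect to be the main obstacle, is showing that this matrix is positive definite, i.e.\ that neither boundary value can have degenerate imaginary part a.e.\ on $A$. My plan is to use the whole-line Green's function. The diagonal block $G(0,0;z)=-(M_+(z)+M_-(z))^{-1}$, up to conjugation by $D_0$, is a matrix Herglotz function. Reflectionlessness makes $M_++M_-=2i\operatorname{Im}M_+$ purely imaginary, so $G(t)$ is purely imaginary. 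Its imaginary part, $\frac12(\operatorname{Im}M_+)^{-1}$, must then be finite a.e., which forces invertibility. To justify this I would argue by contradiction on the kernel directions, using the fact that Herglotz matrix functions have finite normal boundary values a.e. Once $\operatorname{Im}M_\pm(t)>0$ a.e.\ on $A$, Remark~\ref{OBSFUND} gives $A\subset\Sigma_{ac,l}(W_+)$. The $W_-$ case is symmetric, using the corresponding description of $\Sigma_{ac,l}(J_-)$ via $M_-$.

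For (c), $W_+$ determines $M_+$ by Lemma~\ref{32}. Reflectionlessness determines $M_-(t)=-\overline{M_+(t)}$ a.e.\ on $A$, and $|A|>0$ together with uniqueness of boundary values of matrix Herglotz functions (apply the scalar Privalov theorem entrywise to $\langle a,M_-b\rangle$ via polarization) determines $M_-$. Lemma~\ref{32} for the minus side then recovers $W_-$.

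For (d), I show that $\mathcal{R}^C(A)$ is closed in the compact space $\mathcal{V}^C$. If $W_k\to W$ with $W_k\in\mathcal{R}(A)$, Lemma~\ref{32} gives locally uniform convergence of $M_\pm^{W_k}$, and Proposition~\ref{CVD} passes~\eqref{31} to the limit for every bounded $B\subset A$ and bounded $S$. Lemma~\ref{61} then yields $W\in\mathcal{R}(A)$.

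For (e), the map is $\Phi_+\circ\Phi_-^{-1}$. Here $\Phi_-$ is a continuous bijection from the compact set $\mathcal{R}^C(A)$ onto $\mathcal{R}^C_-(A)$ by (c) and (d), so it is a homeomorphism. The composition is therefore continuous on a compact metric space, hence uniformly continuous.
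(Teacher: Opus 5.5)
Your proposal follows the paper's proof essentially item by item. In (a) you use Lemmas~\ref{90} and~\ref{85A}. In (b) you use that $-(M_++M_-)^{-1}$ is Herglotz and so has finite boundary values a.e.; your kernel-direction contradiction amounts to passing to the limit in $-(M_++M_-)^{-1}(M_++M_-)=-I$. In (c) you use Lemma~\ref{32} plus uniqueness of boundary values, in (d) Proposition~\ref{CVD} and Lemma~\ref{61}, and in (e) compactness. In (a) you are slightly more complete than the paper, which treats only $n\ge 1$ and leaves the boundary-value version of Lemma~\ref{90} implicit.

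One caveat on (a), which the paper shares. The two equalities you chain there are each false as general identities; only their combination is true. The first, $P_+(n,t)(-\overline{M})=-\overline{P_+(n,t)M}$, fails in general. The second is Lemma~\ref{85A} itself, $P_+(n,t)Z=P_-(n,t)Z$, which is misstated. Since $\operatorname{diag}(I,-I)$ acts by $Z\mapsto -Z$, Lemma~\ref{85}(2) actually gives $T_-(n,t)(-Z)=-T_+(n,t)(Z)$, hence $P_+(n,t)(Z)=-P_-(n,t)(-Z)$. Already for $l=1$, $D_n=1$, $V_n=0$ one has $T_\pm(n,t)(Z)=\mp t-Z^{-1}$, which refutes both of your equalities whenever $t\neq 0$. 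The correct argument is
\[
M_+^W(n,t)=P_+(n,t)\bigl(-\overline{M_-^W(0,t)}\bigr)=-P_-(n,t)\bigl(\overline{M_-^W(0,t)}\bigr)=-\overline{P_-(n,t)M_-^W(0,t)}=-\overline{M_-^W(n,t)},
\]
using that $P_-(n,t)$ is real for $t\in\mathbb{R}$. The same identity covers $n<0$, because conjugation by $\operatorname{diag}(I,-I)$ commutes with inversion.

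There is also a small slip in (b). You mean $M_+(t)+M_-(t)=2i\operatorname{Im}M_+(t)$; as written, $\operatorname{Im}\bigl(M_+(t)-M_-(t)\bigr)=0$ on $A$.
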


    \begin{proof}
        \begin{enumerate}[label=(\alph*)]
            \item If $W\in\mathcal{R}(A)$, then for a.e. $t\in A$, $M^W_+(0,t)=-\overline{M^W_-(0,t)}$. It follows from Lemmas~\ref{90} and~\ref{85A} that for each $n\in\mathbb{N}$ and for a.e. $t\in A$, 
            $$M_+^W(n,t)=P_+(n,t)M_+^W(0,t)=P_-(n,t)\left(-\overline{M_-^W(0,t)}\right)=-\overline{M_-^W(n,t)}$$
(note that the results presented in Lemma~\ref{90} are still valid for a.e. $t\in A$).

            \item 
              Since $M_\pm^W(\cdot)\in\mathcal{H}$, it follows  that $M_+^W(\cdot)+M_-^W(\cdot)\in\mathcal{H}$ and  that 
              $-\left(M_+^W(\cdot)+M_-^W(\cdot)\right)^{-1}\in\mathcal{H}$. 
              Since the normal limit of a matrix-valued Herglotz function exists Lebesgue a.e.,  it follows that $\left(M_+^W(t)+M^W_-(t)\right)$ and that 
              $\left(M_+^W(t)+M_-^W(t)\right)^{-1}$ exist for a.e. $t\in\mathbb{R}$; consequently, $\operatorname{Ker}\left(M_+^W(t)+M_-^W(t)\right)=\{0\}$ for a.e. $t\in\mathbb{R}$.

              Now, since $W\in\mathcal{R}(A)$, one has $M_+^W(t)+\overline{M_-^W(t)}=0$ for a.e.~$t\in A$, that is, $\operatorname{Re}(M_+^W(t))=-\operatorname{Re}(M_-^W(t))$ and $\operatorname{Im}(M_+^W(t))=\operatorname{Im}(M_-^W(t))$ for a.e. $t\in A$.              Hence, $\left(M_+^W(t)+M_-^W(t)\right)= 2\operatorname{Im}M_\pm^W(t)$, and so \[\operatorname{Ker}\operatorname{Im}M^W_\pm(t)=\operatorname{Ker}\left(M_+^W(t)+M_-^W(t)\right)=\{0\}\] for a.e. $t\in A$. 
                          This shows that 
              $\operatorname{Im}M_\pm^W (t)>0$ for a.e. $t\in A$, that is, $A\subset\Sigma_{ac,l}(W_\pm)$.

            \item It follows from Lemma \ref{32} that $W_-$ determines $M_-^W(\cdot)$, 
              and if $W\in\mathcal{R}(A)$, then $M_-^W(t)$ determines $M_+^W(t)$ for a.e. $t\in A$.  Since $M_+^W(\cdot)\in\mathcal{H}$ and $|A|>0$, it follows that $M_+^W(z)$ is completely determined for each $z\in\mathbb{C}_+$, and then one can obtain $W_+$ from $M_+^W(\cdot)$ by another application of Lemma \ref{32}.

            \item Since $\mathcal{V}^C$ is compact, it suffices to show that $\mathcal{R}^C(A)$ is closed. Let $(W_j)\subset \mathcal{R}^C(A)$ and let $W\in\mathcal{V}^C$ be such that $\lim_{j\to\infty}\rho_\infty(W_j,W)=0$. It follows from Proposition~\ref{CVD} and Lemma~\ref{32} that for each $A\in\mathcal{B}(\mathbb{R})$ with $|A|<\infty$, and each bounded $S\in\mathcal{B}(\mathbb{R})$, 
              \begin{eqnarray*}\int_A \omega^\alpha_{M_-(t)}(-S)\ dt&=&\lim_{j\to\infty}\int_A \omega^\alpha_{M_-^j(t)}(-S)\ dt=\lim_{j\to\infty}\int_A \omega^\alpha_{M_+^j(t)}(S)\ dt\\
                &=&\int_A \omega^\alpha_{M_+(t)}(S)\ dt\, ,\end{eqnarray*}
where $M_\pm^j(\cdot)$ stand for the $m$-functions of $J_\pm^{W_j}$.
              The result is now a direct consequence of Lemma~\ref{61}.

            \item The restrictions of the maps $\Phi_\pm: W\mapsto W_\pm$ to $\mathcal{R}^C(A)$, denoted by $\Phi_\pm^C$ and defined between the metric spaces $\mathcal{R}^C(A)$ and $\mathcal{R}^C_\pm(A)$, are injective (by item (c)) and continuous. Therefore, the inverse maps $$(\Phi_\pm^C)^{-1}:\mathcal{R}_\pm^C(A)\to\mathcal{R}^C(A)\ ,\ \ \ \ \ \ (\Phi_\pm^C)^{-1}(W_\pm)=W,$$ are continuous, along with $$\Phi_+^C\circ(\Phi_-^C)^{-1}:\mathcal{R}_-^C(A)\to\mathcal{R}^C_+(A)\ ,\ \ \ \ \ \ \left(\Phi_+^C\circ(\Phi_-^C)^{-1}\right)(W_-)=W_+.$$ 

              Since $\mathcal{R}^C(A)$ is compact and since $(\Phi^C_\pm)^{-1}$ are continuous maps, if follows from item (d) that both $\mathcal{R}_-^C(A)$ and $\mathcal{R}_+^C(A)$ are compact, and so, $\Phi_+^C\circ(\Phi_-^C)^{-1}$ is uniformly continuous.
        \end{enumerate}
    \end{proof}

    \subsection{The Oracle Theorem}

    This subsection provides the main result of this paper, that is, a version of the Oracle Theorem for matrix-valued Jacobi operators. Most of the proof follows Remling's original proof, but for the sake of the reader, we present the details.

    \begin{teo}{\textnormal{(The Oracle Theorem)}}\label{98}
        Let $C>0,\ \epsilon>0$, and let $A\subset\mathbb{R}$ be a Borel set of positive Lebesgue measure. Then, there exist $L\in\mathbb{N}$ and a smooth function 
        $$\Delta:\left(\mathcal{P}_l(\mathbb{R},(C+1)^{-1},C+1)\times\mathcal{S}_l(\mathbb{R},C)\right)^{L+1}\rightarrow \mathcal{P}_l(\mathbb{R},(C+1)^{-1},C+1)\times\mathcal{S}_l(\mathbb{R},C),$$
        the so-called oracle, such that the following holds: for each $U\in\mathcal{V}^C_+$ so that $\Sigma_{ac,l}(U)\supset A$, there exists $n_0\in\mathbb{N}$ such that for each $n\geqslant n_0$,
        $$\rho\big(U(n+1),\ \Delta\left(U(n-L),...,U(n)\right)\big)<\epsilon\ .$$
    \end{teo}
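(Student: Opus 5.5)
We follow Remling's scalar argument, using the uniform continuity in Proposition~\ref{33}(e) as the key input together with the reflectionless property of $\omega$-limits (Theorem~\ref{27}). Write $\mathcal{K}:=\mathcal{P}_l(\mathbb{R},(C+1)^{-1},C+1)\times\mathcal{S}_l(\mathbb{R},C)$. This set is compact and, by Proposition~\ref{convexity_pot}, convex in a finite-dimensional real vector space. The plan has three steps: first build a continuous ``oracle'' on the compact set of left half-lines of reflectionless potentials, then extend and smooth it, and finally show that the shifted potentials $S^nU$ eventually lie close to $\mathcal{R}^C(A)$.

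\emph{Step 1: an oracle on $\mathcal{R}^C(A)$.} By Proposition~\ref{33}(d),(e), the set $\mathcal{R}^C(A)$ is compact and the map $W_-\mapsto W_+$ is uniformly continuous on $\mathcal{R}_-^C(A)$. In particular, $W(1)$ depends uniformly continuously on $W_-$ in the metric $\rho_\infty$ of~\eqref{63b}. Since $\rho_\infty$ gives weight $2^{-|n|}$ to site $n$, uniform continuity yields $\delta>0$ and $L\in\mathbb{N}$ with the following property: if $W,W'\in\mathcal{R}^C(A)$ satisfy $\rho(W(k),W'(k))<\delta$ for $-L\le k\le 0$, then $\rho(W(1),W'(1))<\epsilon/3$. (The tail $\sum_{k<-L}2^{-|k|}\operatorname{diam}\mathcal{K}$ is absorbed by choosing $L$ large.)

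Consider the compact set $\mathcal{K}_0:=\{(W(-L),\dots,W(0))\mid W\in\mathcal{R}^C(A)\}\subset\mathcal{K}^{L+1}$. The map $\Delta_0:(W(-L),\dots,W(0))\mapsto W(1)$ is then well defined on $\mathcal{K}_0$ up to an $\epsilon/3$ ambiguity. To get a genuine function, cover $\mathcal{K}_0$ by finitely many $\delta/2$-balls $B_i$ centred at points $x_i\in\mathcal{K}_0$, and pick values $y_i=W_i(1)$ from representatives $W_i$. Using a smooth partition of unity $\{\varphi_i\}$ subordinate to the enlarged balls, set
\[\Delta(x):=\sum_i\varphi_i(x)\,y_i\]
on a neighbourhood of $\mathcal{K}_0$, and extend $\Delta$ smoothly to all of $\mathcal{K}^{L+1}$, for example by adding a cutoff term times a fixed point of $\mathcal{K}$. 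Convexity of $\mathcal{K}$ guarantees that $\Delta$ takes values in $\mathcal{K}$. Moreover, for $x$ within $\delta/2$ of $(W(-L),\dots,W(0))$ with $W\in\mathcal{R}^C(A)$, every $y_i$ with $\varphi_i(x)\neq0$ is within $\epsilon/3$ of $W(1)$. Hence $\rho(\Delta(x),W(1))<\epsilon/3$ on a $\delta/2$-neighbourhood of $\mathcal{K}_0$; one should double-check that $\rho$-balls are convex so that the convex combination stays within $\epsilon/3$, which holds since $\rho$ is induced by a norm.

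\emph{Step 2: application to $U$.} Let $U\in\mathcal{V}^C_+$ with $\Sigma_{ac,l}(U)\supset A$. By Theorem~\ref{27}, $\omega(U)\subset\mathcal{R}(\Sigma_{ac,l}(U))\subset\mathcal{R}(A)$, and since $\omega(U)\subset\mathcal{V}^C$ we get $\omega(U)\subset\mathcal{R}^C(A)$. By Proposition~\ref{34}, $\rho_\infty(S^nU,\omega(U))\to0$. Thus for $n\ge n_0$ there is $W\in\mathcal{R}^C(A)$ with $\rho_\infty(S^nU,W)$ so small that $\rho(U(n+k),W(k))<\min(\delta/2,\epsilon/3)$ for $-L\le k\le1$. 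Then
\[\rho\bigl(U(n+1),\Delta(U(n-L),\dots,U(n))\bigr)\le\rho(U(n+1),W(1))+\rho\bigl(W(1),\Delta(U(n-L),\dots,U(n))\bigr)<\tfrac{2\epsilon}{3}<\epsilon.\]
Indexing is the one subtle point here: $S^nU$ is defined on $\{k\ge 1-n\}$, which contains $[-L,1]$ once $n>L$, and this is exactly the convention used in the definition of $\omega(V)$.

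\emph{Main obstacle.} The delicate step is Step 1. It requires making the uniform continuity of Proposition~\ref{33}(e) quantitative in terms of a finite window, and then building a \emph{smooth} map with values in the constrained set $\mathcal{K}$. The crucial point is that $L$ and $\Delta$ depend only on $C,\epsilon,A$, through the compact set $\mathcal{R}^C(A)$, and not on $U$. Keeping $\Delta$ inside $\mathcal{K}$ is handled by the convexity from Proposition~\ref{convexity_pot}.
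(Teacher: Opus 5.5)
Your proof is correct and follows essentially the same route as the paper's (Remling's) argument: uniform continuity of $W_-\mapsto W_+$ from Proposition~\ref{33}(e), a tail cut-off to fix $L$, a finite cover by balls centred at reflectionless potentials, Theorem~\ref{27} with Proposition~\ref{34} to place $S^nU$ near $\mathcal{R}^C(A)$, and a partition of unity together with the convexity of Proposition~\ref{convexity_pot}. The differences are cosmetic: you work with windows in $\mathcal{K}^{L+1}$ rather than zero-padded sequences measured in $\rho_\infty$, and your extra cutoff term with a fixed point of $\mathcal{K}$ handles the complement of the cover more cleanly than the paper, whose assignment $\Delta:=0$ does not lie in $\mathcal{K}$. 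The radius bookkeeping needs tidying, though. For the $\varphi_i$ to sum to $1$ on the $\delta/2$-neighbourhood of $\mathcal{K}_0$, their supports need radius about $\delta$, so the window of $W$ can be up to $3\delta/2$ from $x_i$. Covering $\mathcal{K}_0$ by $\delta/4$-balls, enlarging them to $\delta/2$-balls, and requiring $\rho(U(n+k),W(k))<\delta/4$ closes the estimate.
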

    \begin{proof}
      Let $A\subset\mathbb{R}$, $C>0$ and $\epsilon>0$ be as in the hypotheses of the theorem. Then, there exists $\delta>0$ so that 
                  \begin{equation}\label{35}
            \rho\left(W(1),\widetilde{W}(1)\right)<\frac{\epsilon}{3} \ \ \ \ \text{if}\ \ \ \ W,\widetilde{W} \in \mathcal{R}^C(A),\ \ \rho_\infty(W_-,\widetilde{W}_-)<5\delta.
        \end{equation}

                  This is a consequence of Proposition \ref{33}-(e), since the map $$\mathcal{R}^C_-(A) \to \mathcal{R}_+^C(A), \ \ \ \ \ W_-\mapsto W_+,$$  is a uniformly continuous bijection with a uniformly continuous inverse. We also let $\delta<\epsilon/3$ for technical reasons that are discussed later.

        Now, define the $2\delta$-neighborhood of $\mathcal{R}_-^C(A)$ as the set $$\mathcal{U}_{2\delta}:=\{U_- \in \mathcal{V}_-^C : \rho_\infty(U_-,\mathcal{R}_-^C(A))\leqslant 2\delta\},$$ 
      where    $\rho_\infty(U_-,\mathcal{R}_-^C(A)):=\inf\{\rho_{\infty}(U_-,W)\mid W\in\mathcal{R}_-^C(A)\}$; this set is obviously closed, and since it is compact (by Proposition \ref{33}-(d)), one can cover it with a finite number of balls of size $3\delta$ (say, $M$), whose centers are elements $W_-^{(j)} \in \mathcal{R}_-^C(A)$: 
        $$\mathcal{U}_{2\delta} \subset B_{3\delta}(W_-^{(1)})\cup...\cup B_{3\delta}(W_-^{(M)}).$$

        We may choose $L$ large enough so that $(4C+2)\Sigma_{j>L} 2^{-j}<\delta$. This choice of $L$ guarantees that $\rho_{\infty}(U_-,\widetilde{U}_-)<\delta$ whenever $U(n)=\widetilde{U}(n)$ for $n=0,-1,\ldots,-L$ (and $U_-, \widetilde{U}_- \in \mathcal{V}_-^C)$. Namely, since for each $n<-L$, $\rho\left(U(n),\widetilde{U}(n)\right)<(4C+2)$ (by definition), one gets 
         \begin{equation*}
            \rho_{\infty}(U_-,\widetilde{U}_-)=\sum_{n<-L}2^{-|n|}\ \rho\left(U(n),\widetilde{U}(n)\right)<\delta \, .
        \end{equation*}

         We now present a preliminary definition of the oracle function $\Delta$ as follows: if for each $j\in\{0,\ldots,L\}$, 
         $$A_{-j}\in \mathcal{P}_l(\mathbb{R},(C+1)^{-1},C+1)\times \mathcal{S}_l(\mathbb{R},C)$$
         and 
         $$[\ldots, 0, 0, 0,\ldots, 0, A_{-L},\ldots, A_0]\in B_{3\delta}(W_-^{(1)}),$$ then set $$\Delta(A_{-L},..., A_0):=W^{(1)}(1).$$

        Now, one can focus on the next open ball: if $$[\ldots, 0, 0, 0,\ldots, 0, A_{-L},\ldots, A_0]\in B_{3\delta}(W_-^{(2)})\setminus B_{3\delta}(W_-^{(1)}),$$ then set $$\Delta(A_{-L},\ldots, A_0):=W^{(2)}(1).$$

         Proceed in this way up to $B_{3\delta}(W_-^{(M)})\setminus B_{3\delta}(W_-^{(M-1)})$. If $(A_{-L},\ldots,A_0)\notin B_{3\delta}(W_-^{(M)})$, then set $\Delta (A_{-L},..,A_0):= 0$.

        Let $U\in\mathcal{V}^C_+$ be so that $A\subset\Sigma_{ac,l}(U)$. It remains to show that $\Delta$ predicts $U(n+1)$. In order to do this, let $n_0 \geqslant L$ be sufficiently large so that, for each $n\ge n_0$, $\rho_\infty(S^nU,\omega(U))<\delta$. This is possible thanks to Proposition \ref{34}. Hence, it follows from Theorem \ref{27} that for each $n\geqslant n_0$, there exists $\widetilde{W}\in \mathcal{R}^C(A)$ \footnote{Since $A \subset \Sigma_{ac,l}(U)$, if $\widetilde{W}$ is reflectionless on $\Sigma_{ac,l}(U)$, then $\widetilde{W}$ is evidently reflectionless on $A$; it follows that $\mathcal{R}(\Sigma_{ac,l}(U)) \subset \mathcal{R}(A)$.} such that
        \begin{equation}\label{36}
            \rho_\infty(S^nU,\widetilde{W})<\delta.
        \end{equation}

        By the choice of $L$, one has $$Z=[\ldots, 0, 0, 0,\ldots, 0, (S^nU)(-L),\ldots, (S^nU)(0)]\in \mathcal{U}_{2\delta},$$ from which follows that there exists $j\in \{1,\ldots,M\}$ such that $$[\ldots, 0, 0, 0,\ldots, 0, U(n-L),\ldots, U(n)]\in B_{3\delta}(W_-^{(j)}).$$ Fix the minimum value of $j$ for which this property holds. For this choice of $j$, one gets 
                \begin{equation}\label{37}
            \Delta (U(n-L), U(n-L+1),\ldots, U(n))=W^{(j)}(1),
        \end{equation}
        \noindent by the definition of $\Delta$. One also has that 
        \begin{eqnarray*}\rho_\infty(W^{(j)}_-, \widetilde{W}_-)&\leqslant& \rho_\infty(W^{(j)}_- , [\ldots, 0, U(n-L),\ldots, U(n)])\\&+& \rho_\infty([\ldots, 0, U(n-L),\ldots, U(n)], (S^nU)_-)\\ &+& \rho_\infty((S^nU)_-, \widetilde{W}_-) < 3\delta + \delta + \delta = 5\delta,\end{eqnarray*}  and so, it follows from relation~\ref{35} that 
        \begin{equation}\label{38}
            \rho\left(W^{(j)}(1),\widetilde{W}(1)\right)<\frac{\epsilon}{3}.
        \end{equation}
       
        On the other hand, it follows from inequality \ref{36} that $$\rho\left(U(n+1),\widetilde{W}(1)\right) < 2\delta <\frac{2\epsilon}{3},$$ and if one combines this with \ref{37} and \ref{38}, one gets $$\rho\big(U(n+1),\ \Delta (U(n-L), U(n-L+1),\ldots, U(n))\big)<\epsilon\ ,$$ as desired.
        
        \vspace{0.3cm}
        The function $\Delta$ constructed above is a step function and, therefore, it is not continuous. In order to obtain a smooth oracle, one can employ a smooth partition of unity. Let $\{O_j\}_{j=1}^M$ be the open cover of the compact set $\mathcal{U}_{2\delta}$ given by $O_j := B_{3\delta}(W_-^{(j)})$, and let $\{\phi_j\}_{j=1}^M$ be a smooth partition of unity subordinate to this cover. One can then redefine $\Delta$ as the convex combination $$\Delta(X) := \sum_{j=1}^M \phi_j(X) W^{(j)}(1)\ .$$ Since the space $\mathcal{P}_l(\mathbb{R},(C+1)^{-1},C+1)\times\mathcal{S}_l(\mathbb{R},C)$ is convex (see Proposition \ref{convexity_pot}), this smooth function $\Delta$ is well-defined and maps strictly into the correct space. Furthermore, for a given $X \in \mathcal{U}_{2\delta}$, if $\phi_j(X) > 0$, then $X \in O_j$, meaning the bound~\eqref{38} holds for the corresponding $W^{(j)}(1)$. By the triangle inequality (and by the convexity of the metric $\rho$), it follows that the error of this smooth prediction is strictly bounded by convex combinations of errors smaller than $\epsilon/3$, ensuring that the overall accuracy $\rho\big(U(n+1),\ \Delta(\dots)\big) < \epsilon$ is preserved. This concludes the proof of the Oracle Theorem.
    \end{proof}

    \vspace{0.3cm}
    \subsection{Some consequences of Theorems~\ref{27} and~\ref{98}}\label{CTO}

    Here, we discuss the counterparts of Theorem~1.1, Corollaries~1.6 and~1.7 in~\cite{remling}  for matrix-valued Jacobi operators.

    \begin{teo}\label{110a}
      Suppose that the (half-line) potential $U$ 
      takes only finitely many values and that $\Sigma_{ac,l}(U)\neq\emptyset$. Then, $U$ is eventually periodic: there exist $n_0, p \in \mathbb{N}$ so that for each $n\ge n_0$, $U(n+p)=U(n)$.
    \end{teo}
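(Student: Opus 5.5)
The plan is to follow Remling's argument for Theorem~1.1 in~\cite{remling}, combining the Oracle Theorem~\ref{98} with the finiteness of the set of values. Let $\mathcal{F}=\{a_1,\ldots,a_k\}\subset\mathcal{P}_l(\mathbb{R},(C+1)^{-1},C+1)\times\mathcal{S}_l(\mathbb{R},C)$ be the finite set of values of $U$, with $C$ chosen so that $U\in\mathcal{V}^C_+$. Set
\[
\eta:=\min\{\rho(a_i,a_j)\mid i\neq j\}>0 .
\]
Since $\Sigma_{ac,l}(U)$ is an essential support, "nonempty" must be read as having positive Lebesgue measure. Take $A:=\Sigma_{ac,l}(U)$ itself, or a Borel subset of it with positive measure.

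Apply Theorem~\ref{98} with $\epsilon:=\eta/2$ and this $A$. This gives $L\in\mathbb{N}$, an oracle $\Delta$, and $n_0$ such that $\rho\big(U(n+1),\Delta(U(n-L),\ldots,U(n))\big)<\eta/2$ for every $n\ge n_0$. Because the balls of radius $\eta/2$ around distinct points of $\mathcal{F}$ are disjoint, $U(n+1)$ is the \emph{unique} element of $\mathcal{F}$ within distance $\eta/2$ of $\Delta(U(n-L),\ldots,U(n))$. So there is a deterministic map
\[
\Phi:\mathcal{F}^{L+1}\to\mathcal{F}
\]
with $U(n+1)=\Phi(U(n-L),\ldots,U(n))$ for all $n\ge n_0$. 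On windows where no element of $\mathcal{F}$ is $\eta/2$-close to $\Delta$, define $\Phi$ arbitrarily; such windows never occur along $U$ once $n\ge n_0$.

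Next, consider the window vectors $X_n:=(U(n-L),\ldots,U(n))\in\mathcal{F}^{L+1}$. For $n\ge n_0$ we have $X_{n+1}=G(X_n)$, where $G(x_0,\ldots,x_L):=(x_1,\ldots,x_L,\Phi(x_0,\ldots,x_L))$. Since $\mathcal{F}^{L+1}$ has at most $k^{L+1}$ elements, the pigeonhole principle gives $n_0\le m<m'$ with $X_m=X_{m'}$. By induction, $X_{m+j}=X_{m'+j}$ for all $j\ge0$. Reading off the last coordinate yields $U(n+p)=U(n)$ for all $n\ge m$, with $p:=m'-m$, which is the claim (rename $m$ as $n_0$).

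The main obstacle is not the combinatorics but the hypotheses of Theorem~\ref{98}. First, one must check that $U$ lies in some $\mathcal{V}^C_+$: finitely many values, each $D$ positive definite, give uniform bounds on singular values, and $C$ can be taken as the maximum over these bounds. Second, one must supply a set $A$ of positive measure, so the phrase $\Sigma_{ac,l}(U)\neq\emptyset$ has to be interpreted up to null sets. Once these are in place, the passage from an $\epsilon$-accurate oracle to an exact recursion uses only the uniform gap $\eta$, and that gap exists precisely because $\mathcal{F}$ is finite.
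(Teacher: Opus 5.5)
Your proposal is correct and follows the paper's argument, which is Remling's proof of his Theorem~1.1. You apply the Oracle Theorem~\ref{98} with $\epsilon$ below half the minimal separation of the finitely many values to get an exact recursion $U(n+1)=\Phi(U(n-L),\ldots,U(n))$, then use pigeonhole on the finitely many blocks of length $L+1$. You also supply details the paper leaves implicit: the choice of $C$, reading $\Sigma_{ac,l}(U)\neq\emptyset$ as $|\Sigma_{ac,l}(U)|>0$, and the induction on window vectors. The one case you don't handle, a single value where $\eta$ is undefined, is trivial since $U$ is then constant.
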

    \begin{proof} The proof of this result is identical to the proof of Theorem~1.1 in~\cite{remling}, which we present for completeness.

      By choosing $\epsilon>0$ small enough, one can use an oracle to
      (eventually) predict $U(n)=(D_n,V_n)$ exactly, given the previous $L + 1$ values of $U$. But there are only finitely many different blocks of size $L + 1$, so after a while, things must start repeating themselves.
    \end{proof}
    

\begin{cor}\label{cor1.5} Let $C>0$, $U\in\mathcal{V}^C$, and suppose that $W$ is a pertubation such that $U+W\in\mathcal{V}^C$, and for which there exists a subsequence $n_j\rightarrow\infty$ so that $$\limsup_{j\to\infty}\Vert W (n_j)\Vert > 0\,$$
(recall that $\Vert U(n)\Vert=\rho(U(n),0)=\rho(D_n,0)+\rho(V_n,0)=\Vert D_n\Vert+\Vert V_n\Vert$), but for each $k\in\mathbb{N}$,
$$\lim_{j\to\infty}\Vert U (n_j - k)\Vert = 0\,.$$
 Then, $\Sigma_{ac,l}(U)\cap\Sigma_{ac,l}(U+W) = \emptyset$. In particular, this conclusion holds for every bounded perturbation $W$ of the form
$$W (n) =\sum_{j\ge1}(D_j,V_j)\delta_{n,n_j},\qquad n_j-n_{j-1}\rightarrow\infty,\qquad \limsup_{j\to\infty}\Vert (D_j,V_j)\Vert > 0\,,$$ such that $U+W\in\mathcal{V}^C$.
\end{cor}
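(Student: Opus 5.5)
The plan is to argue by contradiction using the Oracle Theorem (Theorem~\ref{98}), following Remling's proof of Corollary~1.6 in~\cite{remling}. First, two interpretive remarks. Since $\Sigma_{ac,l}$ is only defined up to Lebesgue-null sets, the conclusion $\Sigma_{ac,l}(U)\cap\Sigma_{ac,l}(U+W)=\emptyset$ is to be read as "this intersection has Lebesgue measure zero." Also, since every $D_n$ has singular values at least $(C+1)^{-1}$, the quantity $\Vert U(n_j-k)\Vert$ can never tend to zero. I will therefore read the second hypothesis as $\lim_{j\to\infty}\Vert W(n_j-k)\Vert=0$ for each fixed $k\in\mathbb{N}$, i.e.\ the perturbation is asymptotically absent just before the times $n_j$. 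This is the form of the hypothesis in Remling's result, and it is clearly what the particular case in the statement satisfies. Throughout, $\Sigma_{ac,l}$ refers to the half-line operators $J_+^U$ and $J_+^{U+W}$, so that Theorem~\ref{98} applies to $U_+,(U+W)_+\in\mathcal{V}^C_+$.

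Suppose that $A:=\Sigma_{ac,l}(U)\cap\Sigma_{ac,l}(U+W)$ has $|A|>0$. Set
\[
\eta:=\limsup_{j\to\infty}\Vert W(n_j)\Vert>0,
\]
and fix $\epsilon>0$ with $3\epsilon<\eta$. Apply Theorem~\ref{98} with this $C$, this $\epsilon$ and this set $A$. This yields $L\in\mathbb{N}$ and a smooth oracle $\Delta$, and the same oracle serves both potentials because both satisfy $\Sigma_{ac,l}\supset A$. Hence there is $n_0$ such that, for every $n\ge n_0$,
\[
\rho\big(U(n+1),\Delta(U(n-L),\ldots,U(n))\big)<\epsilon
\quad\text{and}\quad
\rho\big((U+W)(n+1),\Delta((U+W)(n-L),\ldots,(U+W)(n))\big)<\epsilon .
\]

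Next I compare the two oracle outputs at the times $n=n_j-1$. The domain of $\Delta$ is $\big(\mathcal{P}_l(\mathbb{R},(C+1)^{-1},C+1)\times\mathcal{S}_l(\mathbb{R},C)\big)^{L+1}$, which is compact, so the continuous map $\Delta$ is uniformly continuous on it. The two input blocks $\big(U(n_j-1-k)\big)_{k=0}^{L}$ and $\big((U+W)(n_j-1-k)\big)_{k=0}^{L}$ differ entrywise by $W(n_j-1-k)$. By the reinterpreted hypothesis, each of these finitely many terms tends to $0$ as $j\to\infty$. Uniform continuity then gives
\[
\rho\big(\Delta(U(n_j-1-L),\ldots,U(n_j-1)),\ \Delta((U+W)(n_j-1-L),\ldots,(U+W)(n_j-1))\big)<\epsilon
\]
for all large $j$. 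Since $\rho(U(n),U(n)+W(n))=\Vert W(n)\Vert$ (Frobenius norm, split into the $D$- and $V$-components), the triangle inequality yields $\Vert W(n_j)\Vert<3\epsilon<\eta$ for all sufficiently large $j$. This contradicts the definition of $\eta$ as a $\limsup$, so $|A|=0$.

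For the particular case, fix $k\in\mathbb{N}$. Because $n_j-n_{j-1}\to\infty$, the point $n_j-k$ lies strictly between two consecutive support points of $W$ once $j$ is large. Hence $W(n_j-k)=0$ eventually, while $\limsup_j\Vert W(n_j)\Vert=\limsup_j\Vert(D_j,V_j)\Vert>0$, so the general statement applies. I expect no serious obstacle beyond the bookkeeping above. The only delicate points are the correct reading of the hypotheses and checking that one single oracle works for both $U$ and $U+W$, which holds because the oracle in Theorem~\ref{98} depends only on $(C,\epsilon,A)$.
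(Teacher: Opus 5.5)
Your proof is correct and follows the paper's argument. The paper's proof is only a sketch: it applies the Oracle Theorem to $A=\Sigma_{ac,l}(U)\cap\Sigma_{ac,l}(U+W)$ and observes that no continuous oracle can predict both $U$ and $U+W$ when $W$ is small just before the $n_j$ but not at the $n_j$; you make exactly this precise via uniform continuity of $\Delta$ on its compact domain and a three-term triangle inequality at $n=n_j-1$. Your reading of the hypothesis as $\lim_{j}\Vert W(n_j-k)\Vert=0$ is the intended one, since the paper's own proof says $W$ is small to the left of these points. Your reading of the conclusion as "the intersection has Lebesgue measure zero" is also correct.
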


\begin{proof} Since, by hypothesis, there exists $C>0$ so that $U,$ $U+W\in\mathcal{V}^C$, then $A =\Sigma_{ac,l}(U)\cap\Sigma_{ac,l}(U+W)$ cannot have positive measure, because then Oracle Theorem would provide oracles that work for both $U$ and $U+W$. This is impossible, since $\Vert W (n_j )\Vert\ge\epsilon> 0$ on a suitable subsequence, but $W$ is small on long intervals
  to the left of these points, so no (continuous) oracle with sufficiently high accuracy can predict both $U$ and $U+W$ correctly.
\end{proof}

\begin{obs}
  \begin{enumerate}
  \item It is not necessary to use Oracle Theorem here. Corollary~\ref{cor1.5} also follows directly from Theorem~\ref{27} if one uses the standard uniqueness property of reflectionless potentials presented in Proposition~\ref{33}-(c).
  \item In Corollary~\ref{cor1.5}, one demands that there exists $C>0$ so that $U,$ $U+W\in\mathcal{V}^C$, a much stricter condition than the original result proved by Remling (Corollary~1.5 in [\ref{remling}]), which follows for any (half-line) potential $U$. This is due to the fact that, by Simon-Spencer Theorem for scalar Jacobi operators, if either $W\notin\cup_{C>0}\mathcal{V}^C$ or $U+W\notin\cup_{C>0}\mathcal{V}^C$, then the corresponding operator has empty absolutely continuous spectrum. Since we do not have a version of Simon-Spencer Theorem for matrix-valued Jacobi operators, one has to demand, in Corollary~\ref{cor1.5}, that  $U,$ $U+W\in\mathcal{V}^C$ for some $C>0$.
    \end{enumerate}
\end{obs}

As another immediate consequence of Theorem~\ref{27}, one partially recovers a result on the semicontinuity of $\Sigma_{ac,l}$ (which extends Last-Simon Theorem for matrix-valued Jacobi matrices).

\begin{cor}[Proposition~5.4 in \ref{fabricioSilasA}] \label{111a}
If $W \in \omega(V)$, then $\Sigma_{ac,l}(W_\pm) \supset \Sigma_{ac,l}(V)$.
\end{cor}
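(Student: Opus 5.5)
The plan is to combine Theorem~\ref{27} with Proposition~\ref{33}-(b). Let $V\in\mathcal{V}^C_+$ and set $A:=\Sigma_{ac,l}(V)$. If $|A|=0$ there is nothing to prove, since essential supports are only defined up to null sets. So assume $|A|>0$.

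\textbf{Step 1 (reflectionless limits).} Let $W\in\omega(V)$. By Theorem~\ref{27}, $W\in\mathcal{R}(\Sigma_{ac,l})$, i.e.\ $M_+^W(0,t)=-\overline{M_-^W(0,t)}$ for a.e.\ $t\in A$. Since $\omega(V)\subset\mathcal{V}^C$ by Proposition~\ref{34}, $W$ is an admissible whole-line potential with uniformly bounded coefficients. Hence $W\in\mathcal{R}(A)$ in the sense required by Proposition~\ref{33}.

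\textbf{Step 2 (a.c.\ spectrum of the limit).} Apply Proposition~\ref{33}-(b) with this $A$. It gives $A\subset\Sigma_{ac,l}(W_\pm)$ up to a Lebesgue null set. The reason is that reflectionlessness forces $M_+^W(t)+M_-^W(t)=2i\operatorname{Im}M_\pm^W(t)$ for a.e.\ $t\in A$. This matrix is a.e.\ invertible because $-(M_+^W+M_-^W)^{-1}$ is again Herglotz, so $\operatorname{Im}M_\pm^W(t)>0$ a.e.\ on $A$. By Remark~\ref{OBSFUND} this is exactly the statement $A\subset\Sigma_{ac,l}(W_\pm)$ modulo null sets, which is the claimed inclusion $\Sigma_{ac,l}(W_\pm)\supset\Sigma_{ac,l}(V)$.

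\textbf{Step 3 (conventions).} The remaining work is to check that the half-lines match. Here $\Sigma_{ac,l}(V)$ means the set for $J_+^V$, while $\Sigma_{ac,l}(W_\pm)$ refers to the $m$-functions $M_\pm^W(0,\cdot)$. For $W_-$, Remark~\ref{OBSFUND} was stated for $M_+$, so the left half-line analogue is needed. It follows because $M_-(z)$ is the image of the Herglotz function $\int d\mu_-(\lambda)/(\lambda-z)$ under the affine map given in Subsection~\ref{114}. Conjugating by the real invertible matrices $D_0^{-1}$, $D_{-1}$ preserves both the existence of boundary values and full rank of the imaginary part.

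I expect Step 3 to be the main obstacle, though it is bookkeeping. One must also note that Theorem~\ref{27} only yields the identity almost everywhere, so every inclusion is understood modulo null sets, as is customary for essential supports.
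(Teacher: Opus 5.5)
Your proposal is correct and follows the paper's own argument: Theorem~\ref{27} places $W$ in $\mathcal{R}(\Sigma_{ac,l}(V))$, and Proposition~\ref{33}-(b) then gives $\Sigma_{ac,l}(V)\subset\Sigma_{ac,l}(W_\pm)$ up to null sets. Your additional bookkeeping makes explicit what the paper's two-line proof leaves implicit: the trivial case $|\Sigma_{ac,l}(V)|=0$ (needed because Proposition~\ref{33} assumes $|A|>0$), and the left half-line analogue of Remark~\ref{OBSFUND} obtained from the affine formula for $M_-$.
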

\begin{proof}
  It follows from Theorem~\ref{27} that $W\in\mathcal{R}(\Sigma_{ac,l}(V))$. The result now follows from Proposition~\ref{33}-(b). 
  \end{proof}


\appendix

\section{Continuous dependence of the measures \texorpdfstring{$\omega^{I,R}$}{omega I,R}}\label{92}

    In this appendix we present, in details, the proofs of Lemmas~\ref{21.} and~\ref{28.}. For the sake of the readers, we present the statements again. 

    Recall that for each $z\in\mathbb{C}_+$, each $F\in\mathcal{H}$ and each $S\in\mathcal{B}(\mathbb{R})$, one defines 
\[\omega_{F(z)}^{I}(S)=\frac{1}{\pi}\int_S\frac{(\operatorname{Im}F(z))^{1/2}+I}{((\operatorname{Im}F(z))^{1/2}+I)^2+\lambda^2I}\ d\lambda\]
    and
\[\omega_{F(z)}^{R}(S)=\frac{1}{\pi}\int_S\frac{I}{\left({\|\operatorname{Im}F(z)+I\|}^{-1}\operatorname{Re}F(z)-\lambda I\right)^2+I}\ d\lambda\, .\]

      %
\begin{proof1}
  \normalfont
      Write $F(z)= Z_1 =X_1+iY_1$  and  $H(z)=Z_2 =X_2+iY_2$. Firstly, one needs some auxiliary results. 
      It follows from Theorem~A.1 in~\cite{Froese} that for each $Z_1,Z_2\in\mathfrak{S}_l$, one has  
        \begin{equation}\label{15}
            d_{\infty}(Z_1,Z_2)=2\ln \left\|S_{Z_1}^{-1}S_{Z_2}\right\|, \ \ \ \text{where} \ S_Z:=\left(\begin{array}{cc}
			Y^{1/2} & \ XY^{-1/2} \\
			0 & Y^{-1/2} \\
	   \end{array}\right), \ \ Z=X+iY \in \mathfrak{S}_l.
        \end{equation}

        Given that $S_{Z_1}^{-1}S_{Z_2}$ is a real symplectic matrix (see~\cite{Froese}), there exist $\mathcal{O}, \mathcal{O}' \in Sp(2l,\mathbb{R})\cap O(2l,\mathbb{R})$ and $\mathcal{D}=\left(\begin{array}{cc}
			D & \ 0 \\
			0 & D^{-1} \\
	   \end{array}\right)$ so that $S_{Z_1}^{-1}S_{Z_2}=\mathcal{O}\mathcal{D}\mathcal{O}'$, where $D$ is a diagonal matrix formed by the singular values of $S_{Z_1}^{-1}S_{Z_2}$. 
       
       One has from the identity in~(\ref{15}) that if $d_{\infty}(Z_1,Z_2)<\epsilon$, then 
       \[1\leqslant\left\|S_{Z_1}^{-1}S_{Z_2}\right\|=\left\|\mathcal{D}\right\|\leqslant e^{\epsilon/2}< 1+\epsilon\]       
       \noindent (which is true for $0<\epsilon<1/2$). Therefore,
       \[1\leqslant\left\|\mathcal{D}^{-1}\right\|< 1+\epsilon\ \ \ \  \text{or}\ \ \ \  s_l(\mathcal{D}^{-1})>\frac{1}{1+\epsilon}\ .\]
       
 In particular, one gets 
       \[\left\|S_{Z_1}^{-1}S_{Z_2}-\mathcal{O}\mathcal{O}'\right\|=\left\|\mathcal{O}(\mathcal{D}-I)\mathcal{O}'\right\|=\left\|\mathcal{D}-I\right\|=\left\|\mathcal{D}\right\|-1<\epsilon\ ,\]       
       \noindent since $\left\|\mathcal{D}-I\right\|
       =\left\|\mathcal{D}\right\|-1$.

       Note that since $\mathcal{O}, \mathcal{O}' \in Sp(2l,\mathbb{R})\cap O(2l,\mathbb{R})$, then 
       $\mathcal{O}\mathcal{O}' \in Sp(2l,\mathbb{R})\cap O(2l,\mathbb{R})$, 
       which implies that there exists $V\in U(l)$ such that 
       \[\mathcal{O}\mathcal{O}'=\left(\begin{array}{cc}
			\operatorname{Re}V & \ -\operatorname{Im}V \\
			\operatorname{Im}V & \ \operatorname{Re}V \\
	   \end{array}\right)\,; \]
hence, 
       \[S_{Z_1}^{-1}S_{Z_2}-\mathcal{O}\mathcal{O}'=\left(\begin{array}{cc}
			Y_1^{-1/2}Y_2^{1/2}-\operatorname{Re}V & \ \ \ Y_1^{-1/2}(X_2-X_1)Y_2^{-1/2}+\operatorname{Im}V \\
			-\operatorname{Im}V & Y_1^{1/2}Y_2^{-1/2}-\operatorname{Re}V \\
	   \end{array}\right)\ .\] 
       %
      
      Since $\left\|S_{Z_1}^{-1}S_{Z_2}-\mathcal{O}\mathcal{O}'\right\|<\epsilon$, then 
              \begin{align}\label{68}
                &\left\|Y_1^{-1/2}Y_2^{1/2}-\operatorname{Re}V\right\|<\epsilon\ , \nonumber \\ &\left\|\operatorname{Im}V\right\| < \epsilon\ , \nonumber \\ &\left\|Y_1^{-1/2}(X_2-X_1)Y_2^{-1/2}+\operatorname{Im}V\right\|<\epsilon\ , \nonumber \\ &\left\|Y_1^{1/2}Y_2^{-1/2}-\operatorname{Re}V\right\|<\epsilon\ ; 
       \end{align}
in particular,
       $$\left\|Y_1^{-1/2}(X_2-X_1)Y_2^{-1/2}\right\| \leqslant \left\|Y_1^{-1/2}(X_2-X_1)Y_2^{-1/2}+\operatorname{Im}V\right\| + \left\|-\operatorname{Im}V\right\| < 2\epsilon\ .$$

Now, one needs to show that if $\left\|\operatorname{Im}V\right\| < \epsilon$, then $\left\|\operatorname{Re}V-I\right\| < \epsilon$. Note firstly that since $V\in U(l)$, there exist another $W\in U(l)$ and a diagonal matrix $\Lambda=\operatorname{diag}(\lambda_1,\ldots,\lambda_l)$ so that $V=W\Lambda W^{-1}$, with $\lambda_j=e^{i\theta_j}$ for some $\theta_j\in[0,2\pi)$, $j\in\{1,\ldots,l\}$.

  Then, $\operatorname{Im}(V)=W\operatorname{Im}(\Lambda)W^{-1}$ and $\operatorname{Re}(V)=W\operatorname{Re}(\Lambda)W^{-1}$, with \[\operatorname{Im}(\Lambda)=\operatorname{diag}(\sin\theta_1,\ldots,\sin\theta_l)\ \ \ \textrm{and} \ \ \ \operatorname{Re}(\Lambda)=\operatorname{diag}(\cos\theta_1,\ldots,\cos\theta_l). \]

  It follows from relation $\left\|\operatorname{Im}V\right\| < \epsilon$  that $\max_j|\sin\theta_j|<\epsilon$, and so for each $j\in\{1,\ldots,l\}$,  $|\sin\theta_j|<\epsilon$. Now, since $\left\|\operatorname{Re}V-I\right\|=\max_j(1-\cos\theta_j)$, it follows from the identity $\cos\theta_j=\sqrt{1-\sin^2\theta_j}${\footnote{Suppose that there exists $j\in\{1,\ldots,l\}$ so that $\cos\theta_j=-\sqrt{1-\sin^2\theta_j}$; then, for each $0<\epsilon<1/2$, $\cos\theta_j<-1+\epsilon^2/\sqrt{2}$. This result, combined with $\|Y_1^{1/2}Y_2^{-1/2}-\operatorname{Re}V\|<\epsilon$, implies that $Y_1^{1/2}Y_2^{-1/2}$ has at least one negative eigenvalue, which is absurd. So, for each $0<\epsilon<1/2$ and each $j\in\{1,\ldots,l\}$, $\cos\theta_j=\sqrt{1-\sin^2\theta_j}$.}}, valid for each $j\in\{1,\ldots,l\}$, that 
  \[\left\|\operatorname{Re}V-I\right\|=\max_j(1-\cos\theta_j)<1-\sqrt{1-\epsilon^2}\le \epsilon^2<\epsilon\]
  (where we have used the inequality $1-\sqrt{1-x}\le x$, valid for each $x\in[0,1]$). 

  Therefore, by combining this result with (\ref{68}), one gets       
       \begin{equation}\label{69}
           \left\|Y_1^{-1/2}Y_2^{1/2}-I\right\|<2\epsilon,\ \ \ \left\|Y_1^{1/2}Y_2^{-1/2}-I\right\|<2\epsilon,\ \ \ \left\|Y_1^{-1/2}(X_2-X_1)Y_2^{-1/2}\right\|<2\epsilon.
       \end{equation}\\[0em]
        
      \textit{Proof of item 1.} Let $S\in\mathcal{B}(\mathbb{R})$ be such that $|S|<\infty$. It follows from Hadamard Theorem for $C^1$ matrix-valued functions that{\footnote{The theorem states that if $f:Dom(f)\subset \mathcal{S}(l,\mathbb{R})\rightarrow\mathcal{S}(l,\mathbb{R})$ is a $C^1$ matrix-valued function and if $A,B\in Dom(f)$ are such that for each $t\in[0,1]$, $tA+(1-t)B\in Dom(f')$, then  $f(A)-f(B)=\int_0^1f'\big(tA+(1-t)B\big)(A-B)\ dt$. See [\ref{Bhatia}] or [\ref{Dieudonné}] for details.}}
        \begin{align}\label{EAIMPOR}
          \left\|\omega_{F(z)}^{I}(S)-\omega_{H(z)}^{I}(S)\right\|&\leqslant\frac{1}{\pi}\int_S\left\|f_{\lambda}\big((\operatorname{Im}F(z))^{1/2}+I\big)-f_{\lambda}\big((\operatorname{Im}H(z))^{1/2}+I\big)\right\|\ d\lambda \nonumber \\ &\leqslant\frac{1}{\pi}\int_S\int_0^1\left\|f_{\lambda}'\big(tA+(1-t)B\big)(A-B)\right\|\ dt\ d\lambda\,, 
      \end{align} where      $$A=[(\operatorname{Im}F(z))^{1/2}+I], \qquad B=[(\operatorname{Im}H(z))^{1/2}+I]\ $$
 are positive definite matrices such that $\Vert A\Vert,\Vert B\Vert>1$,     and for each $\lambda\in\mathbb{R}$,
 \[f_\lambda:(1,\infty)\rightarrow (0,1), \qquad f_{\lambda}(x):=\frac{x}{x^2+\lambda^2}.\]

      Therefore, one needs to estimate $\left\|f_{\lambda}'(tA+(1-t)B)(A-B)\right\|$. Firstly, note that if $X$ is a strictly positive definite matrix (and in particular, self-adjoint), then by the functional calculus of self-adjoint operators, one has       
       $$f_{\lambda}'(X)=X^{-2}(\lambda^2X^{-2}-I)(\lambda^2X^{-2}+I)^{-2}$$ 
       (note that the matrix $(\lambda^2X^{-2}+I)$ is positive definite, and in particular, it is invertible).

       Now, let $t\in[0,1]$ and set $X:=tA+(1-t)B$, with $A=Y_1^{1/2}+I$ and $B=Y_2^{1/2}+I$. Then, 
            $$X=t(Y_1^{1/2}-Y_2^{1/2})+Y_2^{1/2}+I.$$

       Note that all eigenvalues of $X$ are greater than 1. 
       Then, $X$ is positive definite and
            \begin{equation}\label{16}
          f_{\lambda}'(X)(A-B)=X^{-1}(\lambda^2X^{-2}-I)(\lambda^2X^{-2}+I)^{-2}X^{-1}(A-B).
      \end{equation}

            The next step consists in estimating, for each $\lambda\in\mathbb{R}$, 
            the norm of $X^{-1}(\lambda^2X^{-2}-I)(\lambda^2X^{-2}+I)^{-2}$. Firstly, it follows from the functional calculus that 
      $$\left\|X^{-1}(\lambda^2X^{-2}-I)(\lambda^2X^{-2}+I)^{-2}\right\|=\max_{\psi_i \in \sigma(X)}\left|\frac{(\lambda/\psi_i)^2-1}{\psi_i((\lambda/\psi_i)^2+1)^2}\right| \ ,$$
      with $\psi_i>1$ for each $i=1,\ldots,l$ and each $t\in[0,1]$. 
      Since for each $y\in\mathbb{R}$, $\dfrac{\left|y^2-1\right|}{(y^2+1)^2}<1$, 
      one concludes that

     \begin{equation}\label{70}
          \left\|X^{-1}(\lambda^2X^{-2}-I)(\lambda^2X^{-2}+I)^{-2}\right\|<1.
      \end{equation}\\[-1.5em] 
      
      \noindent For the remaining terms of $f_{\lambda}'(X)(A-B)$, note that:

      \begin{enumerate}
          \item \[A-B=Y_1^{1/2}-Y_2^{1/2}=Y_2^{1/2}(Y_2^{-1/2}Y_1^{1/2}-I)=Y_2^{1/2}\mathcal{F}=(B-I)\mathcal{F},\] 
            with  $A=Y_1^{1/2}+I$, $B=Y_2^{1/2}+I$, $\mathcal{F}=(Y_2^{-1/2}Y_1^{1/2}-I)$; moreover, $\|\mathcal{F}\|=\Vert Y_2^{-1/2}Y_1^{1/2}-I\Vert=\Vert Y_1^{1/2}Y_2^{-1/2}-I\Vert<2\epsilon$, by 
            (\ref{69});

          \item it follows from item 1. that $X=t(A-B)+B=t(B-I)\mathcal{F}+B=B(I+t\mathcal{F})-t\mathcal{F}$;

          \item it follows from items 1. and 2. that 
            \begin{align}
              X^{-1}(A-B) &= (B(I+t\mathcal{F})-t\mathcal{F})^{-1}(B-I)\mathcal{F} \nonumber \\
              &= (B(I+t\mathcal{F})-t\mathcal{F})^{-1}B\mathcal{F}-(B(I+t\mathcal{F})-t\mathcal{F})^{-1}\mathcal{F} \nonumber \\ &= (I+t\mathcal{F}-tB^{-1}\mathcal{F})^{-1}\mathcal{F}-(B(I+t\mathcal{F})-t\mathcal{F})^{-1}\mathcal{F}, \nonumber
          \end{align}
given that $(B(I+t\mathcal{F})-t\mathcal{F})^{-1}B\mathcal{F}=\big(B(I+t\mathcal{F}-tB^{-1}\mathcal{F})\big)^{-1}B\mathcal{F}=(I+t\mathcal{F}-tB^{-1}\mathcal{F})^{-1}\mathcal{F}$ ($B$ is invertible);

\item  one has, for each $t\in[0,1]$, that 
    \begin{align}
              s_l(I+t\mathcal{F}-tB^{-1}\mathcal{F}) &= s_l(I+t(I-B^{-1})\mathcal{F}) \nonumber \\ &\geqslant s_l(I)-\left\|t(I-B^{-1})\mathcal{F}\right\| \nonumber \\ &\geqslant 1-(\left\|\mathcal{F}\right\|+\left\|B^{-1}\right\|\left\|\mathcal{F}\right\|) \nonumber \\ &\geqslant 1-2\epsilon-2\epsilon=1-4\epsilon,\nonumber 
     \end{align} where we have used the facts that $\|B^{-1}\|=(s_l(B))^{-1}<1$ (since each one of the eigenvalues of $B$ is greater than 1) and that $\Vert\mathcal{F}\Vert\le 2\epsilon$;

\item one has, for each $t\in[0,1]$, that 
            \begin{align}
              s_l\big(B(I+t\mathcal{F})-t\mathcal{F}\big) &\geqslant s_l(B(I+t\mathcal{F})) - \left\|\mathcal{F}\right\| \nonumber \\ &\geqslant s_l(B)s_l(I+t\mathcal{F}) - 2\epsilon \nonumber \\ &\geqslant s_l(I+t\mathcal{F}) - 2\epsilon \nonumber \\ &\geqslant 1-\left\|\mathcal{F}\right\| - 2\epsilon=1-4\epsilon;\nonumber
          \end{align}

        \item it follows from items 4. and  5. that 
          \begin{align}\label{71}
              \left\|X^{-1}(A-B)\right\| &\leqslant \left\|(I+t\mathcal{F}-tB^{-1}\mathcal{F})^{-1}\mathcal{F}\right\|+\left\|(B(I+t\mathcal{F})-t\mathcal{F})^{-1}\mathcal{F}\right\| \nonumber \\ &\leqslant\left\|\mathcal{F}\right\| \left(\frac{1}{s_l(I+t\mathcal{F}-tB^{-1}\mathcal{F})}+\frac{1}{ s_l\big(B(I+t\mathcal{F})-t\mathcal{F}\big)}\right) \nonumber \\ &\leqslant \frac{4\epsilon}{1-4\epsilon}\ .
            \end{align} 
\end{enumerate}

          By combining relations (\ref{16}), (\ref{70}) and (\ref{71}), one concludes that for each $t\in[0,1]$, \[\left\|f_{\lambda}'(tA+(1-t)B)(A-B)\right\|\leqslant\frac{4\epsilon}{1-4\epsilon}.\] 

      Finally, by combining the previous estimate with~\eqref{EAIMPOR}, one gets
      \begin{align*}
          \left\|\omega_{F(z)}^{I}(S)-\omega_{H(z)}^{I}(S)\right\|&\leqslant\frac{1}{\pi}\int_S\left\|f_{\lambda}\big((\operatorname{Im}F(z))^{1/2}+I\big)-f_{\lambda}\big((\operatorname{Im}H(z))^{1/2}+I\big)\right\|\ d\lambda \nonumber \\ &\leqslant\frac{1}{\pi}\int_S\int_0^1\left\|f_{\lambda}'\big(tA+(1-t)B\big)(A-B)\right\|\ dt\ d\lambda \nonumber \\ &\leqslant \frac{1}{\pi}\frac{4\epsilon}{1-4\epsilon}|S|.       \end{align*}

      The result now follows by setting $u(\epsilon,|S|):=\dfrac{1}{\pi}\dfrac{4\epsilon}{1-4\epsilon}|S|$. 

\textit{Proof of item 2.} Let $C>0$ and let $S\in\mathcal{B}(\mathbb{R})$ be such that $S\subset(-C,C)$. It follows again from Hadamard Theorem that
\begin{align*}
         \left\|\omega_{F(z)}^{R}(S)-\omega_{H(z)}^{R}(S)\right\|\leqslant
         \frac{1}{\pi}\int_S\int_0^1\left\|g_{\lambda}'\big(tA+(1-t)B\big)(A-B)\right\|\ dt\ d\lambda, 
       \end{align*} 
where       
       $$A=\left[\frac{1}{\|\operatorname{Im}F(z)+I\|}\operatorname{Re}F(z)\right],\ \ \ \ \  B=\left[\frac{1}{\|\operatorname{Im}H(z)+I\|}\operatorname{Re}H(z)\right]\,, $$
       and  for each $\lambda\in\mathbb{R}$,       
       \[g_{\lambda}:\mathbb{R}\rightarrow (0,1],\qquad g_\lambda(x)=\frac{1}{(x-\lambda)^2+1}.\]
       
       Hence, one needs to estimate $\left\|g_{\lambda}'\big(tA+(1-t)B\big)(A-B)\right\|$. Firstly, note that for each $\lambda,x\in\mathbb{R}$,
       $$g_{\lambda}'(x)=\frac{-2(x-\lambda)}{\big((x-\lambda)^2+1\big)^2}\ .$$

       If $X$ is a self-adjoint matrix, then it follows from the functional calculus that
       \[g_{\lambda}'(X)=-2(X-\lambda I)\big((X-\lambda I)^2+I\big)^{-1}\big((X-\lambda I)^2+I\big)^{-1}\]      
      \noindent (note that since the matrix $((X-\lambda I)^2+I)$ is positive definite, it is invertible). Now, for each $t\in[0,1]$, let $X=tA+(1-t)B$, with $$A=\frac{1}{\left\|Y_1+I\right\|}X_1 \ \ \ \text{and} \ \ \ B=\frac{1}{\left\|Y_2+I\right\|}X_2\ .$$ Then, $$X=t\left(\frac{1}{\left\|Y_1+I\right\|}X_1-\frac{1}{\left\|Y_2+I\right\|}X_2\right)+\frac{1}{\left\|Y_2+I\right\|}X_2,\ \ \ \ t\in[0,1].$$

      Note that for each $t\in[0,1]$, $X$ is self-adjoint, given that $X_1=\operatorname{Re}(F(z))$, $X_2=\operatorname{Re}(H(z))$, $Y_1=\operatorname{Im}(F(z))$ and $Y_2=\operatorname{Im}(H(z))$ are symmetric matrices. One can write 
      \begin{align}\label{17}
          A-B &=\frac{1}{\left\|Y_1+I\right\|}(X_1-X_2)+\left(\frac{1}{\left\|Y_1+I\right\|}-\frac{1}{\left\|Y_2+I\right\|}\right)X_2 \nonumber \\ &= \mathcal{C}+\gamma B, 
      \end{align} where $$\mathcal{C}=\frac{1}{\left\|Y_1+I\right\|}(X_1-X_2) \ \ \ \text{and} \ \ \ \gamma=\left(\frac{\left\|Y_2+I\right\|}{\left\|Y_1+I\right\|}-1\right).$$ 

      Then, one has
      \begin{eqnarray}\label{PRINCI}
          \|g_{\lambda}'\big(tA+(1-t)B\big)(A-B)\|
          &\le&\left\|2(X-\lambda I)\big((X-\lambda I)^2+I\big)^{-1}\right\|\nonumber\\
          &\cdot&\left\|\big((X-\lambda I)^2+I\big)^{-1}(\mathcal{C}+\gamma B)\right\|.
\end{eqnarray}
       
 One also has, for each $\lambda\in\mathbb{R}$,         
        \begin{equation}\label{72}
            \left\|2(X-\lambda I)\big((X-\lambda I)^2+I\big)^{-1}\right\|=2\sup_{\psi_i \in \sigma(X)}\left|\frac{\psi_i-\lambda}{(\psi_i-\lambda)^2+1}\right|\leqslant 2.
        \end{equation} 
           
        It remains to estimate $\left\|\big((X-\lambda I)^2+I\big)^{-1}(\mathcal{C}+\gamma B)\right\|$.

        \begin{enumerate}
          \item Firstly, note that:
      \begin{itemize}
          \item if one sets $\mathcal{G}:=Y_1^{-1/2}Y_2^{1/2}-\operatorname{Re}V$,  then  $Y_2^{1/2}=Y_1^{1/2}(\mathcal{G}+\operatorname{Re}V)$; it follows from (\ref{68}) and $\Vert\operatorname{Re}(V)\Vert\le 1$ that $\|Y_2\|^{1/2}\leqslant\|Y_1\|^{1/2}(1+\epsilon)$. 

          \item if one sets $\Gamma=Y_1^{1/2}Y_2^{-1/2}-\operatorname{Re}V$, then $Y_1^{1/2}=(\Gamma+\operatorname{Re}V)Y_2^{1/2}$; it follows from (\ref{68}) that
            $\|Y_1\|^{1/2}\leqslant(1+\epsilon)\|Y_2\|^{1/2}$.
 \end{itemize}

      By combining these two statements, one gets
          \begin{equation}\label{18}
              \frac{\left\|Y_2\right\|}{(1+\epsilon)^2}\leqslant\left\|Y_1\right\|\leqslant\left\|Y_2\right\|(1+\epsilon)^2\ ,
          \end{equation}
         and so, 
          $$\frac{\left\|Y_2+I\right\|}{(1+\epsilon)^2}\leqslant\left\|Y_1+I\right\|\leqslant\left\|Y_2+I\right\|(1+\epsilon)^2\ ,$$
         given that $Y_1$ and $Y_2$ are self-adjoint; this, by its turn, results in 
         
         \begin{equation}\label{74}
            (1+\epsilon)^{-2}-1\leqslant\gamma=\frac{\left\|Y_2+I\right\|}{\left\|Y_1+I\right\|}-1\leqslant(1+\epsilon)^2-1\ .
          \end{equation}
          
       \item  Now, one can write
         \begin{align}
              X_2-X_1 &=Y_1^{1/2}Y_1^{-1/2}X_2Y_2^{-1/2}Y_2^{1/2}-Y_1^{1/2}Y_1^{-1/2}X_1Y_2^{-1/2}Y_2^{1/2} \nonumber \\ &=Y_1^{1/2}\left(Y_1^{-1/2}(X_2-X_1)Y_2^{-1/2}\right)Y_2^{1/2}, \nonumber
          \end{align}
          from which follows that
          \begin{align}
              \left\|X_2-X_1\right\| &\leqslant \left\|Y_1^{1/2}\right\|\left\|Y_1^{-1/2}(X_2-X_1)Y_2^{-1/2}\right\|\left\|Y_2^{1/2}\right\| \nonumber \\ &\leqslant 2\epsilon\left\|Y_1\right\|^{1/2}\left\|Y_2\right\|^{1/2}\ \ \ (\text{by}\ \eqref{69}) \nonumber \\ &\leqslant 2\epsilon(1+\epsilon)\left\|Y_1\right\|\ \ \ (\text{by}\ \eqref{18}) \nonumber \\ &\leqslant 2\epsilon(1+\epsilon)\left\|Y_1+I\right\|, \nonumber
          \end{align}
   and so, 
          \begin{equation}\label{ESTC}\left\|\mathcal{C}\right\|=\frac{1}{\left\|Y_1+I\right\|}\left\|X_1-X_2\right\|\leqslant 2\epsilon(1+\epsilon)\ .\end{equation} 

          \item It follows from (\ref{17}) that for each $t\in[0,1]$,
          \begin{align}
              X-\lambda I &= tA+(1-t)B-\lambda I \nonumber \\ &= t(A-B)+B-\lambda I \nonumber \\ &= t\mathcal{C}+t\gamma B+B-\lambda I \nonumber \\ &= (1+t\gamma)B+t\mathcal{C}-\lambda I \nonumber \\ &= (1+t\gamma)[B+\beta t\mathcal{C}-\lambda\beta I]\ , \nonumber 
          \end{align} 
          with $\beta=(1+t\gamma)^{-1}>0$. Then,
          \begin{equation}\label{Faltou}
            (X-\lambda I)^2=[tA+(1-t)B-\lambda I]^2=\beta^{-2}[B+\beta t\mathcal{C}-\lambda\beta I]^2.\end{equation}

        \item It follows from~\eqref{Faltou} that
          \begin{eqnarray}\label{77}
          &&\gamma \left[\left(X-\lambda I\right)^2+I\right]^{-1}B = \gamma B \left[\beta^{-2}(B+\beta t\mathcal{C}-\lambda\beta I)^2+I\right]^{-1} \nonumber \\ &=& \gamma\beta^2(B+\beta t\mathcal{C}-\lambda\beta I-\beta t\mathcal{C}+\lambda\beta I)\left[(B+\beta t\mathcal{C}-\lambda\beta I)^2+\beta^2\right]^{-1}.
          \end{eqnarray}

          Since $B+\beta t\mathcal{C}-\lambda\beta I$ is self-adjoint, it follows from the functional calculus that 
                    \begin{eqnarray*}
            \left\|(B+\beta t\mathcal{C}-\lambda\beta I)\left[(B+\beta t\mathcal{C}-\lambda\beta I)^2+\beta^2\right]^{-1}\right\|&\le& \sup_{x\in\mathbb{R}}\left|\frac{x}{x^2+\beta^2}\right|\\
            &=&\frac{1}{\beta}\sup_{x\in\mathbb{R}}\left|\frac{(x/\beta)}{(x/\beta)^2+1}\right|=\frac{1}{2\beta}\, .
          \end{eqnarray*}

          One also has, for each $t\in[0,1]$,
          \begin{eqnarray*}\left\|(-\beta t\mathcal{C}+\lambda\beta I)\left[(B+\beta t\mathcal{C}-\lambda\beta I)^2+\beta^2\right]^{-1}\right\|&\leqslant& \beta\left(\|\mathcal{C}\|+|\lambda|\right) \sup_{x\in\mathbb{R}}\left|\frac{1}{x^2+\beta^2}\right|\\
            &=&\frac{1}{\beta}\left(\|\mathcal{C}\|+|\lambda|\right).
            \end{eqnarray*}

          By combining the last two inequalities with relation (\ref{77}), one gets
            \begin{equation}\label{gammaBX}\left\|\left(\big(X-\lambda I\big)^2+I\right)^{-1}(\gamma B)\right\|\leqslant\gamma\beta\left(\|\mathcal{C}\|+|\lambda|\right)+\frac{\gamma\beta}{2}\ .\end{equation}

    \item  Furthermore,
          \begin{equation}\label{CX}\left\|\left(\big(X-\lambda I\big)^2+I\right)^{-1}\mathcal{C}\right\|\leqslant\left\|\mathcal{C}\right\|\sup_{x\in\mathbb{R}}\left|\frac{1}{x^2+1}\right|\leqslant\left\|\mathcal{C}\right\|\, .\end{equation}

        \item It follows from relations~\eqref{gammaBX} and~\eqref{CX} that 
          \begin{eqnarray*}\left\|\left(\big(X-\lambda I\big)^2+I\right)^{-1}(\mathcal{C}+\gamma B)\right\|\leqslant\|\mathcal{C}\|+\gamma\beta\left(\|\mathcal{C}\|+|\lambda|\right)+\frac{\gamma\beta}{2}\, .\end{eqnarray*}
        
      \item   It follows from the definition of $\beta$ and from (\ref{74}) that 
        \[\beta=(1+t\gamma)^{-1}\le(1-1+(1+\epsilon)^{-2})^{-1}=(1+\epsilon)^2\,,\] and that $\gamma\leqslant(1+\epsilon)^2-1=(2+\epsilon)\epsilon$. By combining these estimates with the previous relation and with~\eqref{ESTC}, one finally gets, for each $0<\epsilon<1/2$, 
          \begin{eqnarray}\label{73}
          \!\!\!\!\!\!\!\!\!\!\!\!\!\!\!\!    \left\|\left[\big(X-\lambda I\big)^2+I\right]^{-1}(\mathcal{C}+\gamma B)\right\| &\leqslant&(1+\gamma\beta)\|\mathcal{C}\|+\gamma\beta\left(\frac{1}{2}+|\lambda|\right) \nonumber \\ &\leqslant& 2\epsilon(1+(2+\epsilon)(1+\epsilon)^2\epsilon)(1+\epsilon)\nonumber\\&+&(2+\epsilon)(1+\epsilon)^2\epsilon\left(\frac{1}{2}+|\lambda|\right) \nonumber \\ &\leqslant &\left(12+6\left(\frac{1}{2}+|\lambda|\right)\right)\epsilon < 12(1+|\lambda|)\epsilon\,.
          \end{eqnarray}

\end{enumerate}

          At last, it follows from relations~\eqref{PRINCI},~(\ref{72}) and~(\ref{73}) that for each $\lambda\in S\subset (-C,C)$,
          $$\left\|g_{\lambda}'(tA+(1-t)B)(A-B)\right\|\leqslant 12\epsilon(1+C)\ ,$$
          and so 
        \begin{align}
          \left\|\omega_{F(z)}^{R}(S)-\omega_{H(z)}^{R}(S)\right\|
          &\leqslant\frac{1}{\pi}\int_S\int_0^1\left\|g_{\lambda}'\big(tA+(1-t)B\big)(A-B)\right\|\ dt\ d\lambda \nonumber \\ &\leqslant \frac{12}{\pi}\epsilon(1+C)|S|\le \frac{24}{\pi}\epsilon(1+C)C\,.\nonumber
        \end{align}

        The result now follows by setting $v(\epsilon, C):=\dfrac{24}{\pi}\epsilon(C+C^2)$.
        
    \end{proof1}

\begin{proof2}
  \normalfont
      One uses the same ideas presented in the proof of Lemma~\ref{21.}. We only present the main ideas.

      \textit{Proof of item 1.} Firstly,  note that if $F(z)=X_1+iY_1$, $H(z)=X_2+iY_2$ and $\|F(z)-H(z)\|<\epsilon$, then $\|X_1-X_2\|<\epsilon$ and $\|Y_1-Y_2\|<\epsilon$.  Let $S\in\mathcal{B}(\mathbb{R})$ be such that $|S|<\infty$. It follows for Hadamard Theorem that 
       \begin{align}\label{MolP}
         \left\|\omega_{F(z)}^{I}(S)-\omega_{H(z)}^{I}(S)\right\|&\leqslant\frac{1}{\pi}\int_S\left\|f_{\lambda}\big((\operatorname{Im}F(z))^{1/2}+I\big)-f_{\lambda}\big((\operatorname{Im}H(z))^{1/2}+I\big)\right\|\ d\lambda \nonumber \\ &\leqslant\frac{1}{\pi}\int_S\int_0^1\left\|f_{\lambda}'\big(tA+(1-t)B\big)(A-B)\right\|\ dt\ d\lambda\, ,
       \end{align} 

  with     
        $$A=[(\operatorname{Im}F(z))^{1/2}+I]\, , \qquad B=[(\operatorname{Im}H(z))^{1/2}+I]\ ,$$
       and for each $\lambda\in\mathbb{R}$,        
       $$f_{\lambda}:[1,\infty)\rightarrow [0,1),\qquad f_{\lambda}(x)=\frac{x}{x^2+\lambda^2}.$$
       
       So, one needs to estimate $\left\|f_{\lambda}'\big(tA+(1-t)B\big)(A-B)\right\|$. Note that for each $\lambda\in\mathbb{R}$,
       $$f_{\lambda}'(x)=\frac{\lambda^2-x^2}{(\lambda^2+x^2)^2},$$
       and since $x>1$, it follows that $|f'_\lambda(x)|\leqslant1$. 

       For each $t\in[0,1]$, set $X=t(A-B)+B=t(Y_1^{1/2}-Y_2^{1/2})+Y_2^{1/2}+I$. Since $A,B>1$, it follows from the functional calculus that the eigenvalues of $X$ are at least $1$; hence, 
        $$\left\|f'_\lambda(X)\right\|=\max_{\psi_i \in \sigma(X)}\left|\frac{\lambda^2-\psi_i^2}{(\lambda^2+\psi_i^2)^2}\right|\leqslant1\ .$$
        
        This results in 
        $$\left\|f_{\lambda}'\big(tA+(1-t)B\big)(A-B)\right\|\leqslant\left\|f_{\lambda}'\big(tA+(1-t)B\big)\right\|\left\|A-B\right\|\leqslant\left\|A-B\right\|\ .$$
        
        Now, it follows from Powers-Størmer inequality \footnote{
        Powers-Størmer inequality states that for any two positive matrices $Y_1, Y_2$, one has $\|Y_1^{1/2} - Y_2^{1/2}\|^2 \leqslant \|Y_1 - Y_2\|$.}
        that 
        $$\left\|A-B\right\|^2=\left\|Y_1^{1/2}-Y_2^{1/2}\right\|^2\leqslant\left\|Y_1-Y_2\right\|\ ,$$
        and so,
        $$\left\|A-B\right\| = \left\|Y_1^{1/2}-Y_2^{1/2}\right\|<\sqrt{\epsilon}\ .$$
        
        By combining the previous results, one gets for each $t\in[0,1]$,
        \begin{equation}\label{Moleza}
          \left\|f_{\lambda}'\big(tA+(1-t)B\big)(A-B)\right\| < \sqrt{\epsilon}\ .\end{equation}

        It follows from~\eqref{MolP} and~\eqref{Moleza} that 
        \begin{align}
          \left\|\omega_{F(z)}^{I}(S)-\omega_{H(z)}^{I}(S)\right\|&\leqslant\frac{1}{\pi}\int_S\left\|f_{\lambda}\big((\operatorname{Im}F(z))^{1/2}+I\big)-f_{\lambda}\big((\operatorname{Im}H(z))^{1/2}+I\big)\right\|\ d\lambda \nonumber \\ &\leqslant\frac{1}{\pi}\int_S\int_0^1\left\|f_{\lambda}'(X)(A-B)\right\|\ dt\ d\lambda \nonumber \\ &\leqslant \frac{1}{\pi}\sqrt{\epsilon}\ |S|\ . \nonumber
      \end{align}

The result now follows by setting $u(\epsilon,|S|):=\dfrac{1}{\pi}\sqrt{\epsilon}\ |S|$.

\textit{Proof of item 2.} Let $C>0$ and let $S\in\mathcal{B}((-C,C))$. If follows again from Hadamard Theorem that 
       \begin{align}
         \ \left\|\omega_{F(z)}^{R}(S)-\omega_{H(z)}^{R}(S)\right\|&
         \leqslant\frac{1}{\pi}\int_S\int_0^1\left\|g_{\lambda}'\big(tA+(1-t)B\big)(A-B)\right\|\ dt\ d\lambda\ , \nonumber
       \end{align} 
where      
       $$A=\left[\frac{1}{\|\operatorname{Im}F(z)+I\|}\operatorname{Re}F(z)\right],\ \ \ \ \  B=\left[\frac{1}{\|\operatorname{Im}H(z)+I\|}\operatorname{Re}H(z)\right]\ ,$$
       and for each $\lambda\in\mathbb{R}$,
        $$g_{\lambda}:\mathbb{R}\rightarrow [0,1),\qquad g_{\lambda}(x)=\frac{1}{(x-\lambda)^2+1}\ .$$
       
       Again, one needs to estimate $\left\|g_{\lambda}'\big(tA+(1-t)B\big)(A-B)\right\|$. In order to do this, one performs similar calculations to those presented in the proof of Lemma~\ref{21.}-2.

       Set
            \[A:=\frac{1}{\left\|Y_1+I\right\|}X_1\ \  \ \ \ \text{and} \ \ \ \ \ B:=\frac{1}{\left\|Y_2+I\right\|}X_2\ ;\]
then, one can write (recall that for each $i=1,2$, $\|Y_i+I\|=\|Y_i\|+1\geqslant 1$)
      \begin{align}\label{22}
          A-B&=\frac{1}{\left\|Y_1+I\right\|}(X_1-X_2)+\left(\frac{1}{\left\|Y_1+I\right\|}-\frac{1}{\left\|Y_2+I\right\|}\right)X_2 \nonumber \\ &=\frac{1}{\left\|Y_1+I\right\|}(X_1-X_2)+\frac{1}{\left\|Y_1+I\right\|}\left(\left\|Y_2\right\|-\left\|Y_1\right\|\right)\frac{1}{\left\|Y_2+I\right\|}X_2 \nonumber \\ &= \mathcal{C}+\gamma B,
      \end{align} 
      \noindent where 
          $$\mathcal{C}=\frac{1}{\left\|Y_1+I\right\|}(X_1-X_2) \ \ \ \text{and} \ \ \ \gamma=\frac{1}{\left\|Y_1+I\right\|}\left(\left\|Y_2\right\|-\left\|Y_1\right\|\right).$$ 

     Set, for each $t\in[0,1]$, $X:=tA+(1-t)B$ (which is a self-adjoint matrix). Now, given that for each $\lambda\in\mathbb{R}$, 
     $$g_{\lambda}'(x)=\frac{-2(x-\lambda)}{\big((x-\lambda)^2+1\big)^2}\ ,$$
     it follows from the functional calculus that
     $$g_{\lambda}'(X)=-2(X-\lambda I)\big((X-\lambda I)^2+I\big)^{-1}\big((X-\lambda I)^2+I\big)^{-1}$$      
     (note that $((X-\lambda I)^2+I)$ is positive definite, then it is invertible).

     Thus, one has
      \begin{eqnarray}\label{PRINCI0}
          \|g_{\lambda}'\big(tA+(1-t)B\big)(A-B)\|
          &\le&\left\|2(X-\lambda I)\big((X-\lambda I)^2+I\big)^{-1}\right\|\nonumber\\
          &\cdot&\left\|\big((X-\lambda I)^2+I\big)^{-1}(\mathcal{C}+\gamma B)\right\|.
\end{eqnarray}
       
 One also has, for each $\lambda\in\mathbb{R}$,         
        \begin{equation}\label{75}
            \left\|2(X-\lambda I)\big((X-\lambda I)^2+I\big)^{-1}\right\|=2\sup_{\psi_i \in \sigma(X)}\left|\frac{\psi_i-\lambda}{(\psi_i-\lambda)^2+1}\right|\leqslant 2.
        \end{equation} 
           
        It remains to estimate $\left\|\big((X-\lambda I)^2+I\big)^{-1}(\mathcal{C}+\gamma B)\right\|$.

        \begin{enumerate}
          \item Firstly, note that 
            \begin{equation}\label{23O}\|\mathcal{C}\|\leqslant \frac{\|X_1-X_2\|}{\Vert Y_1+ I\Vert}<\|X_1-X_2\|<\epsilon\ ,\end{equation}
            and that
      \begin{equation}\label{23}
          |\gamma|\leqslant \frac{\left|\|Y_2\|-\|Y_1\|\right|}{\Vert Y_1+ I\Vert}\leqslant \|Y_2-Y_1\|<\epsilon\ .
      \end{equation}

        \item By following items 3. to 6. in the proof of Lemma~\ref{21.}-2, one gets 
      $$\left\|\left[\big(tA+(1-t)B-\lambda I\big)^2+I\right]^{-1}.\ (A-B)\right\|\leqslant\|\mathcal{C}\|+\gamma\beta\left(\|\mathcal{C}\|+|\lambda|\right)+\frac{\gamma\beta}{2}\ ,$$
          where $\beta=(1+t\gamma)^{-1}$.
          
\item It follows from the definition of $\beta$ and from~\eqref{23} that 
  \[\beta=(1+t\gamma)^{-1}\le(1-\epsilon)^{-1}.\] 
  By combining this estimate with the previous relation and with~\eqref{23O}, one gets for each $0<\epsilon<1/2$, 
          \begin{eqnarray}\label{76}
            \!\!\!\!\!\!\!\!\!\!\!\!\!\!\!\!    \left\|\left[\big(X-\lambda I\big)^2+I\right]^{-1}(\mathcal{C}+\gamma B)\right\| &\leqslant&(1+\gamma\beta)\|\mathcal{C}\|+\gamma\beta\left(\frac{1}{2}+|\lambda|\right) \nonumber \\ &\leqslant& \epsilon(1+2\epsilon)+2\epsilon\left(\frac{1}{2}+|\lambda|\right) \nonumber \\ &\leqslant &
            4(1+|\lambda|)\epsilon\,.
          \end{eqnarray}

\end{enumerate}

                At last, it follows from relations~\eqref{PRINCI0},~(\ref{75}) and~(\ref{76}) that for each $\lambda\in S\subset (-C,C)$,
          $$\left\|g_{\lambda}'(tA+(1-t)B)(A-B)\right\|\leqslant 4\epsilon(1+C)\ ,$$
          and so 
        \begin{align}
          \left\|\omega_{F(z)}^{R}(S)-\omega_{H(z)}^{R}(S)\right\|
          &\leqslant\frac{1}{\pi}\int_S\int_0^1\left\|g_{\lambda}'\big(tA+(1-t)B\big)(A-B)\right\|\ dt\ d\lambda \nonumber \\ &\leqslant \frac{4}{\pi}\epsilon(1+C)|S|\le \frac{8}{\pi}\epsilon(1+C)C\,.\nonumber
        \end{align}

        The result now follows by setting $v(\epsilon, C):=\dfrac{8}{\pi}\epsilon(C+C^2)$.
    \end{proof2}

	\clearpage
	\phantomsection
	\addcontentsline{toc}{section}{References}

\renewcommand{\refname}{References}

\end{document}